\journal{arXiv}
\newtheorem{theorem}{Theorem}
\newtheorem{lemma}[theorem]{Lemma}
\newtheorem{corollary}[theorem]{Corollary}
\newtheorem{claim}[theorem]{Claim}
\newtheorem{proposition}[theorem]{Proposition}
\newtheorem{remark}[theorem]{Remark}
\newproof{proof}{Proof}
\newcommand{\textem}[1] {{\em #1}}
\newcommand\msize[1]{\left|#1\right|}
\newcommand\mset[1]{\{#1\}}
\newcommand\mseq[1]{\langle#1\rangle}
\newcommand\mvec[1]{\left(#1\right)}
\newcommand\order[1]{\mathcal{O}(#1)}
\newenvironment{mycase}[2]{%
\vspace{3mm} \noindent \textbf{Case~#1}: #2\par}{}
\newenvironment{mycaselast}[2]{%
\vspace{3mm} \noindent \textbf{Case~#1}: #2\par}{\vspace{3mm}}
\newenvironment{listing}[1]{%
        \begin{list}{*}{%
                 \settowidth{\labelwidth}{#1}%
                 \setlength{\leftmargin}{\labelwidth}%
                  \advance \leftmargin by 12pt
                   \setlength{\itemsep}{0pt}%
                   \setlength{\parsep}{0pt}%
                   \setlength{\topsep}{0pt}%
                   \setlength{\parskip}{0pt}%
}%
}{%
\end{list}}
\newcommand{\llset}{\calL}
\newcommand{\llsetopt}{\calL^\mathrm{opt}}
\newcommand{\llsetone}{\calL^1}
\newcommand{\inv}{\operatorname{inv}}
\newcommand{\DV}{\mathit{DV}}
\newcommand{\pline}{\mathrm{pl}}
\newcommand{\plset}{\mathrm{PL}}
\newcommand{\cross}{\mathrm{cr}}
\newcommand{\crset}{\mathrm{CR}} 
\newcommand{\parent}{\mathrm{par}}
\newcommand{\lefttangled}{\mathit{LT}}
\newcommand{\Z}{\mathbb{Z}}
\newcommand{\frakS}{\mathfrak{S}}
\newcommand{\calL}{\mathcal{L}}
\newcommand{\optdv}{\mathrm{opt}_{\mathit{DV}}}
\newcommand{\optclldv}{\mathrm{opt}_{\mathit{CLL}}}
\newcommand{\symdiff}[2]{#1\bigtriangleup#2}
\newcommand{\symdiffDV}[2]{#1\bigtriangleup_{\mathit{DV}}#2}
\newcommand{\setodv}{\mathcal{D}}
\newcommand{\rootdv}{\bm{x_0}}
\newcommand{\parentdv}{\mathrm{par}}
\newcommand{\minidx}{\alpha}
\newcommand{\maxidx}{\beta}
\newcommand{\mmcont}{T} % max-min contraction
\newenvironment{myproblem}[3]{%
\vspace{3mm} 
\noindent \textbf{Problem:}#1\par%
\noindent \textbf{Instance:}#2\par%
\noindent \textbf{Question:}#3}{\vspace{3mm}}
\newcommand{\reconfCLLDV}{\textsc{ReconfCLL-DV}}
\newcommand{\reconfDV}{\textsc{ReconfDV}}
\newcommand{\enumCLLDV}{\textsc{EnumCLL-DV}}
\newcommand{\enumCLL}{\textsc{EnumCLL}}
\newcommand{\enumDV}{\textsc{EnumDV}}
\definecolor{darkred}{rgb}{0.8,0,0}
\begin{document}

\begin{frontmatter}

\title{Reconfiguration and Enumeration of\\ 
Optimal Cyclic Ladder Lotteries\tnoteref{tnote}}
\tnotetext[tnote]{A preliminary version of this article appeared in IWOCA2023~\cite{NozakiWY23}.}

\author[yokohama,hiroshima]{Yuta Nozaki}
\ead{nozaki-yuta-vn@ynu.ac.jp}
\author[hosei]{Kunihiro Wasa}
\ead{wasa@hosei.ac.jp}
\author[iwate]{Katsuhisa Yamanaka}
\ead{yamanaka@iwate-u.ac.jp}

\affiliation[yokohama]{organization={Yokohama National University},
            country={Japan}}
\affiliation[hiroshima]{organization={WPI-SKCM$^2$, Hiroshima University}, 
country={Japan}}
\affiliation[hosei]{organization={Hosei University}, country={Japan}}
\affiliation[iwate]{organization={Iwate University}, country={Japan}}

\begin{abstract}
A \emph{ladder lottery}, known as ``Amidakuji'' in Japan, is a common way to decide an assignment at random.
In this paper, we investigate reconfiguration and enumeration problems of cyclic ladder lotteries.
First, when a permutation $\pi$ and an optimal displacement vector $\bm{x}$ are given, we investigate the reconfiguration and enumeration problems of the ``optimal'' cyclic ladder lotteries of $\pi$ and $\bm{x}$.
Next, for a give permutation $\pi$ we consider reconfiguration and enumeration problems of the optimal displacement vectors of $\pi$.
\end{abstract}

\begin{keyword}
combinatorial reconfiguration \sep enumeration \sep cyclic ladder lotteries \sep displacement vectors
\end{keyword}

\end{frontmatter}

%% main text
\section{Introduction}
\label{sec:intro}

A \emph{ladder lottery}, known 
as ``Amidakuji'' in Japan,
is a common way to decide an assignment at random.
Formally, we define ladder lotteries as follows.
A \emph{network} is a sequence $\mseq{\ell_1,\ell_2, \ldots ,\ell_n}$ of $n$ vertical lines~(\emph{lines} for short) and horizontal lines~(\emph{bars} for short) each of which connects two consecutive vertical lines.
We say that $\ell_i$ is located on the left of $\ell_j$ if $i<j$ holds.
The $i$-th line from the left is called the \emph{line $i$}.
We denote by $[n]$ the set $\mset{1,2,\ldots, n}$.
Let $\pi = (\pi_1,\pi_2,\ldots ,\pi_n)$ be a permutation of $[n]$.
A \emph{ladder lottery} of $\pi$ is a network with $n$ lines and zero or more bars such that
\begin{listing}{aaa}
\item[(1)] the top endpoints of lines correspond to the identity permutation,
\item[(2)] each bar exchanges two elements in $[n]$, and
\item[(3)] the bottom endpoints of lines correspond to $\pi$.
\end{listing}
\noindent
See \figurename~\ref{fig:cyc_lad_lot}(a) for an example.
\begin{figure}[t]
  \vspace{-3mm}
  \centerline{\includegraphics[width=1.0\linewidth]{./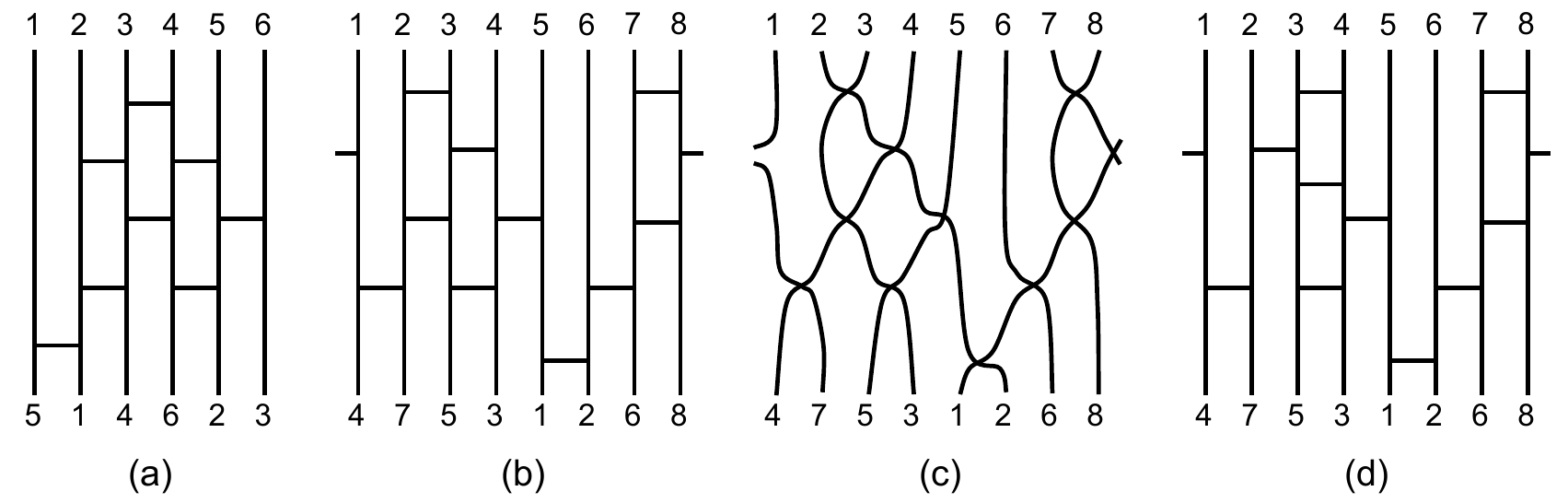}}
  \vspace{-3mm}
  \caption{(a) An optimal ladder lottery of the permutation $(5,1,4,6,2,3)$. (b) A cyclic ladder lottery of the permutation $(4,7,5,3,1,2,6,8)$ and (c) its representation as a pseudoline arrangement. (d) A cyclic ladder lottery obtained from (b) by applying a braid relation to the triple $(2,3,4)$.}
\label{fig:cyc_lad_lot}
\end{figure}
In each bar of a ladder lottery, two elements are swapped.
We can regard a bar as an adjacent transposition of two elements in the current permutation,
and the permutation always results in the given permutation $\pi$.
A ladder lottery of a permutation 
$\pi$ is \textem{optimal} if it consists of the minimum number of bars
among ladder lotteries of $\pi$.
Let $L$ be an optimal ladder lottery of $\pi$ 
and let $m$ be the number of bars in $L$. 
Then, we can observe that $m$ is equal to the number of ``inversions'' 
of $\pi$, which are pairs
$(\pi_i,\pi_j)$ in $\pi$ with $\pi_i > \pi_j$ and $i<j$.
The ladder lottery in \figurename~\ref{fig:cyc_lad_lot}(a) has 8 bars
and the permutation (5,1,4,6,2,3) has 8 inversions:
(5,1), (5,4), (5,2), (5,3), (4,2), (4,3), (6,2), and (6,3),
so the ladder lottery is optimal. 

The ladder lotteries are related to some objects
in theoretical computer science.
First, the ladder lotteries are strongly related
to primitive sorting networks,
which are investigated by Knuth~\cite{Knuth92}.
A primitive sorting network uses a ``comparator'' to exchange two elements instead of a bar in a ladder lottery.
A comparator exchanges two elements $\pi_i$ and $\pi_{j}$ if $\pi_i > \pi_j$ holds,
while a bar in a ladder lottery always exchanges them.
Next, the optimal ladder lotteries of the reverse identity permutation of $[n]$
one-to-one correspond to arrangements\footnote{An arrangement is \emph{simple} if no three pseudolines have a common intersection point. In this paper, we assume that all the pseudoline arrangement are simple.}
of $n$ pseudolines\footnote{A \emph{pseudoline} in the Euclidean plane is a $y$-monotone curve extending from positive infinity to negative infinity.
} 
such that any two pseudolines intersect exactly once~(see \cite{YamanakaNMUN10}).
Each bar in a ladder lottery corresponds to an intersection of two pseudolines.
Note that, in an optimal ladder lottery of the reverse identity permutation, each pair of two elements in $[n]$ is swapped exactly once on a bar. 
For such pseudoline arrangements,
there exist several results on the bounds of the number of them.
Let $B_n$ be the number of the arrangements of $n$ pseudolines such that any two pseudolines intersect exactly once and let $b_n = \log_2{B_n}$.
The problem of estimating $B_n$ is first considered by Knuth~\cite{Knuth92}.
He showed 
the following upper and lower bounds: $b_n \leq 0.7924(n^2+n)$ and $b_n \geq n^2/6 - \order{n}$.
The upper bound was improved to $b_n \leq 0.6974n^2$ by Felsner~\cite{F97} and $b_n \leq 0.6571n^2$ by Felsner and Valtr~\cite{FelsnerV11}.
The lower bound was improved to $b_n \geq 0.1887n^2$ by Felsner and Valtr~\cite{FelsnerV11}
and $b_n \geq 0.2053n^2$ by 
Dumitrecu and Mandal~\cite{DumitrescuM20}.
%See \cite{DumitrescuM19} for further details.
For optimal ladder lotteries of reverse identity permutations,
several exact-counting results are known for small $n$~\cite{KawaharaSYM11,Samuel11,YamanakaNMUN10}.
See On-Line Encyclopedia of Integer Sequences for further details~\cite{Sloane22}.

From a mathematical viewpoint, ladder lotteries with $n$ lines correspond to (positive) words\footnote{Here, a word is a sequencen of transpositions.} of the $n$-th symmetric group $\frakS_n$ with the following presentation
\[
\langle s_1,s_2,\dots,s_{n-1} \mid s_i^2=1,\ s_is_j=s_js_i\ (|i-j|\geq 2),\  s_is_{i+1}s_i=s_{i+1}s_is_{i+1} \rangle.
\]
The third relation is called a braid relation.
In particular, the word length is equal to the number of bars in a ladder lottery.
See \cite{MKS04}.
%%%

In this paper, we 
consider a variation of ladder lotteries, ``cyclic'' ladder lotteries.
A \emph{cyclic ladder lottery} is a ladder lottery that is allowed to have bars between the first and last lines.
\figurename~\ref{fig:cyc_lad_lot}(b) shows an example of a cyclic ladder lottery.
Similar to ladder lotteries,
cyclic ladder lotteries with $n$ lines correspond to the mathematical objects, which are
(positive) words of the $n$-th affine symmetric group $\widetilde\frakS_n$ with the following presentation
\[
\langle s_1,s_2,\dots,s_{n-1},s_n \mid s_i^2=1,\ s_is_j=s_js_i\ (|i-j|\geq 2),\ s_is_{i+1}s_i=s_{i+1}s_is_{i+1} \rangle,
\]
where the indices in the second and third relations are considered as modulo $n$.
Lusztig~\cite{Lus83} mentioned this group from a viewpoint of the theory of Coxeter groups.

Now, as is the case with ladder lotteries, we introduce ``optimality'' to cyclic ladder lotteries. A cyclic ladder lottery of a permutation $\pi$ is \emph{optimal} if it has the minimum number of bars.
It is known that the 
minimum number of bars in a cyclic ladder lottery of a permutation is equal to the 
cyclic analogue of inversion number
and can be computed in $\order{n^2}$ time~\cite{Jerrum85}.

For the optimal ladder lotteries of a permutation $\pi$, reconfiguration and enumeration problems have been solved~\cite{YamanakaHW21,YamanakaNMUN10}.
The key observation 
is that reconfiguration graphs under braid relations are always connected, where a reconfiguration graph is a graph such that each vertex corresponds to an optimal ladder lottery of $\pi$ and each edge corresponds to a braid relation between two optimal ladder lotteries.
Hence, for the reconfiguration problems, the answer to a reachability problem is always yes.
Moreover, Yamanaka~et~al.~\cite{YamanakaHW21} characterized the length of a shortest reconfiguration sequence and proposed an algorithm that finds it.
For the enumeration problem, Yamanaka~et~al.~\cite{YamanakaNMUN10} designed an algorithm that enumerates them by traversing a spanning tree defined on a reconfiguration graph.
Now, does the same observation of reconfiguration graphs hold for optimal cyclic ladder lotteries?
Can we solve reconfiguration and enumeration problems for optimal cyclic ladder lotteries?

For optimal cyclic ladder lotteries,
reconfiguration graphs under braid relations may be disconnected.
For example, the two optimal cyclic ladder lotteries of the permutation $(4,2,6,1,5,3)$ in \figurename~\ref{fig:amida_diff_vecs} have no reconfiguration sequence under braid relations.
Actually, it can be observed that the set of optimal cyclic ladder lotteries of a permutation $\pi$ is partitioned into the sets of optimal cyclic ladder lotteries with the same
``optimal displacement vectors''~\cite{Jerrum85}, which represent the movement direction of the each element in $[n]$ in optimal cyclic ladder lotteries.
Note that applying a braid relation does not change a displacement vector.
Therefore, to enumerate all the optimal cyclic ladder lotteries of a permutation $\pi$,
we have to solve two enumeration problems: (1) enumerate all the optimal cyclic ladder lotteries of $\pi$ with the same optimal displacement vector and (2) enumerate all the optimal displacement vectors of $\pi$.
We first consider reconfiguration and enumeration problems for cyclic optimal ladder lotteries of a given permutation $\pi$ and optimal displacement vector $\bm{x}$.
For the reconfiguration problem, we show that any two optimal cyclic ladder lotteries of $\pi$ and $\bm{x}$ are always reachable under braid relations and investigate a shortest reconfiguration sequence of them.
Then, for the enumeration problem, we design an algorithm that enumerates all the cyclic optimal ladder lotteries of $\pi$ and  $\bm{x}$ in 
polynomial delay.
Next, we consider 
the two problems for the optimal displacement vectors of a permutation $\pi$ under ``max-min contractions'', which is an operation to optimal displacement vectors.
For the reconfiguration problem, we characterize the length of a shortest reconfiguration sequence between two optimal displacement vectors of $\pi$ and show that one can compute a shortest reconfiguration sequence.
For the enumeration problem, we design
a constant-delay algorithm that enumerates all the optimal displacement vectors of $\pi$.
\begin{figure}[tb]
  \centerline{\includegraphics[width=0.50\linewidth]{./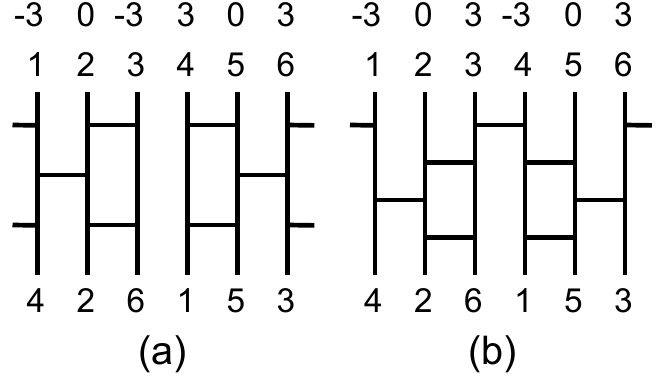}}
  \vspace{-3mm}
  \caption{(a) An optimal cyclic ladder lottery of the permutation $(4,2,6,1,5,3)$ and its optimal displacement vector $(-3,0,-3,3,0,3)$ and (b) an optimal cyclic ladder lottery of the same permutation and its optimal displacement vector $(-3,0,3,-3,0,3)$.}
  \vspace{-3mm}
\label{fig:amida_diff_vecs}
\end{figure}

\section{Preliminary}
\label{sec:pre}

A \emph{network} is a sequence $\mseq{\ell_1,\ell_2, \ldots ,\ell_n}$ of $n$ vertical lines~(\emph{lines} for short) and horizontal lines~(\emph{bars} for short) each of which connects 
the two lines in a pair in $\mset{(\ell_1,\ell_2), (\ell_2,\ell_3), \ldots , (\ell_{n-1},\ell_n),(\ell_n,\ell_1)}$.
We say that $\ell_i$ is located on the left of $\ell_j$ if $i<j$ holds.
The $i$-th line from the left is called the \emph{line $i$}.
We denote by $[n]$ the set $\mset{1,2,\ldots, n}$.
Let $\pi = (\pi_1,\pi_2,\ldots ,\pi_n)$ be a permutation of $[n]$.
A \emph{cyclic ladder lottery of $\pi$} is a network with $n$ lines and zero or more bars such that
\begin{listing}{aaa}
\item[(1)] the top endpoints of lines correspond to the identity permutation,
\item[(2)] each bar exchanges (cyclically adjacent) two elements in the current permutation, and
\item[(3)] the bottom endpoints of lines correspond to $\pi$.
\end{listing}
In a cyclic ladder lottery of $\pi$, each element $i$ in $[n]$ starts at the top endpoint of the line $i$, and goes down along the line, then whenever $i$ comes to an endpoint of a bar, $i$ goes to the other endpoint and goes down again, then finally $i$ reaches the bottom endpoint of the line $j$, where $i = \pi_j$.
This path is called the \textem{route} of the element $i$.
Each bar is regarded as a cyclically adjacent transposition and the composition of all the transpositions in a cyclic ladder lottery
always results in $\pi$.
A cyclic ladder lottery of $\pi$ is \emph{optimal} if the ladder lottery
contains the minimum number of bars.
For example, 
the cyclic ladder lottery in \figurename~\ref{fig:cyc_lad_lot}(b) is optimal.
We can observe from the figure that there exists an optimal cyclic ladder lottery $L$ without a ``slit'', that is, $L$ does not come from any ladder lottery.
%%%

A cyclic ladder lottery of $\pi$ is regarded as an arrangement of $n$ pseudolines on a cylinder.
The route of an element in $[n]$ corresponds to a pseudoline and a bar corresponds to an intersection of two pseudolines.
\figurename~\ref{fig:cyc_lad_lot}(c) shows the arrangement of pseudolines corresponding to the cyclic ladder lottery in \figurename~\ref{fig:cyc_lad_lot}(b).
We use terminologies on pseudoline arrangements on a cylinder instead of the ones 
on cyclic ladder lotteries to clarify discussions.
Let $\pline(L,i)$ denote the pseudoline of $i \in [n]$ in a cyclic ladder lottery $L$.
Note that the top endpoint of $\pline(L,i)$ corresponds to 
the element $i$ in the identity permutation, the bottom endpoint of $\pline(L,i)$ corresponds to 
$\pi_i$ in $\pi$, and $\pline(L,i)$ is an $y$-monotone curve.
In an optimal cyclic ladder lottery $L$, any two pseudolines cross at most once.
From now on, we assume that any two pseudolines in $L$ cross at most once.
% cr(i,j): intersection of i and j.
For two distinct elements $i,j \in [n]$, 
$\cross(i,j)$ denotes the intersection of $\pline(L,i)$ and $\pline(L,j)$ if it exists.
For distinct $i,j,k \in [n]$, a triple $\mset{i,j,k}$
is \emph{tangled}
if $\pline(L,i)$, $\pline(L,j)$, and $\pline(L,k)$ cross each other.
Let $\mset{i,j,k}$ be a tangled triple in $L$.
Let $M$ be the ladder lottery induced from $L$ by the three pseudolines $\pline(L,i)$, $\pline(L,j)$, and $\pline(L,k)$,
and let $p,q,r$ be the three intersections in $M$.
Without loss of generality, in $M$,
we suppose that (1) $p$ is adjacent to the two top endpoints of two pseudolines,
(2) $q$ is adjacent to the top endpoint of a pseudoline and the bottom endpoint of a pseudoline, (3) $r$ is adjacent to the two bottom endpoints of two pseudolines.
Then, $\mset{i,j,k}$ is a \emph{left tangled triple} 
if $p,q,r$ appear in counterclockwise order on the contour of the region enclosed by $\pline(M,i)$, $\pline(M,j)$, and $\pline(M,k)$.
Similarly, $\mset{i,j,k}$ is a \emph{right tangled triple} 
if $p,q,r$ appear in clockwise order on the contour of the region.
See \figurename~\ref{fig:left_right_tangled} for examples.
\begin{figure}[tb]
  \centerline{\includegraphics[width=0.5\linewidth]{./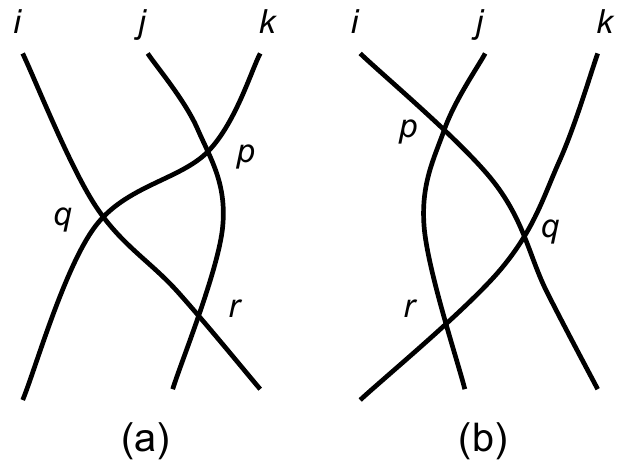}}
  \caption{Illustrations of (a) a left tangled triple and (b) a right tangled triple.}
\label{fig:left_right_tangled}
\end{figure}
A tangled triple $\mset{i,j,k}$ is \emph{minimal} if the region enclosed by $\pline(L,i)$, $\pline(L,j)$, and $\pline(L,k)$ includes no subpseudoline in $L$.
For example, in \figurename~\ref{fig:cyc_lad_lot}(c),
$\mset{2,3,4}$ and $\mset{7,8,1}$ are minimal right tangled triples
and 
$\mset{2,3,5}$ is a (non-minimal) right tangled triple.
A \emph{braid relation} is an operation to transform a minimal left (resp.\ right) tangled triple into another minimal right (resp.\ left) tangled triple.
The cyclic ladder lottery in \figurename~\ref{fig:cyc_lad_lot}(d) is obtained from the ladder lottery in \figurename~\ref{fig:cyc_lad_lot}(b) by applying a braid relation to the triple 
$\mset{2,3,4}$.

Let $\llset(\pi)$ be the set of the cyclic ladder lotteries of $\pi$.
Let $\llsetopt(\pi)$ be the set of the optimal cyclic ladder lotteries of $\pi$
and let $\llsetone(\pi)$ be the set of the cyclic ladder lotteries of $\pi$ such that any two pseudolines cross at most once.
Note that $\llsetopt(\pi) \subseteq \llsetone(\pi) \subseteq \llset(\pi)$ holds.
Let $L'$ be the ladder lottery obtained from $L \in \llsetopt(\pi)$ by removing $\pline(L,i)$ for $i \in [n]$.
Then, $L'$ is a cyclic ladder lottery in $\llset(\pi')$,
where $\pi'$ is the permutation obtained from $\pi$ by removing $i$. Since any two pseudolines in $L'$ cross at most once, $L'\in \llsetone(\pi')$ holds.
Note that $L'$ may be a non-optimal cyclic ladder lottery of $\pi'$.

The vector $\bm{x} = \mvec{x_1, x_2, \ldots ,x_n}$ is a \emph{displacement vector} of $\pi$ if $\sum_{i\in[n]}x_i=0$ and 
$i=\pi_{(i+x_i)\bmod n}$ for any $i\in[n]$.
Let $L$ be a cyclic ladder lottery in $\llset(\pi)$.
Then, a displacement vector can be defined from $L$ and is denoted by $\DV(L)$.
Intuitively, the element $x_i$ in $\DV(L) = \mvec{x_1, x_2, \ldots ,x_n}$ represents the movement direction of the element $i$ in $[n]$.
That is, if $x_i > 0$, the element $i$ goes right and if $x_i < 0$, the element $i$ goes left.
For instance, the displacement vector of the ladder lottery in \figurename~\ref{fig:cyc_lad_lot}(b) is $\mvec{-4,4,1,-3,-2,1,3,0}$.
A displacement vector $\bm{x}$ is \emph{optimal} if there exists an optimal cyclic ladder lottery $L \in \llsetopt(\pi)$
such that $\bm{x}=\DV(L)$ holds.
Similarly, a displacement vector $\bm{x}$ is said to be \emph{almost optimal} if there exists a cyclic ladder lottery $L$ of $\pi$ such that $L \in \llsetone(\pi,\bm{x})$ and $\bm{x}=\DV(L)$ hold.
We define the sets of cyclic ladder lotteries with the same displacement vector, as follows:
$\llsetopt(\pi,\bm{x}) = \mset{L \in \llsetopt(\pi) \mid \DV(L) = \bm{x}}$
and
$\llsetone(\pi,\bm{x}) = \mset{L \in \llsetone(\pi) \mid \DV(L) = \bm{x}}$.
Then, we have the following lemma which shows that the set $\llsetopt(\pi)$ is partitioned into sets of optimal cyclic ladder lotteries with the same optimal displacement vectors.

\begin{lemma}\label{lem:equiv_class}
Let $\pi$ be a permutation in $\frakS_n$.
Then,
\[
\llsetopt(\pi) = \bigsqcup_{\bm{x} \in \setodv(\pi)} \llsetopt(\pi,\bm{x}),
\]
where $\setodv(\pi)$ is the set of the optimal displacement vectors of $\pi$.
\end{lemma}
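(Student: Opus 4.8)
The plan is to read the displayed equality as the assertion of a genuine disjoint union, which splits into two tasks: verifying that the sets $\llsetopt(\pi,\bm{x})$ indexed by $\bm{x}\in\setodv(\pi)$ are pairwise disjoint, and verifying that their union equals $\llsetopt(\pi)$. The only substantive ingredient is that each cyclic ladder lottery determines a \emph{single} displacement vector, i.e.\ that $\DV$ is a well-defined function on $\llset(\pi)$ and hence on the subset $\llsetopt(\pi)$. This is precisely the fact recorded earlier when $\DV(L)$ was introduced (tracing each route yields a unique signed net displacement satisfying $\sum_i x_i=0$ and $i=\pi_{(i+x_i)\bmod n}$), so I would invoke it rather than reprove it.

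First I would establish disjointness. Suppose $\bm{x},\bm{y}\in\setodv(\pi)$ with $\bm{x}\neq\bm{y}$ and that some $L$ lies in both $\llsetopt(\pi,\bm{x})$ and $\llsetopt(\pi,\bm{y})$. By the defining condition of these sets, $\DV(L)=\bm{x}$ and $\DV(L)=\bm{y}$ hold simultaneously; since $L$ has a unique displacement vector, $\bm{x}=\bm{y}$, a contradiction. Hence distinct indices yield disjoint parts, which is exactly what justifies the use of $\bigsqcup$.

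Next I would prove the two inclusions giving the set equality. For $\supseteq$, each part $\llsetopt(\pi,\bm{x})=\{L\in\llsetopt(\pi)\mid\DV(L)=\bm{x}\}$ is by construction a subset of $\llsetopt(\pi)$, so their union is contained in $\llsetopt(\pi)$. For $\subseteq$, take an arbitrary $L\in\llsetopt(\pi)$ and put $\bm{x}:=\DV(L)$. Because $L$ is an optimal cyclic ladder lottery realizing $\bm{x}$, the vector $\bm{x}$ is optimal by definition, so $\bm{x}\in\setodv(\pi)$, and $L\in\llsetopt(\pi,\bm{x})$ trivially; thus $L$ lies in the right-hand union. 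To round off the picture I would also note that each indexed part is nonempty: if $\bm{x}\in\setodv(\pi)$ then by definition there is an optimal cyclic ladder lottery realizing $\bm{x}$, whence $\llsetopt(\pi,\bm{x})\neq\emptyset$, confirming that the disjoint union is a true partition into nonempty classes.

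I do not anticipate a serious obstacle, as the statement unwinds almost entirely from the definitions of $\llsetopt(\pi,\bm{x})$, of an optimal displacement vector, and of $\setodv(\pi)$. The one place where care is required, and the only step that is not purely formal, is the single-valuedness of $\DV$, which underlies both the disjointness argument and the identification of the indexing set; once that is granted, the remainder is bookkeeping.
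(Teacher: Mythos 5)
Your proof is correct: the paper itself states this lemma without any proof, treating it as an immediate consequence of the definitions, and your argument is exactly the definitional unwinding that the authors leave implicit (single-valuedness of $\DV$ gives disjointness, and the definition of an optimal displacement vector, namely that $\bm{x}\in\setodv(\pi)$ iff $\bm{x}=\DV(L)$ for some $L\in\llsetopt(\pi)$, gives both inclusions). So you have taken essentially the same (tacit) approach as the paper, just written out in full.
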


In the study of ladder lotteries, the inversion number plays a crucial role.
In \cite[(3.6)]{Jerrum85}, the cyclic analogue of the inversion number
\[
\inv(\bm{x})=\frac{1}{2}\sum_{i,j \in [n]}|c_{ij}(\bm{x})|
\]
is introduced, where $\bm{x}$ is a displacement vector.
Here, a crossing number $c_{ij}(\bm{x})$ is defined as follows.
Let $i,j$ be two elements in $[n]$
and let $r=i-j$ and $s=(i+x_i)-(j+x_j)$.
Then, we define $c_{ij}(\bm{x})$ by 
\[
c_{ij}(\bm{x}) =
\begin{cases}
    \msize{\mset{k\in [r,s] \mid k \equiv 0\mod n}} & \text{if $r \leq s$,} \\
    -\msize{\mset{k\in [s,r] \mid k \equiv 0 \mod n}} & \text{if $s < r$.}
\end{cases}
\]
The number $\inv(\bm{x})$ coincides with the affine inversion number for $\widetilde\frakS_n$ (see \cite[Section~8.3]{BjBr05} for instance).
As mentioned in \cite{Jerrum85}, $\inv(\bm{x})$ is equal to the number of intersections between the $n$ pseudolines on the cylinder.
Note here that \cite[Lemma~3.6]{Jerrum85} corresponds to \cite[Proposition~8.3.1]{BjBr05}.

\subsection{Longest elements}
The number of intersections of pseudolines in a cyclic ladder lottery $L$ is denoted by $\inv(\DV(L))$.
In this subsection, we simply denote it by $\inv(L)$ if it is obvious from the context.
We investigate the permutation $\pi$ such that the number of intersections in an optimal cyclic ladder lottery of $\pi$ is maximized and the number of intersections of an optimal cyclic ladder lottery of $\pi$.
That is, for each $n$, we focus on the value 
$\max_{\pi \in \frakS_n}\min_{L \in \calL(\pi)}\inv(L)$ 
and $\pi$'s which attain this value.
The corresponding problem in ordinary ladder lotteries is well-known.
In fact, the maximum is $n(n-1)/2$, which is uniquely attained by the reverse permutation (the so-called longest element in theory of Coxeter groups).

We give an alternative proof of \cite[Theorem~5]{ZuylenBSY16}.\footnote{In \cite{ZuylenBSY16}, the corresponding claim is stated for a set of $n$ positive integers instead of a permutation of $[n]$.
Hence, the proof in \cite{ZuylenBSY16} extends some notions on a permutation of $[n]$ to a set of positive integers.
On the other hand, our description here is straightforward and attains a short proof.
}

\begin{proposition}
\label{prop:longest}
Let $\pi$ be a permutation in $\frakS_n$.
\begin{enumerate}[label=\textup{(\arabic*)}]
    \item If $n=2m-1$, then 
    $\min\{\inv(L)\mid L\in\calL(\pi)\}\leq m(m-1)$ 
    %$\min\{\inv(L)\mid L\in\llsetopt(\pi)\}\leq m(m-1)$ 
    and the equality holds only for $\pi=(m+1,\dots,2m-1,1,\dots,m),(m,\dots,2m-1,1,\dots,m-1)$.
    \item If $n=2m$, then 
    $\min\{\inv(L)\mid L\in\calL(\pi)\}\leq m^2$ 
    %$\min\{\inv(L)\mid L\in\llsetopt(\pi)\}\leq m^2$
    and the equality holds only for $\pi=(m+1,\dots,2m,1,\dots,m)$.
\end{enumerate}
\end{proposition}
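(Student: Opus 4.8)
The plan is to work entirely with displacement vectors, using that $\min\{\inv(L)\mid L\in\calL(\pi)\}=\min_{\bm x}\inv(\bm x)$, where $\bm x$ runs over the displacement vectors of $\pi$ and an optimal $L$ realizes the minimizing $\bm x$. The elementary tool I would isolate first is the identity $\sum_{j\ne i}c_{ij}(\bm x)=x_i$ for every $i$: geometrically the signed number of pseudolines that $\pline(L,i)$ overtakes equals its net displacement $x_i$, and from the definition of $c_{ij}$ this is a short telescoping computation using that $\{(j+x_j)\bmod n\}_{j}$ is a complete residue system. Combining it with the triangle inequality gives the clean bound
\[
\inv(\bm x)=\tfrac12\sum_{i}\sum_{j\ne i}\abs{c_{ij}(\bm x)}\ \ge\ \tfrac12\sum_i\Bigl|\sum_{j\ne i}c_{ij}(\bm x)\Bigr|=\tfrac12\sum_i\abs{x_i}.
\]

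This inequality settles the lower half of the equality statement. For the listed permutations the inverse $\sigma=\pi^{-1}$ is a cyclic shift by some $c$ with $\min(c,n-c)=\floor{n/2}$ (namely $c=m$ if $n=2m$, and $c\in\{m-1,m\}$ if $n=2m-1$), so every coordinate of every displacement vector satisfies $x_i\equiv c\pmod n$ and hence $\abs{x_i}\ge\floor{n/2}$. When $n=2m$ this already forces $\sum_i\abs{x_i}\ge2m^2$; when $n=2m-1$ one additionally uses that $\sum_i x_i=0$ forces $m-1$ of the coordinates to take the longer value $-m$, raising the total to $2m(m-1)$. In both cases $\sum_i\abs{x_i}\ge2\floor{n^2/4}$, so the displacement bound gives $\inv(\bm x)\ge\floor{n^2/4}$ for all $\bm x$; routing the ``short'' half one way and the other half the other way produces only right--left crossings and attains equality, whence $\min_L\inv(L)=\floor{n^2/4}$ for these $\pi$.

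For the upper bound for an arbitrary $\pi$ I would use \emph{cuts}. Cutting the cylinder at the gap after position $c$ produces a genuine (slitted) ladder lottery whose number of bars is exactly the ordinary inversion number $\inv(\sigma_c)$ of the conjugate $\sigma_c=\rho^{-c}\sigma\rho^{c}$, where $\rho$ denotes the rotation $i\mapsto i+1$. Thus $\min_L\inv(L)\le\min_{c}\inv(\sigma_c)$, and it suffices to prove $\min_c\inv(\sigma_c)\le\floor{n^2/4}$. My route is averaging: show $\sum_{c}\inv(\sigma_c)\le n\floor{n^2/4}$, so that some cut attains at most the average $\floor{n^2/4}$. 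Writing $\sum_c\inv(\sigma_c)=\sum_{\{e,f\}}N_{ef}$ with $N_{ef}$ the number of cuts for which the pseudolines of $e$ and $f$ cross reduces everything to how two cyclic intervals (the cut-sets where the tops, resp.\ the bottoms, of the two lines are in a prescribed order) overlap. This averaging estimate is the main obstacle: $N_{ef}$ is not a function of the two interval lengths alone, so the argument must genuinely exploit that $\sigma$ is a permutation, and it is here that the sharp constant $\floor{n^2/4}$ (rather than the trivial $\binom n2$) must be extracted.

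Finally, for uniqueness I would run the chain $\min_L\inv(L)\le\min_c\inv(\sigma_c)\le\tfrac1n\sum_c\inv(\sigma_c)\le\floor{n^2/4}$ backwards: if $\min_L\inv(L)=\floor{n^2/4}$ then every inequality is tight, so $\inv(\sigma_c)=\floor{n^2/4}$ for all $c$, i.e.\ the inversion number of $\sigma_c$ is independent of $c$. The remaining delicate step is to show that this rotational invariance forces $\sigma$ to be a rotation $\rho^{k}$; granting it, $\inv(\rho^k)=k(n-k)$ equals $\floor{n^2/4}$ exactly for $k=m$ when $n=2m$ and for $k\in\{m-1,m\}$ when $n=2m-1$, and passing back through $\pi=\sigma^{-1}$ yields precisely the permutations in the statement (one for even $n$, two for odd $n$). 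I expect the two flagged points — the averaging inequality and the implication ``all conjugates equi-inversive $\Rightarrow$ rotation'' — to carry essentially all the difficulty, while the displacement-vector bound and the cut construction are routine.
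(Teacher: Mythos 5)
You have put your finger on the right two difficulties, but the first one is not merely hard: the averaging inequality you hope to extract is \emph{false}, so the proposed route to the upper bound cannot be completed. Concretely, take $n=6$ and $\sigma=(4,3,2,1,6,5)$. The six cut permutations $\rho^{-c}\sigma\rho^{c}$, $c=0,1,\dots,5$, are $(4,3,2,1,6,5)$, $(2,1,6,5,4,3)$, $(6,5,4,3,2,1)$, $(4,3,2,1,6,5)$, $(2,1,6,5,4,3)$, $(6,5,4,3,2,1)$, with ordinary inversion numbers $7,7,15,7,7,15$. Hence
\[
\sum_{c}\inv(\sigma_c)=58>54=6\cdot\floor{6^2/4},
\]
so $\tfrac1n\sum_c\inv(\sigma_c)$ can exceed $\floor{n^2/4}$ (the proposition itself is untouched here, since the best cut gives $7\le 9$; note also that two of the six cuts are maximally bad, with all $\binom{6}{2}$ pairs inverted, so a proof must \emph{locate} a good cut rather than average over all of them). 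This failure is fatal twice over: the upper bound for arbitrary $\pi$ was to come from the average, and your uniqueness argument was to come from forcing equality along the chain $\min_L\le\min_c\le\tfrac1n\sum_c\le\floor{n^2/4}$, which no longer exists. Your second flagged step (rotational invariance of the inversion number forces $\sigma$ to be a rotation) is also left unproven, so even with a corrected upper bound the necessity half would remain open. What does survive is the sufficiency half: the identity $\sum_{j\ne i}c_{ij}(\bm{x})=x_i$, the bound $\inv(\bm{x})\ge\tfrac12\sum_i|x_i|$, and the conclusion that the listed permutations attain exactly $\floor{n^2/4}$; that part is correct and is a clean substitute for the role played by Lemma~\ref{lem:longest}.

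For contrast, the paper never makes a statement about all cuts simultaneously. It proves (1) and (2) together by induction on $m$: for an optimal $L\in\calL(\pi)$ it selects a pseudoline whose total crossing number $c_{\min}(L)=\min_{j}\sum_{i}|c_{ij}|$ is smallest, so that $n\,c_{\min}(L)\le 2\inv(L)$; deleting that pseudoline gives $L'$ on $n-1$ lines with $\inv(L)=\inv(L')+c_{\min}(L)$, and feeding the induction hypothesis into $(n-2)\,c_{\min}(L)\le 2\inv(L')$ forces $c_{\min}(L)\le m-1$, whence $\inv(L)\le(m-1)^2+(m-1)=m(m-1)$ when $n=2m-1$ (and $\le m^2$ when $n=2m$). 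The uniqueness statement is then obtained by tracking when these inequalities are tight, using the explicit optimal displacement vectors of Lemma~\ref{lem:longest} for the smaller extremal permutations. If you want to salvage a cut-style argument, you would need a single-witness argument of this kind (one good pseudoline to delete, or one good cut), not an averaging bound.
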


\begin{lemma}
\label{lem:longest}
For the permutations $(m+1,\dots,2m-1,1,\dots,m)$, $(m,\dots,2m-1,1,\dots,m-1)$ and $(m+1,\dots,2m,1,\dots,m)$, the corresponding optimal displacement vectors are respectively
\begin{align*}
&(\overbrace{m-1,\dots,m-1}^{m},\overbrace{-m,\dots,-m}^{m-1}),\
(\overbrace{m,\dots,m}^{m-1},\overbrace{-m+1,\dots,-m+1}^{m}),\\ 
&(\overbrace{m,\dots,m}^{m},\overbrace{-m,\dots,-m}^{m})
\end{align*}
up to permutation \textup{(}see Figure~\ref{fig:Longest}\textup{)}.
\end{lemma}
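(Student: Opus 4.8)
The plan is to turn the computation of optimal displacement vectors into a one-variable integer optimization through the crossing-number formula $\inv(\bm{x})=\frac12\sum_{i,j}|c_{ij}(\bm{x})|$. The starting observation, which I would verify by a direct inspection of each of the three permutations, is that $\pi^{-1}(i)-i$ is congruent to a single residue $\delta$ modulo $n$ that does not depend on $i$ (namely $\delta\equiv m-1\pmod n$ for $(m+1,\dots,2m-1,1,\dots,m)$ and $\delta\equiv m\pmod n$ for the other two). Since every displacement vector of $\pi$ satisfies $i+x_i\equiv\pi^{-1}(i)\pmod n$ by definition, this forces $x_i\equiv\delta\pmod n$ for all $i$ simultaneously. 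Consequently, once I check that the stated vector has zero coordinate sum (a short arithmetic verification) it is automatically a displacement vector, and, crucially, because the congruence is uniform in $i$, every permutation of its entries is again a displacement vector of $\pi$.

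The key step is the identity $|c_{ij}(\bm{x})|=|t_i-t_j|$, where I write each coordinate as $x_i=c+n\,t_i$ with $t_i\in\Z$ for a fixed representative $c$ of $\delta$. Indeed, with $r=i-j$ and $s=r+n(t_i-t_j)$ the interval between $r$ and $s$ has length $n\,|t_i-t_j|$, and since $0<|i-j|<n$ the value $r$ is never a multiple of $n$; hence that interval contains exactly $|t_i-t_j|$ multiples of $n$, their sign being exactly what $c_{ij}$ records. Summing over all pairs gives
\[
\inv(\bm{x})=\frac12\sum_{i,j\in[n]}|t_i-t_j|=\sum_{i<j}|t_i-t_j|,
\]
while $\sum_i x_i=0$ translates into $\sum_i t_i=C$ for an explicit constant $C$ with $n\nmid C$ in each of the three cases. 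Finding the optimal displacement vectors therefore reduces to minimizing $\sum_{i<j}|t_i-t_j|$ over integer vectors of length $n$ with prescribed sum $C$.

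I would settle this minimization by a smoothing argument: whenever two coordinates differ by at least $2$, lowering the larger by one and raising the smaller by one preserves the sum and strictly decreases $\sum_{i<j}|t_i-t_j|$. Hence any minimizer uses only the two consecutive integers $\lfloor C/n\rfloor$ and $\lceil C/n\rceil$, and because $n\nmid C$ their multiplicities are uniquely determined by $C$. Converting back through $x_i=c+n\,t_i$ reproduces exactly the multiset of the stated vector, and the minimum value equals the product of the two multiplicities, which is $m(m-1)$ in the odd cases and $m^2$ in the even case --- in agreement with Proposition~\ref{prop:longest}. Together with the earlier remark that all permutations of this multiset are valid displacement vectors, this identifies the set of optimal displacement vectors as precisely the permutations of the stated vector, which is the assertion of the lemma.

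The step I expect to be the main obstacle is the identity $|c_{ij}(\bm{x})|=|t_i-t_j|$: getting the closed-interval endpoints and the two sign cases of $c_{ij}$ exactly right, together with the observation that $i-j$ is never divisible by $n$, is where the argument must be handled with care. Once that identity and the uniform-congruence fact are in place, both the evaluation of $\inv$ and the minimization are routine.
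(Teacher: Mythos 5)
Your proof is correct, but it follows a genuinely different route from the paper's. The paper disposes of the lemma in two sentences: it asserts by inspection (cf.\ Figure~\ref{fig:Longest}) that the three stated vectors are optimal displacement vectors realizing the three permutations, and then invokes connectivity under max-min contractions (Theorem~\ref{thm:reconfDV}, due to Jerrum): since each stated vector satisfies $\max(\bm{x})-\min(\bm{x})=n$, a max-min contraction merely transposes a maximal entry with a minimal one, so connectivity forces every optimal displacement vector to be a permutation of the stated multiset, and conversely all such permutations are reached. You instead argue from first principles, using only the identification (already in the paper's preliminaries, via Jerrum) that a displacement vector is optimal iff it minimizes $\inv$ among displacement vectors of $\pi$: the uniform congruence $\pi^{-1}(i)-i\equiv\delta\pmod{n}$ lets you write every displacement vector as $x_i=c+nt_i$; the identity $|c_{ij}(\bm{x})|=|t_i-t_j|$ (valid precisely because $0<|i-j|<n$, so $i-j$ is never a multiple of $n$) reduces $\inv(\bm{x})$ to $\sum_{i<j}|t_i-t_j|$ subject to $\sum_i t_i=C$ with $n\nmid C$; and the smoothing argument pins down the unique minimizing multiset, which translates back exactly to the stated vectors. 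I checked the three cases ($\delta=m-1$, $C=-(m-1)$ for the first; $\delta=m$, $C=-m$ for the other two) and your multiset indeed comes out right each time, with minimum $k(n-k)$ equal to $m(m-1)$, $m(m-1)$, $m^2$. Your route buys three things: it is self-contained (no appeal to the reconfiguration theorem), it actually proves the optimality claim that the paper leaves at ``one can see,'' and it yields the minima $m(m-1)$ and $m^2$ as a byproduct --- reassuring, since Proposition~\ref{prop:longest}, whose proof consumes this lemma, needs exactly those values. The paper's route buys brevity and reuses machinery that Section~\ref{subsec:Reconfiguration} develops anyway.
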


\begin{proof}
One can see that the three displacement vectors are optimal and give the three permutations, respectively.
Then, the other optimal displacement vectors are obtained by max-min contractions discussed in Section~\ref{subsec:Reconfiguration} below.
\end{proof}

\begin{figure}[ht]
 \centering
 \includegraphics[width=0.7\textwidth]{./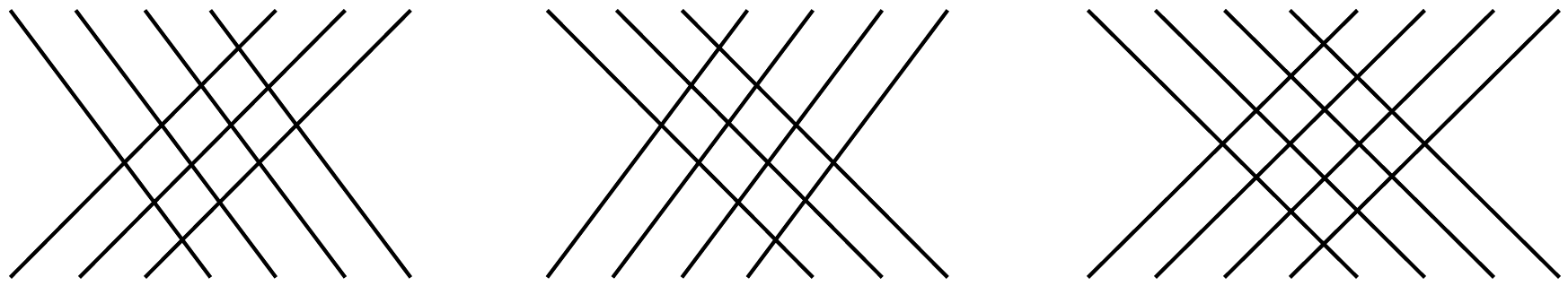}
 \caption{The pseudoline arrangements corresponding to the displacement vectors in Lemma~\ref{lem:longest} when $m=4$.}
 \label{fig:Longest}
\end{figure}

Define $c_{\min}(\bm{x})\in\Z$ by $c_{\min}(\bm{x})=\min_{j\in[n]}\sum_{i\in[n]}|c_{ij}(\bm{x})|$.
Then, we have $n c_{\min}(\bm{x})\leq 2\inv(\bm{x})$.
For the three displacement vectors in Lemma~\ref{lem:longest}, one sees that $c_{\min}(\bm{x})$ equals $m-1$, $m-1$ and $m$, respectively.
We simply write $c_{\min}(L)$ for $c_{\min}(\DV(L))$.

\begin{proof}[Proposition~\ref{prop:longest}]
We prove (1) and (2) simultaneously by induction on $m$.
The cases $m=1,2$ are obvious.
Suppose that both (1) and (2) hold for less than $m$.
Let 
$L \in \calL(\pi)$.
We may assume $L$ is optimal and $c_{\min}(L)=\sum_{i\in[n]}|c_{in}(L)|$.
We first show the inequality in (1).
Let $L'$ be the cyclic ladder lotteries obtained from $L$ by removing 
$\pline(L,n)$.
By subtracting $2c_{\min}(L)$ from the both sides of $(2m-1)c_{\min}(L)\leq 2\inv(L)$, one has $(2m-3)c_{\min}(L)\leq 2\inv(L')$.
Here, the induction hypothesis implies that $(2m-3)c_{\min}(L)\leq 2(m-1)^2$.
Since $m\geq 3$, we have $c_{\min}(L)\leq m-1$.
Therefore, one concludes that
\[
\inv(L)=\inv(L')+c_{\min}(L) \leq (m-1)^2+m-1 = m(m-1).
\]
We next consider $\pi$ for which the inequality is an equality.
This happens only if $\inv(L')=(m-1)^2$ and $c_{\min}(L)=m-1$.
By the induction hypothesis and Lemma~\ref{lem:longest}, the displacement vector $\bm{x}'=\DV(L')$ is determined up to permutation.
Here, $I=\{i\in[n-1]\mid |c_{in}(L)|=1\}$ coincides with $\{i\in[n-1]\mid x'_i=m-1\}$ or $\{i\in[n-1]\mid x'_i=-m+1\}$.
Indeed, if $i,j\in I$ with $x'_i=m-1$ and $x'_j=-m+1$, then $|x_i-x_j|>n$.
This contradicts the optimality of $L$.
Therefore, the corresponding permutation $\pi$ is one of the two in (1).
Similarly, we show (2) by the induction hypothesis and complete the induction argument.
\end{proof}

\begin{remark}
The reverse identity permutation $\pi$ of $[n]$ does not attain the minimum in Proposition~\ref{prop:longest}.
In fact, one can check that
\[
\min\{\inv(L)\mid L\in\calL(\pi)\}=
\begin{cases}
(m-1)^2 & \text{if $n=2m-1$,} \\
m(m-1)+1 & \text{if $n=2m$ and $m$ is odd,} \\
m(m-1) & \text{if $n=2m$ and $m$ is even.}
\end{cases}
\]
For instance, $(-1,-3,5,3,1,-1,-3,-5,3,1)$ is an optimal displacement vector of $\pi$ when $n=10$.
\end{remark}

\section{Reconfiguration and enumeration of cyclic ladder lotteries with optimal displacement vectors}

Let $\pi=(\pi_1,\pi_2,\ldots ,\pi_n)$ be a permutation 
in $\frakS_n$,
and let $\bm{x}=\mvec{x_1,x_2,\ldots ,x_n}$ be an optimal displacement vector of $\pi$.
In this section, 
we consider the problems of reconfiguration and enumeration for the set of the optimal cyclic ladder lotteries in $\llsetopt(\pi,\bm{x})$.

\subsection{Reachability}\label{sec:reach}

In this subsection, we consider a reconfiguration problem on reachability between two optimal cyclic ladder lotteries in $\llsetopt(\pi,\bm{x})$ under braid relations.
The formal description of the problem is given below.

\begin{myproblem}{
Reconfiguration of optimal cyclic ladder lotteries with optimal displacement vector~(\reconfCLLDV)}{
A permutation $\pi$, an optimal displacement vector $\bm{x}$ of $\pi$, and two optimal cyclic ladder lotteries $L,L' \in \llsetopt(\pi,\bm{x})$}{
Does there exist a reconfiguration sequence 
%of length at most $k$ 
between $L$ and $L'$ under braid relations?}
\end{myproblem}

To solve \reconfCLLDV,
we consider the reconfiguration problem for the cyclic ladder lotteries in $\llsetone(\pi,\bm{y})$, where $\bm{y}$ is an almost optimal displacement vector of $\pi$.
We show that any two cyclic ladder lotteries in $\llsetone(\pi,\bm{y})$ are always reachable.
As a byproduct, we obtain an answer to \reconfCLLDV.

Let $L$ be a cyclic ladder lottery in $\llsetone(\pi,\bm{y})$ and suppose that $L$ contains one or more intersections.
Then, in $L$, there exists an element $i \in [n]$ such that $\pline(L,i)$ and $\pline(L,(i+1)\bmod{n})$ 
cross, since assuming otherwise contradicts the definition of cyclic ladder lotteries.

We define some notations.
Let $R$ be a region enclosed by pseudolines in $L$.
Let $i$ and $j$ be two distinct elements in $[n]$
such that $\pline(L,i)$ and $\pline(L,j)$ include
the contour of $R$, respectively.
Note that any pseudoline 
appears as a subpseudoline on the contour of $R$ at most once in $L$.
Let $\plset(i,j,R)$ be the set of the pseudolines such that
they 
pass through $R$ by crossing with $\pline(L,i)$ and $\pline(L,j)$.
We denote by $\crset(i,j,R)$ the set of the intersections of the pseudolines in $\plset(i,j,R)$ inside $R$.
Then, we define a binary relation $\prec_{i,j,R}$ on $\crset(i,j,R)$, as follows.
Let $\ell$ be an element in $[n]$ such that $\pline(L,\ell) \in \plset(i,j,R)$ holds.
Let $\mseq{c_1,c_2,\ldots ,c_r}$ be the intersections on $\pline(L,\ell)$ in $\crset(i,j,R)$ from $\cross(i,\ell)$ to $\cross(j,\ell)$.
Here, note that $c_1$ and $c_r$ are adjacent to $\cross(\ell,i)$ and $\cross(\ell,j)$, respectively.
We define an order such that $c_h$ is 
\emph{smaller} than $c_{h+1}$ for each $h=1,2,\ldots ,q-1$ and denote $c_h \prec_{i,j,R} c_{h+1}$.
Then, we can observe that the above binary relation generates a partially order $\prec_{i,j,R}$ on $\crset(i,j,R)$.

Now, we give a proof of Lemma~\ref{lem:top_intersection}.

\begin{lemma}\label{lem:top_intersection}
Let $\pi$ and $\bm{y}$ be a permutation in $\frakS_n$ and an almost optimal displacement vector of $\pi$, respectively.
Let $L$ be a cyclic ladder lottery in $\llsetone(\pi,\bm{y})$.
Let $i \in [n]$ be an element such that $\pline(L,i)$ and 
$\pline(L,(i+1)\bmod{n})$ cross.
Then, there exists a reconfiguration sequence under braid relations between $L$
and $L'$, where $L'$ is a cyclic ladder lottery in $\llsetone(\pi,\bm{y})$ such that 
$\cross(i,(i+1)\bmod n)$ appears as a topmost intersection in $L'$.
\end{lemma}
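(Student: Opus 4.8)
The plan is to move the crossing $P=\cross(i,j)$, where $j=(i+1)\bmod n$, upward by braid relations until nothing lies above it. Let $R$ be the region of $L$ bounded on the left by $\pline(L,i)$, on the right by $\pline(L,j)$, and below by $P$; its top narrows to the two adjacent top endpoints of $\pline(L,i)$ and $\pline(L,j)$. First I would record the basic structural fact that, because the top of $R$ is pinched between adjacent endpoints, no pseudoline can begin inside $R$; hence every pseudoline meeting the interior of $R$ enters through one side and, since two pseudolines cross at most once in $\llsetone(\pi,\bm{y})$, must leave through the other side. Thus any such pseudoline crosses both $\pline(L,i)$ and $\pline(L,j)$ above $P$, i.e.\ it belongs to $\plset(i,j,R)$, and conversely each member of $\plset(i,j,R)$ contributes exactly one crossing of $\pline(L,i)$ and one of $\pline(L,j)$ strictly above $P$. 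Consequently $P$ is a topmost intersection precisely when $\plset(i,j,R)=\emptyset$, so it suffices to empty $R$. I would run an induction on $\abs{\plset(i,j,R)}$, aiming to delete one pseudoline per round.

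The engine of a round is a single braid relation applied to a triple $\mset{i,j,\ell}$ having $P$ as a vertex. Such a triple is a \emph{minimal} tangled triple exactly when $\ell$ is the pseudoline whose crossing with $\pline(L,i)$ is the first above $P$ and whose crossing with $\pline(L,j)$ is also the first above $P$: in that case the two lower sides of the triangle (along $\pline(L,i)$ and $\pline(L,j)$ from $P$ up to $\cross(i,\ell)$ and $\cross(j,\ell)$) carry no crossing, so any pseudoline entering the triangle would have to cross its $\ell$-side twice, which is impossible; hence the triangle is empty. Flipping it slides $P$ above both $\cross(i,\ell)$ and $\cross(j,\ell)$, so $\ell$ no longer passes through the region above the relocated crossing, and $\abs{\plset(i,j,R)}$ drops by one. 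Because a braid relation leaves the displacement vector unchanged and keeps every pair of pseudolines crossing at most once, the result stays in $\llsetone(\pi,\bm{y})$.

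The crux is that the two ``innermost'' pseudolines need not coincide: let $\ell_1$ cross $\pline(L,i)$ first above $P$ and $\ell_2$ cross $\pline(L,j)$ first above $P$, and suppose $\ell_1\neq\ell_2$. Then $\ell_1$ runs from low on the left to comparatively high on the right and $\ell_2$ from high on the left to low on the right, so the two must meet inside $R$; more generally every pseudoline exiting $R$ below $\cross(j,\ell_1)$ must cross $\ell_1$ inside $R$. The main obstacle is to reorganise $R$, using only braid relations among the pseudolines of $\plset(i,j,R)$, until some pseudoline is innermost on both sides, so that the previous paragraph applies. Here I would use the partial order $\prec_{i,j,R}$ on $\crset(i,j,R)$: I would locate an innermost flippable (empty) triangle along $\ell_1$, its existence following from extremality in $\prec_{i,j,R}$, and flip it to push the exit of $\ell_1$ one step lower on $\pline(L,j)$. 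Taking as potential the number of pseudolines exiting below $\cross(j,\ell_1)$, each such flip strictly decreases it while creating no new crossing of $\pline(L,i)$ or $\pline(L,j)$ above $P$; hence the reorganisation terminates with $\ell_1$ innermost on both sides. Establishing the existence of these empty triangles and the monotonicity of the potential is the delicate part of the argument.

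Finally I would concatenate, over all rounds, the braid relations of the reorganisation phases and of the flips at $P$. This single reconfiguration sequence transforms $L$ into a cyclic ladder lottery $L'\in\llsetone(\pi,\bm{y})$ in which $\plset(i,j,R)=\emptyset$, that is, $\cross(i,j)$ is a topmost intersection, as required.
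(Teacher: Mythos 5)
Your reduction of the lemma to emptying the region $R$ is sound, and so is your final flip: when one pseudoline $\ell$ is innermost on both $\pline(L,i)$ and $\pline(L,j)$, the triple $\mset{i,j,\ell}$ bounds an empty triangle (any intruder would have to cross the $\ell$-side twice) and one braid relation removes $\ell$ from $\plset(i,j,R)$. The gap is in your reorganisation engine, precisely the step you flag as delicate. To ``push the exit of $\ell_1$ one step lower on $\pline(L,j)$'' you need a braid relation whose triple contains \emph{both} $\ell_1$ and $j$, hence an empty triangle bounded by $\pline(L,\ell_1)$, $\pline(L,j)$ and a third pseudoline; you assert its existence follows from extremality in $\prec_{i,j,R}$, and that each such flip strictly decreases your potential. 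Both claims fail. Take $\plset(i,j,R)=\mset{\ell_1,a,b}$ with entry order (bottom to top) $\ell_1,b,a$ on $\pline(L,i)$, exit order $a,b,\ell_1$ on $\pline(L,j)$, and internal crossings occurring in the order $\cross(\ell_1,b)$, $\cross(\ell_1,a)$, $\cross(a,b)$ (the three-strand braid word $s_1s_2s_1$ placed inside $R$). Here the only empty triangles are those of $\mset{i,\ell_1,b}$, $\mset{\ell_1,a,b}$ and $\mset{a,b,j}$; none of them has both $\ell_1$ and $j$ on its boundary, so no single braid relation can lower $\cross(\ell_1,j)$, and flipping any of the three available triangles leaves your potential (the number of pseudolines exiting below $\cross(\ell_1,j)$, namely $2$) unchanged. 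So the potential is not monotone and the induction stalls; the reorganisation can still be completed (flip $\mset{\ell_1,a,b}$ first, after which $\mset{\ell_1,b,j}$ becomes empty), but not by the rule or the measure you give.

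The paper avoids this by not privileging any pseudoline $\ell_1$: when $\crset(i,j,R)\neq\emptyset$ it takes a $\prec_{i,j,R}$-\emph{minimal} crossing $\cross(a,b)$, i.e.\ one that is the first crossing on both $a$ and $b$ after they enter through $\pline(L,i)$; then $\mset{i,a,b}$ bounds an empty triangle, and flipping it pushes $\cross(a,b)$ across $\pline(L,i)$ out of $R$, decreasing $\msize{\crset(i,j,R)}$ while keeping $\msize{\plset(i,j,R)}$ fixed. Only when $\crset(i,j,R)=\emptyset$ (all pseudolines in $R$ parallel) does it peel off the innermost pseudoline, which is then automatically innermost on both sides. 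In other words, extremality in the poset must be applied to \emph{crossings}, flipped across a bounding line of $R$ (a $\prec$-maximal crossing flipped across $\pline(L,j)$ would work symmetrically), with $\bigl(\msize{\crset},\msize{\plset}\bigr)$ as the terminating measure --- in the example above this flips $\cross(\ell_1,b)$ across $\pline(L,i)$ and makes immediate progress. Replacing your reorganisation phase by this argument repairs the proof.
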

%%%
%%% Proof
%%%
\begin{proof}
In this proof, we describe all the indices in modulo $n$ and omit ``mod $n$'' for readability.
Let $R$ be the closed region of $L$ enclosed by
(1) $\pline(L,i)$,
(2) $\pline(L,i+1)$,
and
(3) the line segment connecting the top endpoints of $\pline(L,i)$ and $\pline(L,i+1)$.
If $R$ has no subpseudoline inside, 
$\cross(i,i+1)$ is a topmost intersection in $L$, and hence the claim holds.
Now, we assume otherwise.
We focus on $\crset(i,i+1,R)$ and have the following two cases.
Note that, in this situation, $\crset(i,i+1,R)$ coincides with all the intersections inside $R$.

\begin{mycase}{1}{$\crset(i,i+1,R) = \emptyset$.}
In this case, $R$ includes subpseudolines in parallel, as shown in \figurename~\ref{fig:top_intersection}(a).
We show that one can reduce 
$\msize{\plset(i,i+1,R)}$ by one.
Then, let $u$ and $v$ be two intersections adjacent to and above $\cross(i,i+1)$ on $\pline(L,i)$ and $\pline(L,i+1)$,
respectively~(see \figurename~\ref{fig:top_intersection}(a)).
Let $j$ be an element such that $\pline(L,j)$ includes $u$ and $v$.
Then, 
$\mset{i, i+1, j}$ is a minimal tangled triple.
By applying a braid relation to the triple, we obtain a cyclic ladder lottery such that $R$ includes one less subpseudoline.
\end{mycase}

\begin{mycaselast}{2}{$\crset(i,i+1,R) \neq \emptyset$.}
In this case, $R$ includes one or more intersections in its inside.
We show that one can reduce 
$\msize{\crset(i,i+1,R)}$ by one by applying a braid relation.
Note that $(\crset(i,i+1,R),\prec_{i,i+1,R})$ is a partially ordered set.
Let $u$ be a minimal intersection in $\crset(i,i+1,R)$.
Let $j,k \in [n]$ be the two distinct elements such that $\pline(L,j)$ and $\pline(L,k)$ cross on $u$, that is $u = \cross(j,k)$.
%Let $v,w$ be two intersections adjacent to $u$ on $\pline(L,i)$.
See \figurename~\ref{fig:top_intersection}(b).
Then, $\mset{i,j,k}$ is a minimal tangled triple and by applying a braid relation to $\mset{i,j,k}$,
we can reduce 
$\msize{\crset(i,i+1,R)}$ by one.
\end{mycaselast}
\begin{figure}[tb]
  \centerline{\includegraphics[width=0.6\linewidth]{./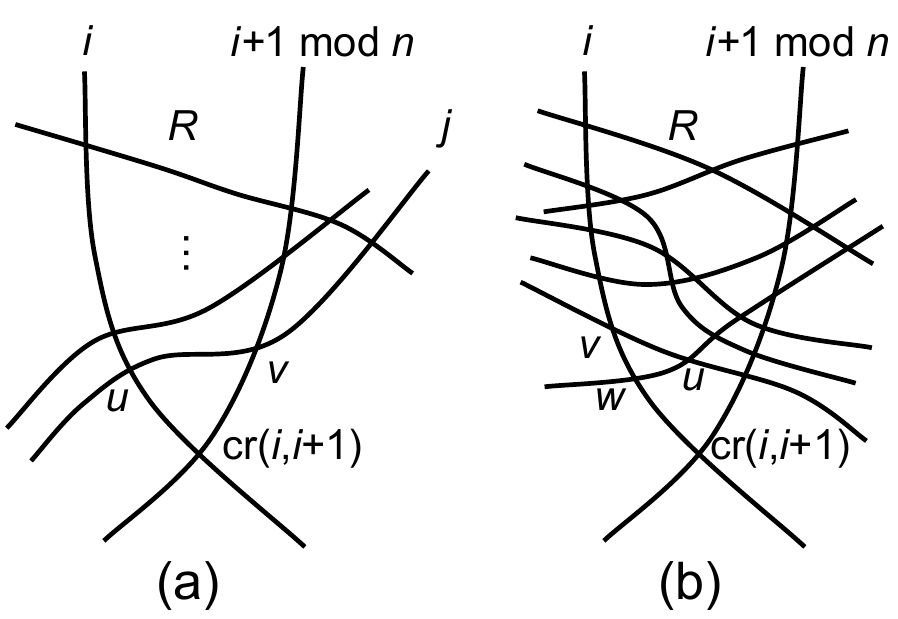}}
  \caption{Illustrations for (a) Case~1 and  (b) Case~2.}
\label{fig:top_intersection}
\end{figure}

Now, we are ready to complete the proof.
Starting from $L$, by repeatedly applying the above case analysis to $L$, we finally have 
a cyclic ladder lottery in $\llsetone(\pi,\bm{y})$ such that 
$\cross(i,i+1)$ appears as a topmost intersection.
\qed
\end{proof}

Using Lemma~\ref{lem:top_intersection}, we can prove the following theorem, which claims that any two cyclic ladder lotteries in $\llsetone(\pi,\bm{y})$ with the same displacement vector are always reachable.

\begin{theorem}
\label{thm:reconfCLL-DV}
Let $\pi$ be a permutation in $\frakS_n$.
Let $L,L'$ be two cyclic ladder lotteries in $\llsetone(\pi)$.
Then, $\DV(L) = \DV(L')$ holds if and only if $L$ and $L'$ are reachable under braid relations.
\end{theorem}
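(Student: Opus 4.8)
The plan is to prove the two implications separately, with the forward direction carrying essentially all the weight. For the backward direction I would invoke the already-noted fact that a braid relation does not change the displacement vector: it merely exchanges a minimal left tangled triple for a minimal right one (or vice versa), permuting three crossings locally without altering the net movement of any element. Hence $\DV$ is an invariant along every reconfiguration sequence, and if $L$ and $L'$ are reachable then $\DV(L)=\DV(L')$. This is a one-line invariance argument.

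For the forward direction, suppose $\DV(L)=\DV(L')=\bm{y}$, so both networks lie in $\llsetone(\pi,\bm{y})$. Since each pair $\pline(L,i),\pline(L,j)$ crosses at most once and its crossing is forced by $|c_{ij}(\bm{y})|\in\{0,1\}$, both $L$ and $L'$ have exactly $\inv(\bm{y})$ intersections and the same crossing pattern. I would then argue by induction on $\inv(\bm{y})$. The base case $\inv(\bm{y})=0$ is immediate, as $L=L'$ is the bar-free network. For the inductive step, let $i$ be the index of the topmost bar of $L$; because that bar sits above all others, the elements on its two lines are still $i$ and $(i+1)\bmod n$, so it is exactly $\cross(i,(i+1)\bmod n)$ and is a topmost intersection of $L$. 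As the crossing pattern depends only on $\bm{y}$, this pair also crosses in $L'$, and Lemma~\ref{lem:top_intersection} produces a reconfiguration sequence from $L'$ to some $L''$ in which $\cross(i,(i+1)\bmod n)$ is topmost. Since braid relations are reversible, it then suffices to connect $L$ to $L''$.

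Now both $L$ and $L''$ have $\cross(i,(i+1)\bmod n)$ as a topmost intersection, so in each I can peel this bar off the top. After the relabeling that swaps the names $i$ and $(i+1)\bmod n$ (so that the peeled network again starts from the identity), peeling yields networks $L_1$ and $L_1''$ with one fewer intersection. The key point is that both lie in the same class: only the $(i,i+1)$ crossing has been deleted, so every pair still crosses at most once and $L_1,L_1''\in\llsetone$; and because $\DV$ records exactly the affine-permutation data in $\widetilde{\frakS}_n$, while peeling the topmost generator corresponds to deleting the first letter of a word, i.e.\ left multiplication by the same fixed generator, the two peeled networks represent the same shorter affine permutation and hence $\DV(L_1)=\DV(L_1'')$. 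The induction hypothesis gives a braid-relation sequence from $L_1$ to $L_1''$, which I would then lift: each such relation acts on a minimal tangled triple lying strictly beneath the peeled bar, and since that bar is topmost it never obstructs minimality, so the identical relations transform $L$ into $L''$ with the top bar held fixed. Composing, $L\to\cdots\to L''\to\cdots\to L'$, which closes the induction.

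The step I expect to be the main obstacle is making the peeling rigorous: I must verify that deleting the topmost bar decreases the intersection count by exactly one while keeping the network in $\llsetone$, and — most delicately — that the two peeled networks share a common displacement vector. I anticipate the cleanest route to the latter is to identify $\DV$ with the affine permutation and to use the word interpretation above, so that sameness of $\DV(L)$ and $\DV(L'')$ descends to sameness of the peeled data. The relabeling bookkeeping and the verification that braid moves lift unobstructed past a topmost bar are routine, but both must be stated with care.
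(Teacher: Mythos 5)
Your proposal is correct and follows essentially the same route as the paper: the backward direction by the one-line $\DV$-invariance of braid relations (the paper phrases it as a contraposition), and the forward direction by induction on the number of intersections, using Lemma~\ref{lem:top_intersection} to make $\cross(i,(i+1)\bmod n)$ topmost in $L'$, peeling that crossing from both networks, and applying the induction hypothesis. Your extra care about the peeled networks sharing a displacement vector and about lifting the braid moves past the held-fixed top bar only fills in details the paper leaves implicit.
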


\begin{proof}
($\to$) Proof by induction on the number of intersections in $L$ (and $L'$).
When the number of intersections
in $L$ is 0, then $L = L'$.
Hence, the claim holds.

Suppose that the number of intersections is more than 0 in $L$.
We describe all the indices in modulo $n$ and omit ``mod $n$'' for readability.
Let $x$ be a topmost intersection in $L$.
If ties exist, we choose any intersection as $x$.
Let $i$ be the element in $[n]$ such that $x = \cross(i,i+1)$ holds in $L$.
Let $x'$ be the intersection of $\pline(L',i)$ and $\pline(L',i+1)$ in $L'$, that is $x' = \cross(i,i+1)$ in $L'$.
If $x'$ is not a topmost intersection in $L'$,
from Lemma~\ref{lem:top_intersection}, 
we can construct a cyclic ladder lottery in $\llset(\pi,\DV(L'))$ such that $x'$ is a topmost intersection from $L'$
by repeatedly applying braid relations.
We denote the obtained ladder lottery by $L''$.
By removing $x$ and $x'$ in $L$ and $L''$, respectively, we obtain two 
cyclic ladder lotteries $M$ and $M'$ with one less intersection of a permutation $\pi \circ \tau_{i,i+1}$,
where $\tau_{i,i+1}$ is the transposition of $i$ and $i+1$, and both $M$ and $M'$ has the same 
displacement vector.
From the induction hypothesis,
$M$ and $M'$ are reachable under braid relations.

($\leftarrow$) 
We prove the claim by contraposition.
Assume that $\DV(L) \neq \DV(L')$ holds.
Since applying a braid relation does not change a displacement vector of a cyclic ladder lottery, there is no reconfiguration sequence between $L$ and $L'$ under braid relations.
\qed
\end{proof}

The following corollary is immediately from  Theorem~\ref{thm:reconfCLL-DV} and the proofs of Lemma~\ref{lem:top_intersection} and Theorem~\ref{thm:reconfCLL-DV}.

\begin{corollary}\label{cor:reachable}
For any instance of \reconfCLLDV, the answer 
is yes and one can construct a reconfiguration sequence.
\end{corollary}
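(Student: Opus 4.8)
The plan is to derive the corollary directly from Theorem~\ref{thm:reconfCLL-DV}, since an instance of \reconfCLLDV is exactly the special case of that theorem in which the two cyclic ladder lotteries are optimal and share a prescribed displacement vector. First I would record the containments needed to invoke the theorem: by the chain $\llsetopt(\pi,\bm{x})\subseteq\llsetopt(\pi)\subseteq\llsetone(\pi)$ noted in the preliminaries, both input lotteries $L$ and $L'$ lie in $\llsetone(\pi)$, so the hypotheses of Theorem~\ref{thm:reconfCLL-DV} are satisfied. Second, by the very definition $\llsetopt(\pi,\bm{x})=\mset{L\in\llsetopt(\pi)\mid\DV(L)=\bm{x}}$, both lotteries satisfy $\DV(L)=\bm{x}=\DV(L')$. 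The forward implication of Theorem~\ref{thm:reconfCLL-DV} then gives that $L$ and $L'$ are reachable under braid relations, so the answer to the instance is always \emph{yes}.

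For the constructive part I would unwind the inductive argument behind Theorem~\ref{thm:reconfCLL-DV} into an explicit procedure. At each stage one fixes a topmost intersection $\cross(i,(i+1)\bmod n)$ of the current copy of $L$, and then invokes the construction behind Lemma~\ref{lem:top_intersection} to transport the matching intersection $\cross(i,(i+1)\bmod n)$ of the current copy of $L'$ to the top by a finite sequence of braid relations; the case analysis in that lemma (Case~1 removing a parallel subpseudoline, Case~2 lifting a minimal element of $\prec_{i,i+1,R}$) is itself algorithmic, strictly decreasing $\msize{\plset(i,i+1,R)}$ or $\msize{\crset(i,i+1,R)}$ at every move. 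Once the two topmost crossings are aligned, deleting them yields a pair $M,M'$ of cyclic ladder lotteries in $\llsetone(\pi\circ\tau_{i,i+1})$ with one fewer intersection and still-equal displacement vectors, to which the same procedure applies recursively. Lifting the recursively obtained moves for $M,M'$ back by reinserting the common topmost crossing, and prepending the braid relations that produced $L'$'s aligned copy, assembles the full reconfiguration sequence between $L$ and $L'$ (using that braid relations are reversible).

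The routine verifications I would leave implicit are precisely those already discharged inside the proof of the theorem: that the deletion step preserves membership in $\llsetone$ and equality of displacement vectors, so the induction hypothesis applies, and that each call to Lemma~\ref{lem:top_intersection} halts after finitely many braid relations because a well-chosen quantity strictly decreases. Since every member of $\llsetopt(\pi,\bm{x})$ has exactly $\inv(\bm{x})$ intersections, the recursion has depth $\inv(\bm{x})$ and the procedure terminates with an explicit sequence. I expect no genuine obstacle beyond this bookkeeping: the only point requiring care is observing that optimality together with a common displacement vector places the instance squarely within the scope of Theorem~\ref{thm:reconfCLL-DV}, after which both the affirmative answer and the explicit sequence follow.
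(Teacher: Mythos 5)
Your proposal is correct and matches the paper's approach: the paper likewise derives the yes-answer immediately from the forward direction of Theorem~\ref{thm:reconfCLL-DV} (noting $\llsetopt(\pi,\bm{x})\subseteq\llsetone(\pi)$ and the common displacement vector), and obtains the constructive part by observing that the proofs of Lemma~\ref{lem:top_intersection} and Theorem~\ref{thm:reconfCLL-DV} are algorithmic. Your write-up simply makes explicit the unwinding of the induction that the paper leaves as ``immediate.''
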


\subsection{Shortest reconfiguration}

In Section~\ref{sec:reach}, we showed that any two cyclic ladder lottereis in $\llsetopt(\pi,\bm{x})$ are always reachable under braid relations.
In this subsection, 
we investigate shortest reconfiguration sequences between two cyclic ladder lottereis in $\llsetopt(\pi,\bm{x})$.

Similar to Section~\ref{sec:reach}, we also investigate shortest reconfigurations for the cyclic ladder lotteries in $\llsetone(\pi,\bm{y})$, where $\bm{y}$ is an almost optimal displacement vector of $\pi$.
We denote by $\lefttangled(L)$ the set of the left tangled triples in a cyclic ladder lottery $L$.
%In this subsection, we show that the above problem can be solved for $\llsetone(\pi,\bm{y})$.

\begin{lemma}\label{lem:lefttangled}
Let $\pi \in \frakS_n$ 
and let $\bm{y}$ be an almost optimal displacement vector of $\pi$.
For two cyclic ladder lotteries $L,L' \in \llsetone(\pi,\bm{y})$,
if $\lefttangled(L) = \lefttangled(L')$, $L=L'$ holds.
\end{lemma}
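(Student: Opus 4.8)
The plan is to prove the contrapositive: assuming $L\neq L'$, I will exhibit a tangled triple on which $L$ and $L'$ disagree, so that $\lefttangled(L)\neq\lefttangled(L')$. The first observation is that $\bm{y}$ already fixes \emph{which} pairs of pseudolines cross: since $L,L'\in\llsetone(\pi,\bm{y})$, any two pseudolines cross at most once, and $\pline(\cdot,i)$ and $\pline(\cdot,j)$ cross exactly when $\abs{c_{ij}(\bm{y})}=1$. Hence $L$ and $L'$ have the same set of crossing pairs, and the only way they can differ is in the top-to-bottom order in which the crossings occur along some pseudoline. The whole argument therefore reduces to showing that this order, on every pseudoline, is recovered from $\bm{y}$ together with $\lefttangled(L)$.

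If $L\neq L'$, then for some $i$ the sequence of crossings read along $\pline(\cdot,i)$ from top to bottom is different in $L$ and in $L'$. Two distinct linear orders on the same finite set always admit an \emph{adjacent} reversed pair (agreement on all consecutive pairs of one order forces the orders to coincide), so I may pick crossings $\cross(i,j)$ and $\cross(i,k)$ that are consecutive along $\pline(L,i)$ but appear in the opposite order along $\pline(L',i)$. I then split into two cases according to whether $\pline(\cdot,j)$ and $\pline(\cdot,k)$ cross. Suppose first that $\{i,j,k\}$ is a tangled triple, which holds in both $L$ and $L'$ because the crossing pairs are the same. For such a triple the left/right type is, by definition, the cyclic ($p,q,r$) order of its three intersections, and with $\bm{y}$ fixed there are exactly two arrangements of the three mutually crossing pseudolines, namely the two sides of a braid relation; these differ precisely by reversing the order of the three crossings along each common pseudoline. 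Thus reversing the order of $\cross(i,j)$ and $\cross(i,k)$ along $\pline(\cdot,i)$ flips the type between left and right tangled, so $\{i,j,k\}$ is left tangled in exactly one of $L,L'$, giving $\{i,j,k\}\in\symdiff{\lefttangled(L)}{\lefttangled(L')}$ and finishing this case.

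It remains to rule out the case that $\pline(\cdot,j)$ and $\pline(\cdot,k)$ do not cross, and this is the step I expect to be the main obstacle. Here I claim the order of $\cross(i,j)$ and $\cross(i,k)$ along $\pline(\cdot,i)$ is forced by $\bm{y}$ alone, so it cannot be reversed. The intuition is that, since $j$ and $k$ never cross, their horizontal positions keep a fixed cyclic relationship along the whole cylinder, while each of the two crossings of $i$ is traversed exactly once in a direction dictated by the sign of $c_{ij}(\bm{y})$ and $c_{ik}(\bm{y})$; together these data pin down which of $j,k$ the pseudoline $i$ meets first. Making this precise by a height-parametrized sweep argument, using that the horizontal gap between $j$ and $k$ never vanishes and that $i$ meets each of $j,k$ exactly once, is the technical heart of the lemma. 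Granting it, the non-crossing case is impossible, and combining the two cases yields $\lefttangled(L)=\lefttangled(L')\Rightarrow L=L'$.

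As an alternative framing one can route the proof through Theorem~\ref{thm:reconfCLL-DV}: any $L,L'\in\llsetone(\pi,\bm{y})$ are joined by braid relations, and a braid relation, being a local move inside an \emph{empty} (minimal) triangle, reverses the three crossings of exactly one triple while leaving every other triple's crossing order untouched, hence toggles the membership of a single triple in $\lefttangled$. One would then try to turn this ``one coordinate per move'' behaviour into injectivity of $L\mapsto\lefttangled(L)$. Because a braid sequence could in principle flip a triple and later flip it back, deducing injectivity this way is not immediate, so I would present the reconstruction argument above as the primary route and keep the braid picture only as supporting intuition for why toggling a single triple is the atomic change.
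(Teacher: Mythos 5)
Your skeleton — contraposition, the observation that $L$ and $L'$ have the same crossing pairs (so only the top-to-bottom order of crossings along some $\pline(\cdot,i)$ can differ), picking a reversed pair $\cross(i,j)$, $\cross(i,k)$, and the fact that for a pairwise-crossing triple a reversal of this order flips the left/right type — matches the paper's proof up through your Case 1. The genuine gap is your Case 2: when $\pline(\cdot,j)$ and $\pline(\cdot,k)$ do not cross, you never prove that the order of $\cross(i,j)$ and $\cross(i,k)$ along $\pline(\cdot,i)$ is forced by $\bm{y}$; you explicitly ``grant'' it and call it the technical heart of the lemma. As written, the contrapositive is therefore not established: if the disagreeing pair happens to be non-tangled, you produce no element of $\symdiff{\lefttangled(L)}{\lefttangled(L')}$ and no contradiction. (The claim is in fact true — the two disjoint pseudolines cut the cylinder into two faces, and the crossing directions fixed by the signs of $c_{ij}(\bm{y})$ and $c_{ik}(\bm{y})$ force the order — but that region argument is exactly what is missing from your write-up.)

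The paper closes precisely this hole using Theorem~\ref{thm:reconfCLL-DV}, i.e., by the route you set aside as ``alternative framing.'' Since $L,L'\in\llsetone(\pi,\bm{y})$ share a displacement vector, they are connected by braid relations; a braid relation changes the relative order of $\cross(i,j)$ and $\cross(i,k)$ on $\pline(\cdot,i)$ only when it is applied to the triple $\mset{i,j,k}$ itself, and that requires $\mset{i,j,k}$ to be tangled, hence $\pline(\cdot,j)$ and $\pline(\cdot,k)$ cross (and crossing pairs are invariant under braid relations, so they cross in both $L$ and $L'$). Thus any pair of crossings whose order differs between $L$ and $L'$ automatically belongs to a tangled triple, and your Case 2 cannot occur. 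Your objection to this route — that a sequence might flip a triple and later flip it back — targets the wrong use of reachability: one does not track toggles of $\lefttangled$ along the sequence; reachability is used only to force tangledness of the disagreeing triple, after which the left/right types are read off directly in $L$ and $L'$ from the reversed crossing order, exactly as in your Case 1. So the fix is short: either supply the sweep/region argument you deferred, or replace Case 2 by this appeal to Theorem~\ref{thm:reconfCLL-DV}.
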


\begin{proof}
We prove the claim by contraposition.
Since $L \neq L'$, there exists an element $i \in [n]$ such that,
for two intersections $\cross(i,j)$ and $\cross(i,k)$ ($j,k \in [n]$),
$\cross(i,j)$ is closer to the top endpoint of $\pline(L,i)$ than $\cross(i,k)$ in $L$ and
%$\cross(i,k) \prec \cross(i,j)$ holds on $\pline(L',i)$.
$\cross(i,k)$ is closer to the top endpoint of $\pline(L',i)$ than $\cross(i,j)$ in $L'$.
From Theorem~\ref{thm:reconfCLL-DV}, $L$ and $L'$ are reachable under braid relations.
Hence, the two pseudolines of $j$ and $k$ cross in both $L$ and $L'$.
Therefore, the triple $\mset{i,j,k}$ is tangled in both $L$ and $L'$.
Moreover, if $\mset{i,j,k}$ is a 
left (resp.\ right) tangled triple in $L$, it is a right (resp.\ left) tangled triple in $L'$.
Hence, $\lefttangled(L)\neq \lefttangled(L')$ holds.
\qed
\end{proof}

We denote the length of a shortest reconfiguration sequence between $L, L' \in \llsetone(\pi,\bm{x})$ by $\optclldv(L,L')$.
$\symdiff{\lefttangled(L)}{\lefttangled(L')}$ denotes the symmetric difference of $\lefttangled(L)$ and $\lefttangled(L')$.
Then, we have the following lemma.

\begin{lemma}\label{lem:lower}
Let $L,L'$ be two cyclic ladder lotteries in $\llsetone(\pi,\bm{y})$, where $\pi \in \frakS_n$ and $\bm{y}$ is an almost displacement vector of $\pi$.
Then, 
\[
\optclldv(L,L') \geq \msize{\symdiff{\lefttangled(L)}{\lefttangled(L')}}
\]
holds.
\end{lemma}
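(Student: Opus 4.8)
The plan is to track how the quantity $\msize{\symdiff{\lefttangled(L)}{\lefttangled(L')}}$ evolves along a reconfiguration sequence and to show that a single braid relation can change it by at most one. Concretely, I would isolate the key claim that whenever $L_1$ is obtained from $L_0$ by one braid relation (both in $\llsetone(\pi,\bm{y})$), one has $\msize{\symdiff{\lefttangled(L_0)}{\lefttangled(L_1)}}=1$, the unique differing triple being exactly the triple to which the braid relation is applied. Granting this claim, a shortest reconfiguration sequence $L=L_0,L_1,\dots,L_k=L'$ of length $k=\optclldv(L,L')$ satisfies, by the triangle inequality for the symmetric-difference distance,
\[
\msize{\symdiff{\lefttangled(L)}{\lefttangled(L')}}\leq\sum_{t=0}^{k-1}\msize{\symdiff{\lefttangled(L_t)}{\lefttangled(L_{t+1})}}=k,
\]
which is precisely the asserted lower bound.

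To establish the key claim I would first observe that a braid relation only reorders the three crossings of a minimal tangled triple and in particular preserves which pairs of pseudolines cross; consequently $L_0$ and $L_1$ have the same set of tangled triples, so $\lefttangled(L_0)$ and $\lefttangled(L_1)$ are subsets of one common ground set and their symmetric difference is well behaved. By the very definition of a braid relation, the triple $\mset{i,j,k}$ to which it is applied switches between left and right tangled, so $\mset{i,j,k}\in\symdiff{\lefttangled(L_0)}{\lefttangled(L_1)}$; hence the symmetric difference has size at least one.

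The main obstacle, and the step I expect to require the most care, is verifying that \emph{no other} triple changes its left/right status. Here I would exploit the minimality of $\mset{i,j,k}$: the triangular region it bounds contains no subpseudoline, and the braid relation keeps the three crossings $\cross(i,j)$, $\cross(i,k)$, $\cross(j,k)$ inside this bounded region while leaving everything outside it untouched. I would then classify any other triple by how many elements it shares with $\mset{i,j,k}$. If it shares at most one element, none of its three defining crossings lies in the triangle, so its configuration is literally unchanged. If it shares two elements, say $\mset{i,j,d}$, then only $\cross(i,j)$ lies in the triangle, whereas $\cross(i,d)$ and $\cross(j,d)$ lie strictly above or strictly below it (because $\pline(L,d)$ does not enter the triangle); since the braid relation moves $\cross(i,j)$ only within the triangle, the vertical order of these three crossings—and therefore the tangled type of $\mset{i,j,d}$—is preserved. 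Thus $\mset{i,j,k}$ is the unique triple that flips, giving $\msize{\symdiff{\lefttangled(L_0)}{\lefttangled(L_1)}}=1$ and completing the proof of the lemma.
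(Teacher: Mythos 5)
Your proposal is correct and follows essentially the same route as the paper: the paper's proof also rests on the single assertion that one braid relation changes the symmetric difference $\symdiff{\lefttangled(\cdot)}{\lefttangled(\cdot)}$ by at most one per step, then sums along the sequence. You additionally supply the justification the paper leaves implicit (that minimality of the braided triple forces exactly one triple to flip, verified by the case analysis on shared elements), which is a welcome elaboration rather than a different approach.
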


\begin{proof}
    From Lemma~\ref{lem:lefttangled}, we can observe that $\symdiff{\lefttangled(L)}{\lefttangled(L')} = \emptyset$ if and only if $L=L'$.
    Let $M$ be the cyclic ladder lottery obtained from $L$ by applying a braid relation to any minimal tangled triple. Then, $\msize{\symdiff{\lefttangled(L)}{\lefttangled(L')}}-1 \leq \msize{\symdiff{\lefttangled(L)}{\lefttangled(M)}}$ holds.
    Therefore, the claim holds.
    \qed
\end{proof}

We show an upper bound of $\optclldv(L,L')$, 
as stated in the following lemma.

\begin{lemma}\label{lem:min_exist}
    Let $L,L'$ be two distinct cyclic ladder lotteries in $\llsetone(\pi,\bm{y})$, where $\pi \in \frakS_n$ and $\bm{y}$ is an almost optimal displacement vector of $\pi$.
    Then, there exists a minimal tangled triple in $\symdiff{\lefttangled(L)}{\lefttangled(L')}$.
\end{lemma}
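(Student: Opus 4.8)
The plan is to prove the existence of a minimal tangled triple in $\symdiff{\lefttangled(L)}{\lefttangled(L')}$ by exploiting the combinatorial structure forced by the condition $L \neq L'$. First I would invoke Lemma~\ref{lem:lefttangled} to deduce that $\symdiff{\lefttangled(L)}{\lefttangled(L')} \neq \emptyset$, so there is at least \emph{some} tangled triple $\mset{i,j,k}$ whose orientation (left versus right) differs between $L$ and $L'$. The issue is that an arbitrary such triple need not be minimal: the region it encloses may contain subpseudolines, and a braid relation can only be applied to minimal triples. The core task is therefore to produce a \emph{minimal} triple inside the symmetric difference, not merely any triple.

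The key idea I would use is a minimality/extremal argument over the triples in $\symdiff{\lefttangled(L)}{\lefttangled(L')}$. Among all triples $\mset{i,j,k}$ that change orientation between $L$ and $L'$, I would select one whose enclosing region in $L$ (say) is minimal under set inclusion, or equivalently contains the fewest intersections inside it. I would then argue that this extremal triple must itself be minimal as a tangled triple: if the region enclosed by $\pline(L,i),\pline(L,j),\pline(L,k)$ contained a subpseudoline $\pline(L,\ell)$, then $\ell$ would cross two of the three lines, producing a strictly smaller tangled subtriple. The crucial point is that at least one such smaller subtriple must also lie in the symmetric difference, contradicting the extremal choice. This last implication is where the work concentrates: I would need to show that if a triple's orientation differs between $L$ and $L'$ and the triple is non-minimal, then some properly-contained tangled triple also differs in orientation. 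Here I expect to use the partial order $\prec_{i,j,R}$ introduced before Lemma~\ref{lem:top_intersection}, together with the fact (from Theorem~\ref{thm:reconfCLL-DV}) that $L$ and $L'$ share the same displacement vector and hence the same set of crossing pairs, so the two arrangements differ only in the \emph{ordering} of crossings along pseudolines.

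The main obstacle I anticipate is formalizing the descent step: given a non-minimal triple $\mset{i,j,k}$ in the symmetric difference with an interior pseudoline $\pline(L,\ell)$, I must identify a concrete smaller tangled triple (such as $\mset{i,j,\ell}$ or $\mset{j,k,\ell}$) that is guaranteed to flip orientation. Since $L$ and $L'$ realize the same displacement vector, the relative orientation of any tangled triple is determined entirely by the local order of its three crossings, and a change in the orientation of the large triple must be ``witnessed'' by a reordering of crossings among its constituent pseudolines. I would carefully track how the three crossings $p,q,r$ of $\mset{i,j,k}$ are repositioned relative to the crossings involving $\ell$, and argue by a counting or parity consideration that the flip of the large triple cannot occur without at least one contained triple also flipping. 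Once this descent is established, the extremal triple is minimal and tangled, and it lies in $\symdiff{\lefttangled(L)}{\lefttangled(L')}$ by construction, which completes the proof.
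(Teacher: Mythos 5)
Your reduction to finding a minimal triple in $\symdiff{\lefttangled(L)}{\lefttangled(L')}$ and your observation that $L$ and $L'$ share the same crossing pairs (hence the same set of tangled triples) are both correct, but the heart of your proof --- the descent step --- is false as stated, and with it the well-foundedness of your extremal argument. Here is a concrete counterexample with four pairwise-crossing pseudolines $1,2,3,4$ (ordinary optimal ladder lotteries of the reverse permutation of $[4]$, viewed as cyclic ones with the common displacement vector $(3,1,-1,-3)$). Let $L$ have its crossings in top-to-bottom order $\cross(2,3),\cross(1,3),\cross(1,2),\cross(1,4),\cross(2,4),\cross(3,4)$, and let $L'$ have them in order $\cross(3,4),\cross(1,2),\cross(1,4),\cross(2,4),\cross(1,3),\cross(2,3)$; both are valid arrangements. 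Then $\symdiff{\lefttangled(L)}{\lefttangled(L')}=\mset{\mset{1,2,3},\mset{1,3,4},\mset{2,3,4}}$, while $\mset{1,2,4}$ has the same orientation in both. The triple $\mset{1,3,4}$ lies in the symmetric difference and is non-minimal in $L$: pseudoline $2$ runs through its region, entering through $\pline(L,1)$ and leaving through $\pline(L,4)$. The only tangled triple whose region is contained in that region is the corner triple $\mset{1,2,4}$, and it does \emph{not} change orientation; the two sub-triples that do flip, $\mset{1,2,3}$ and $\mset{2,3,4}$, have regions containing the vertex $\cross(2,3)$, which lies outside the region of $\mset{1,3,4}$ in $L$. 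So the implication you need --- a flipping, non-minimal triple always contains a properly-contained flipping triple --- fails.

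This also defeats the extremal scheme itself, not just one step of it: in $L$ the regions of the three flipping triples are pairwise incomparable under inclusion, and all three have zero interior intersections, so under either of your proposed measures the extremal choice may legitimately select $\mset{1,3,4}$, a non-minimal triple from which no descent exists. (The lemma is still true for this pair, witnessed by $\mset{1,2,3}$, but your argument cannot certify it.) This is exactly the difficulty the paper's proof is engineered around: it inducts on $n$ by deleting $\pline(L,n)$, so that at most one pseudoline can interfere with a triple that is already minimal in the reduced lotteries $K,K'$; it allows the produced triple to be minimal in \emph{either} $L$ or $L'$, not only in $L$ as your conclusion requires; and its case analysis explicitly treats configurations in which the interfering pseudoline crosses the third pseudoline outside the region --- precisely the configurations that break your claim. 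To salvage your outline you would need a genuinely decreasing measure together with a corrected descent statement (or the paper's style of case analysis on both $L$ and $L'$ simultaneously); as written, the proposal has a real gap.
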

\begin{proof}
    We prove by induction on $n$.
    If $n=3$, we have $\msize{\symdiff{\lefttangled(L)}{\lefttangled(L')}} = 1$ since $L$ and $L'$ are distinct.
    Then, it can be observed that the tangled triple in $\symdiff{\lefttangled(L)}{\lefttangled(L')}$ is minimal.

    We assume that the claim holds for $n-1$.
    Let $K$ and $K'$ be the two ladder lotteries obtained from $L$ and $L'$ by removing $\pline(n,L)$ and $\pline(n,L')$, respectively.
    If $K$ and $K'$ are equivalent, then one can find a minimal tangled triple in $\symdiff{\lefttangled(L)}{\lefttangled(L')}$, as stated in the following claim.

    \begin{claim}\label{claim:equiv}
        If $K$ and $K'$ are equivalent, there exists a minimal tangled triple in $\symdiff{\lefttangled(L)}{\lefttangled(L')}$.
    \end{claim}
    \begin{proof}
        Since $K$ and $K'$ are equivalent, any tangled triple in $\symdiff{\lefttangled(L)}{\lefttangled(L')}$ includes $n$.
        Let $\mseq{c_1,c_2,\ldots ,c_r}$ be the sequence of the intersections on $\pline(L,n)$ from its top endpoint to bottom endpoint.
        Since $L$ and $L'$ are distinct, there exists a pair $(c_i,c_j)$ ($i<j$), where $c_i=\cross(n,x)$ and $c_j=\cross(n,y)$, such that $\mset{n,x,y} \in \symdiff{\lefttangled(L)}{\lefttangled(L')}$. 
        Now, among such pairs, we choose the smallest index $i$, where for distinct two pairs $(c_s,c_t)$ and $(c_{s'},c_{t'})$ ($s< t$ and $s'<t'$), we say that $(c_s,c_t)$ is smaller than $(c_{s'},c_{t'})$ if (1) $s<s'$ or (2) $s=s'$ and $t<t'$ hold.
        %%%
        
        If $\mset{n,x,y}$ is minimal, we are done.
        Thus, we assume otherwise that it is not minimal.
        Let $\cross(x,k)$ be the intersection between $c_i = \cross(x,n)$ and $\cross(x,y)$ such that $\cross(x,k)$ is adjacent to $c_i$.\footnote{Note that $k=y$ holds if there is no intersection between $c_i$ and $\cross(x,y)$.}
        \figurename s~\ref{fig:configurations}(a)--(e) shows the candidates of the possible cyclic ladder lotteries induced by only the four pseudolines $\pline(L,x)$, $\pline(L,y)$, $\pline(L,k)$, and $\pline(L,n)$.
        By the case analysis below, we can observe that \figurename s~\ref{fig:configurations}(a)--(d) do not occur.
        \begin{itemize}
        \item[(a)] The order of intersections on $\pline(L,k)$ is $\cross(n,k)$, $\cross(x,k)$, and $\cross(y,k)$.
        See \figurename~\ref{fig:configurations}(a).
        In this case, $\mset{k,n,x} \in \symdiff{\lefttangled(L)}{\lefttangled(L')}$, which contradicts the choice of $i$.
        \item[(b)] The order of intersections on $\pline(L,k)$ is $\cross(L,x)$, $\cross(L,n)$. See \figurename~\ref{fig:configurations}(b).
        In this case, $\mset{n,k,x} \in \symdiff{\lefttangled(L)}{\lefttangled(L')}$ also holds. This contradicts the choice of $i$.
        \item[(c)] The order of intersections on $\pline(L,k)$ is $\pline(L,x)$ and $\pline(L,y)$. See \figurename~\ref{fig:configurations}(c).
        It can be observed that there is no reconfiguration sequence between $L$ and $L'$, since $K$ and $K'$ are equivalent. This contradicts Theorem~\ref{thm:reconfCLL-DV}.
        \item[(d)] The order of the intersections on $\pline(L,k)$ is $\pline(L,y)$, $\pline(L,x)$, and $\pline(L,n)$.
        See \figurename~\ref{fig:configurations}(d).
        In this case, $\mset{n,k,x} \in \symdiff{\lefttangled(L)}{\lefttangled(L')}$ holds, which contradicts the choice of $i$.
        \end{itemize}
        \begin{figure}[tb]
        \centerline{\includegraphics[width=1.0\linewidth]{./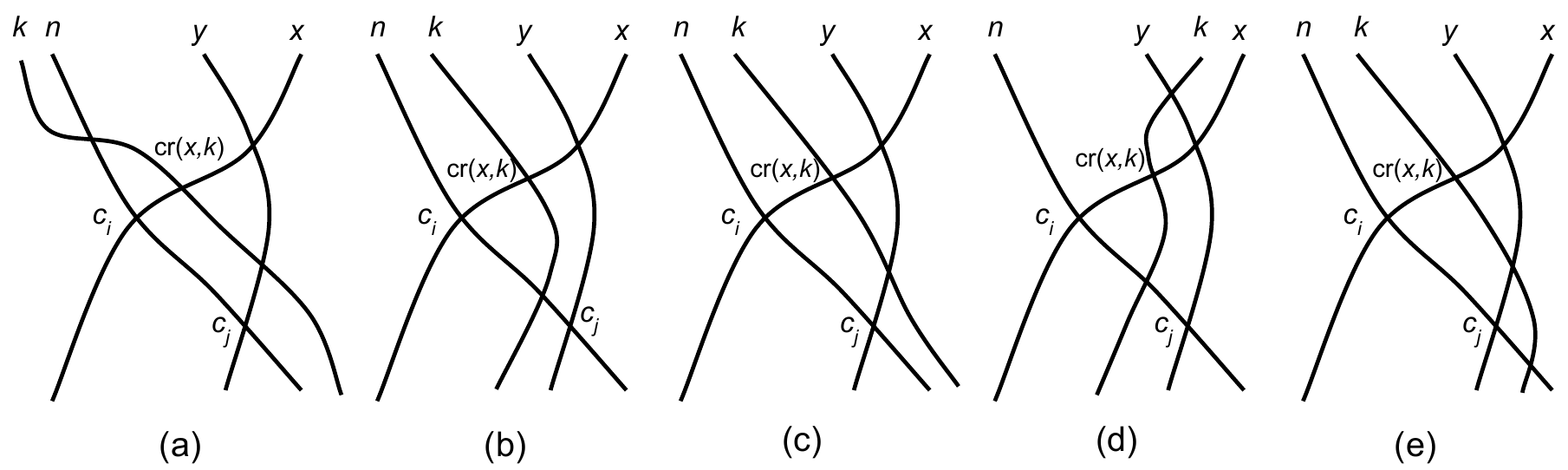}}
        \caption{Illustrations for the candidates of possible cyclic ladder lotteries induced by four pseudolines.}
        \label{fig:configurations}
    \end{figure}    
        Therefore, below, we consider only the cyclic ladder lotteries illustrated in \figurename~\ref{fig:configurations}(e).

    \begin{figure}[tb]    \centerline{\includegraphics[width=0.5\linewidth]{./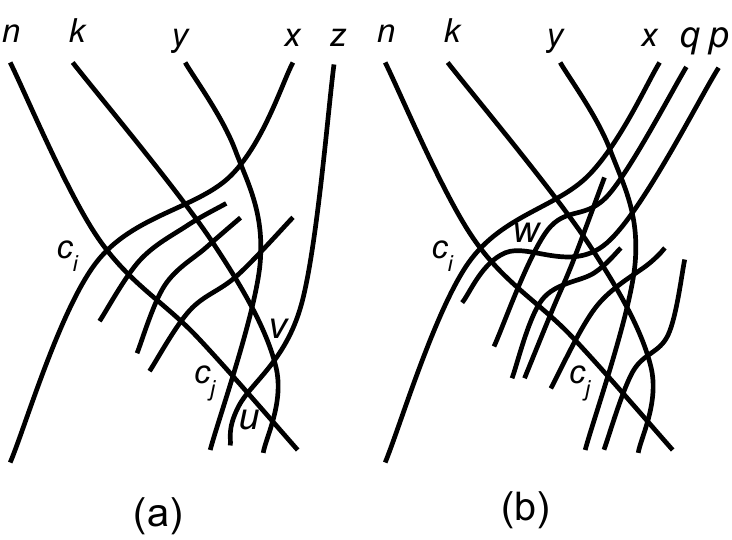}}
        \caption{Illustrations for the proof of Claim~\ref{claim:equiv}. (a) The pseudolines of $\pline(L,n)$, $\pline(L,x)$, $\pline(L,y)$, and $\pline(L,k)$. (b) The case of $\crset(n,k,R) = \emptyset$. (c) The case of $\crset(n,k,R) \neq \emptyset$.}
        \label{fig:equiv}
    \end{figure}
    Then, there exists the intersection $\cross(n,k)$ and $\mset{n,k,x}$ is in $\symdiff{\lefttangled(L)}{\lefttangled(L')}$, since $L$ and $L'$ are reachable from Theorem~\ref{thm:reconfCLL-DV} and $\mset{n,x,y}$ is in $\symdiff{\lefttangled(L)}{\lefttangled(L')}$.
    Also, $\cross(n,k)$ is closer to the bottom endpoint of $\pline(L,n)$ than $c_i$ from the choice of $c_i$.

    Let $R$ be the region enclosed by $\pline(L,n)$, $\pline(L,x)$, and $\pline(L,k)$.
    Note that there is no intersection between $c_i$ and $\cross(k,x)$ from the choice of $k$. 
    Therefore, we have a similar discussion of the proof of Lemma~\ref{lem:top_intersection}, as follows.
    
    First, we assume that $\crset(n,k,R) = \emptyset$ holds. 
    In this case, $R$ includes subpseudolines in parallel.
    See \figurename~\ref{fig:equiv}(a).
    Let $u$ and $v$ be two intersections adjacent to and above $\cross(n,k)$ on $\pline(L,n)$ and $\pline(L,k)$,
respectively.
    Let $z \in [n]$ be an element such that $\pline(L,z)$ includes $u$ and $v$.
    Then, $\mset{n,k,z}$ is minimal and in $\symdiff{\lefttangled(L)}{\lefttangled(L')}$.

    Next, we assume that $\crset(n,k,R) \neq \emptyset$ holds.
    In this case, $R$ includes one or more intersections inside.
    See \figurename~\ref{fig:equiv}(b).
    Note that $(\crset(n,k,R),\prec_{n,k,R})$ is a partially ordered set.
    Let $u$ be a minimal intersection in $\crset(n,k,R)$.
    Let $p,q \in [n]$ be the two distinct elements such that $w=\cross(p,q)$ holds.
    Then, $\mset{n,p,q}$ is minimal and in $\symdiff{\lefttangled(L)}{\lefttangled(L')}$.
    \qed    
    \end{proof}

    From Claim~\ref{claim:equiv}, the claim holds if $K$ and $K'$ are equivalent.
    Now we assume that $K$ and $K'$ are distinct.
    From the hypothesis, there exists a minimal tangled triple $\mset{i,j,k}$ in $\symdiff{\lefttangled(K)}{\lefttangled(K')}$.
    We assume that $\mset{i,j,k} \in \lefttangled(K)$ holds.
    If $\mset{i,j,k}$ is also minimal in either $L$ or $L'$, $\mset{i,j,k}$ is a target triple and we are done.
    Thus, we assume that $\mset{i,j,k}$ is non-minimal in $L$ and $L'$.
    Let $R_L$ (resp.\ $R_{L'}$) be the closed region enclosed by $\pline(L,i)$, $\pline(L,j)$, and $\pline(L,k)$ (resp.\ $\pline(L',i)$, $\pline(L',j)$, and $\pline(L',k)$).
    Then, $\pline(n,L)$ and $\pline(n,L')$ cross the region $R_L$ and $R_{L'}$, respectively.
    We have the following case analysis.
    
    \begin{mycase}{1}{$\pline(L,n)$ crosses with $\pline(L,j)$ when $\pline(L,n)$ enters $R_L$.}
    In this case, if $\pline(L,n)$ has no intersection with $\pline(L,k)$~(see \figurename~\ref{fig:shortest_minimal1}(a)), the triple $\mset{i,j,k}$ is minimal in $L'$~(see \figurename~\ref{fig:shortest_minimal1}(b)). Hence, we assume that there exists the intersection $\cross(k,n)$ in both $L$ and $L'$.
    Then, $R_L$ has the following two cases
    
    \begin{mycase}{1-1}{$\pline(L,n)$ crosses with $\pline(L,i)$ when $\pline(L,n)$ leaves $R_L$.}
        In this case, $R_L$ has the two situations, illustrated in \figurename s~\ref{fig:shortest_minimal1}(c) and (d).
        For the case of \figurename~\ref{fig:shortest_minimal1}(c), $R_{L'}$ has two situations, as illustrated in \figurename s~\ref{fig:shortest_minimal1}(e) and (f). Then, $\mset{i,j,n}$ is minimal in $L$ for \figurename s~\ref{fig:shortest_minimal1}(e) and (f).
        For the case of \figurename~\ref{fig:shortest_minimal1}(d), $R_{L'}$ has only a situation, as illustrated in \figurename~\ref{fig:shortest_minimal1}(g).
        Then, $\mset{i,j,n}$ is also minimal in $L$.
    \end{mycase}
    
    \begin{mycase}{1-2}{$\pline(L,n)$ crosses with $\pline(L,k)$ when $\pline(L,n)$ leaves $R_L$~(\figurename~\ref{fig:shortest_minimal1}(h)).}
        Then, $R_{L'}$ have two situations, as illustrated in \figurename s~\ref{fig:shortest_minimal1}(e) and (f). 
        Then, $\mset{j,k,n}$ is minimal in $L$ for the two cases.
    \end{mycase}
    \end{mycase}
    \begin{figure}[tb]
    \centerline{\includegraphics[width=1.0\linewidth]{./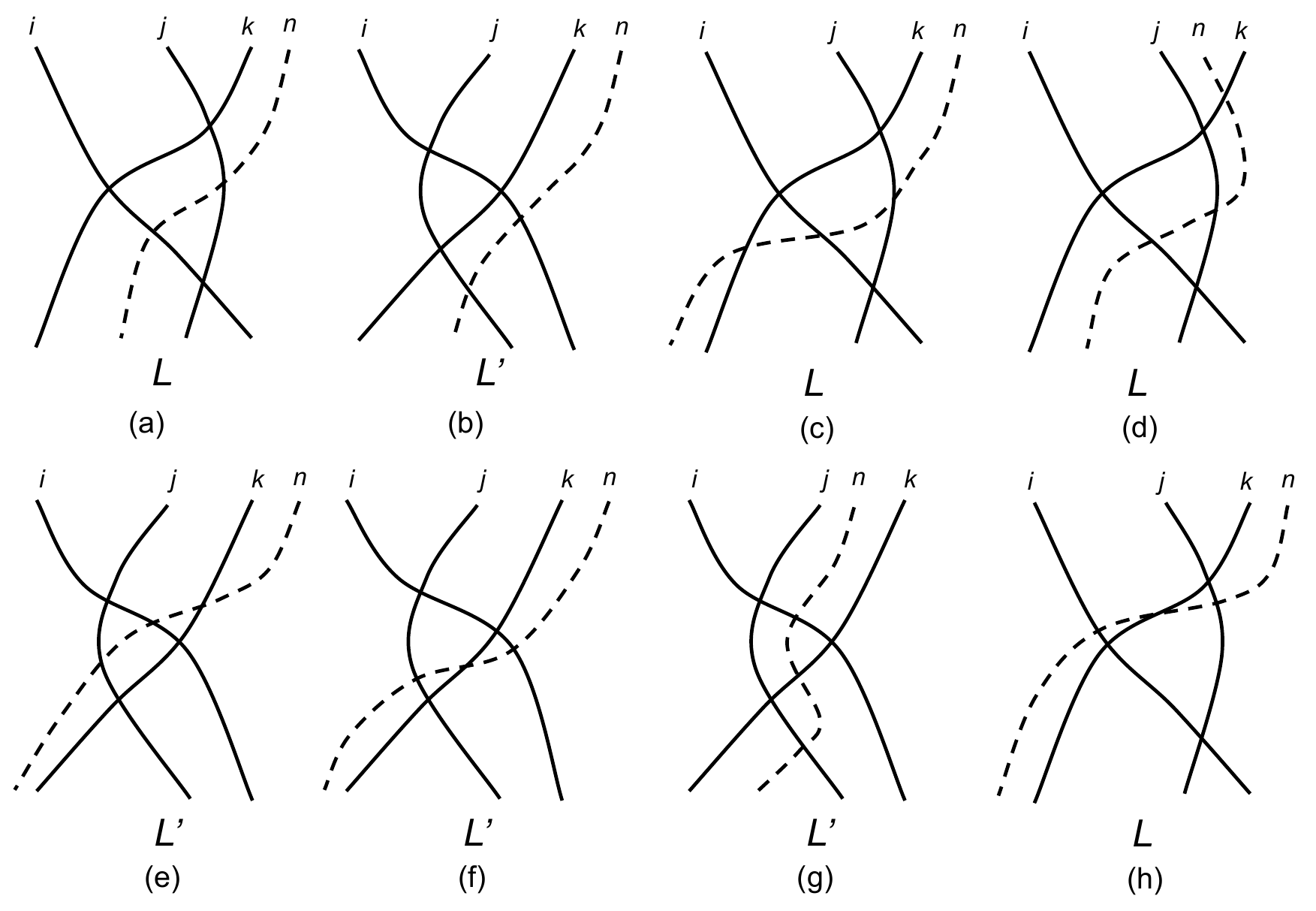}}
    \caption{Illustrations for Case~1.}
    \label{fig:shortest_minimal1}
    \end{figure}

    %%%
    %%% Case (2)
    %%%
    \begin{mycaselast}{2}{$\pline(L,n)$ crosses with $\pline(L,k)$ when $\pline(L,n)$ enters $R_L$.}
    If $\pline(L,n)$ does not cross with $\pline(L,i)$~(\figurename~\ref{fig:shortest_minimal2}(a)), 
    %$\mset{i,j,k}$ is minimal in $L'$
    $\mset{n,j,k}$ is a minimal tangled triple in $L$ and right tangle triple in $L'$~(\figurename~\ref{fig:shortest_minimal2}(b)).
    Similarly, if $\pline(L,n)$ does not cross with $\pline(L,j)$~(\figurename~\ref{fig:shortest_minimal2}(b)), 
    %then $\mset{i,j,k}$ is minimal in $L'$
    then $\mset{i,n,k}$ is a minimal tangled triple in $L$ and a right tangled triple in $L'$~(\figurename~\ref{fig:shortest_minimal2}(d)).
    Hence, in this case, we can assume that $\pline(L,n)$ crosses with all of $\pline(L,i)$, $\pline(L,j)$, and $\pline(L,k)$.
    Then, $R_L$ has the following two subcases.
    
    %
    % Case 2-1
    %
    \begin{mycase}{2-1}{$\pline(L,n)$ crosses with $\pline(L,j)$ when $\pline(L,n)$ leaves $R_L$.}
    In this subcase, $R_L$ has two situations, as illustrated in \figurename~\ref{fig:shortest_minimal3}(a) and (b). 
    For the case of \figurename~\ref{fig:shortest_minimal3}(a), $R_{L'}$ has two situations, as illustrated in \figurename~\ref{fig:shortest_minimal3}(c) and (d).
    Then, $\mset{n,j,k}$ is minimal in $L$.
    For the case of \figurename~\ref{fig:shortest_minimal3}(b), of $R_{L'}$ has two situations, as illustrated in \figurename~\ref{fig:shortest_minimal3}(e) and (f).
    $\mset{n,j,k}$ is also a target triple.
    \end{mycase}
    
    %
    % Case 2-2
    %
    \begin{mycase}{2-2}{$\pline(L,n)$ crosses with $\pline(L,i)$ when $\pline(L,n)$ leaves $R_L$.}
    In this subcase, we have the two situations of $R_L$, as illustrated in \figurename~\ref{fig:shortest_minimal3}(g) and (h). 
    For the case of \figurename~\ref{fig:shortest_minimal3}(g), $R_{L'}$ has two situations, as illustrated in \figurename~\ref{fig:shortest_minimal3}(c) and (d).
    Then, $\mset{i,n,k}$ is minimal in $L$.
    For the case of \figurename~\ref{fig:shortest_minimal3}(h),
    $R_{L'}$ has a situation, as illustrated in \figurename~\ref{fig:shortest_minimal3}(i).
    Then, $\mset{i,n,k}$ is also minimal in $L$.
    \end{mycase}
    \end{mycaselast}
    \begin{figure}[tb]
    \centerline{\includegraphics[width=0.83\linewidth]{./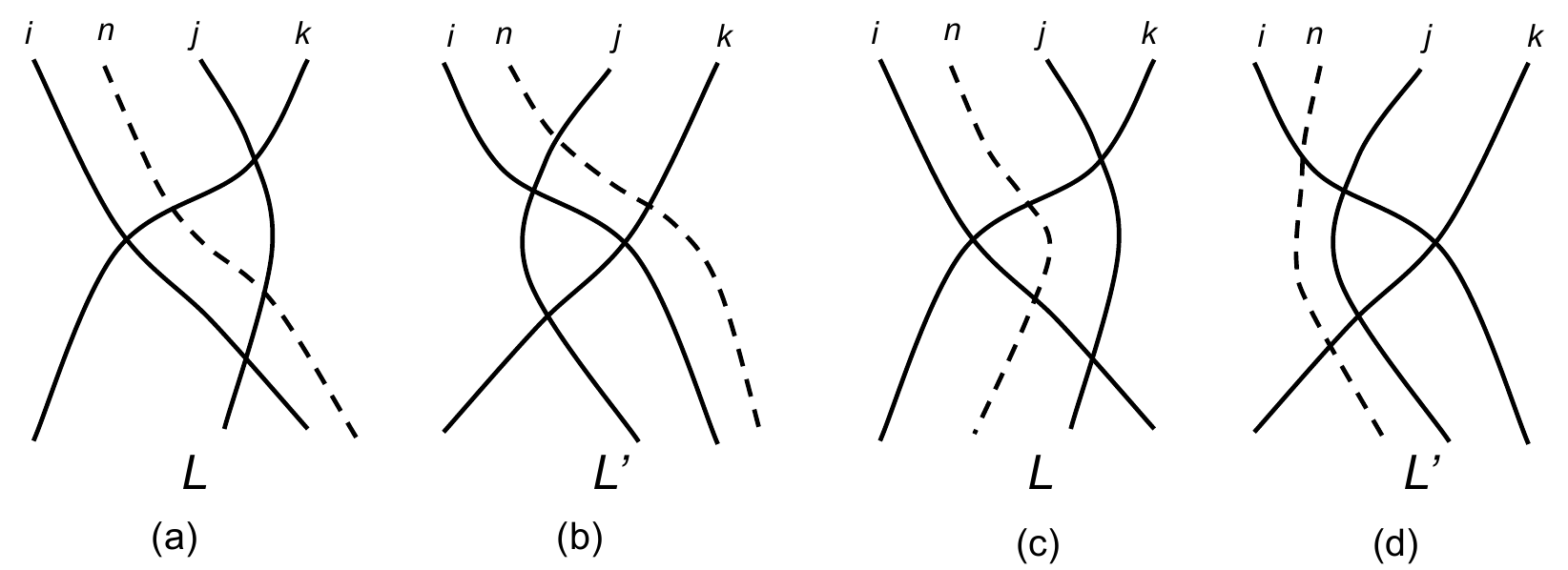}}
    \caption{Illustrations for Case~2.}
    \label{fig:shortest_minimal2}
    \end{figure}
    \begin{figure}[tb]
    \centerline{\includegraphics[width=0.83\linewidth]{./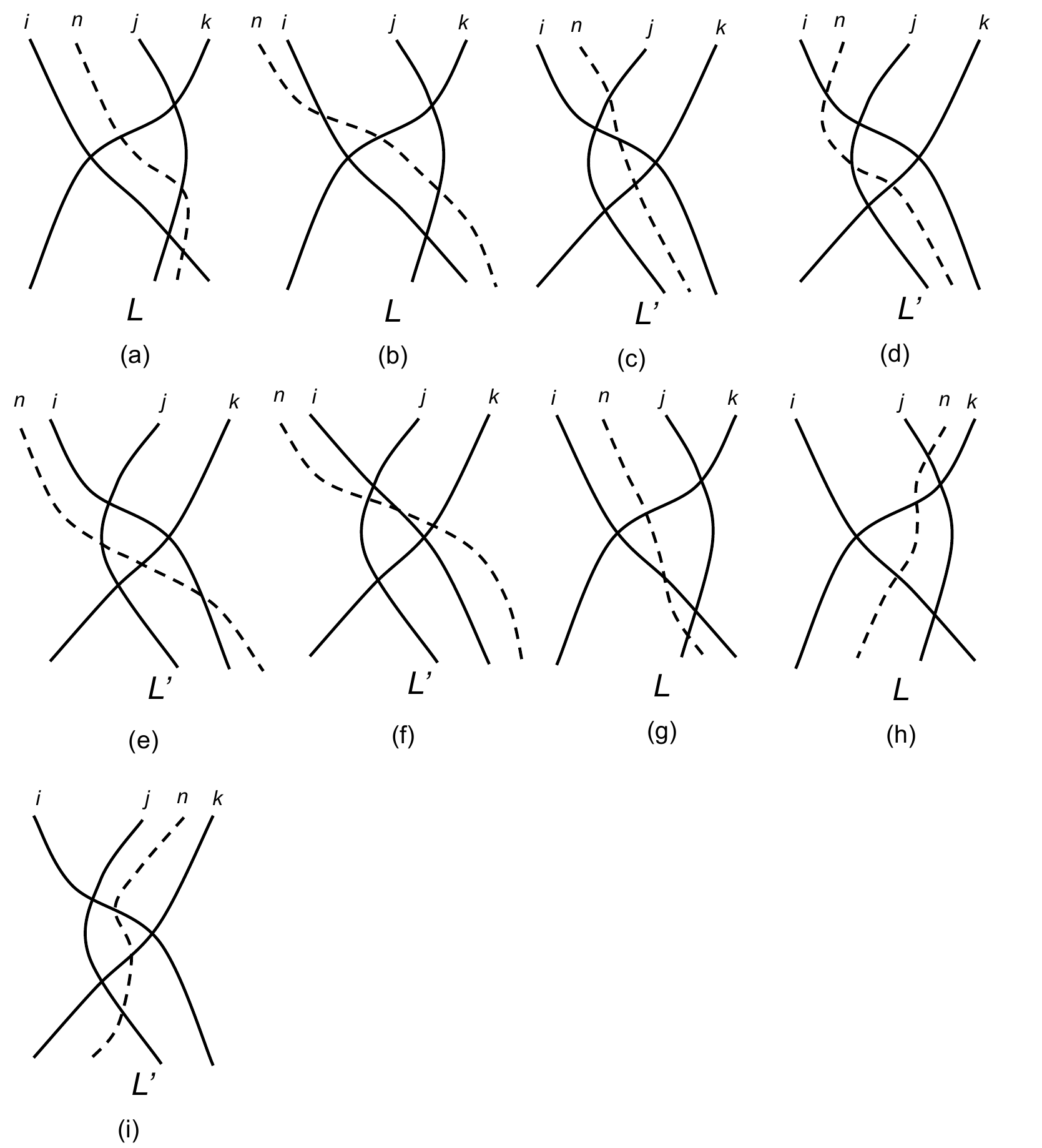}}
    \caption{Illustrations for Cases~(2-1) and (2-2).}
    \label{fig:shortest_minimal3}
    \end{figure}

    In every case above, there exists a minimal tangled triple in $\symdiff{\lefttangled(L)}{\lefttangled(L')}$.
    We have the same case analysis for the case that $\mset{i,j,k} \in \lefttangled(L')$.
    Therefore, the claim holds.
    \qed
\end{proof}

Now, we are ready to show the main theorem of this subsection.

\begin{theorem}
Let $L,L'$ be two distinct cyclic ladder lotteries in $\llsetone(\pi,\bm{y})$, where $\pi\in \frakS_n$ and $\bm{y}$ is an almost optimal displacement vector of $\pi$.
    Then, 
    $$
    \optclldv(L,L') =\msize{\symdiff{\lefttangled(L)}{\lefttangled(L')}}.
    $$
\end{theorem}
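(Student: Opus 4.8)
The lower bound $\optclldv(L,L') \geq \msize{\symdiff{\lefttangled(L)}{\lefttangled(L')}}$ is precisely Lemma~\ref{lem:lower}, so the entire content is to establish the matching upper bound $\optclldv(L,L') \leq \msize{\symdiff{\lefttangled(L)}{\lefttangled(L')}}$. The plan is to build, by induction on $d = \msize{\symdiff{\lefttangled(L)}{\lefttangled(L')}}$, an explicit reconfiguration sequence of length exactly $d$. For the base case $d=0$, Lemma~\ref{lem:lefttangled} gives $L=L'$ and there is nothing to do, so suppose $d \geq 1$.

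For the inductive step I would first invoke Lemma~\ref{lem:min_exist} to obtain a minimal tangled triple $T \in \symdiff{\lefttangled(L)}{\lefttangled(L')}$, and then apply a braid relation to $T$ in $L$ to obtain a cyclic ladder lottery $M$. Since a braid relation preserves the displacement vector and does not create a second crossing between any pair of pseudolines, $M$ again lies in $\llsetone(\pi,\bm{y})$, so $(M,L')$ is a valid instance. Prepending the single move $L \to M$ to any reconfiguration sequence from $M$ to $L'$ gives $\optclldv(L,L') \leq 1 + \optclldv(M,L')$, so it suffices to show $\msize{\symdiff{\lefttangled(M)}{\lefttangled(L')}} = d-1$ and then invoke the induction hypothesis on $M$ and $L'$.

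The crux, and the step I expect to be the main obstacle, is the claim that a braid relation applied to a \emph{minimal} tangled triple $T$ flips the chirality of $T$ alone, i.e.\ $\lefttangled(M) = \symdiff{\lefttangled(L)}{\mset{T}}$. That $T$ itself switches between left and right tangled is immediate from the definition of a braid relation; the content is that no \emph{other} tangled triple changes chirality. A braid relation leaves the set of crossing pairs unchanged and only permutes the three crossings among the elements of $T$, so the chirality of a triple $S \neq T$ can change only if $S$ shares two elements with $T$, say $S = \mset{i,j,\ell}$ with $T = \mset{i,j,k}$; here the chirality of $S$ is governed by the position of $\cross(i,j)$ relative to $\cross(i,\ell)$ and $\cross(j,\ell)$. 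Minimality of $T$ is essential at exactly this point: because no subpseudoline enters the triangular region bounded by $\pline(L,i)$, $\pline(L,j)$, and $\pline(L,k)$, the pseudoline $\pline(L,\ell)$ stays outside this region, so the local reordering of $\cross(i,j)$ inside the triangle cannot move it past $\cross(i,\ell)$ or $\cross(j,\ell)$. A short case check on the position of $\pline(L,\ell)$ relative to the triangle, entirely analogous to those in the proofs of Lemma~\ref{lem:top_intersection} and Lemma~\ref{lem:min_exist}, confirms that the order of the three crossings of $S$ is preserved and hence $S$ keeps its chirality.

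Granting this claim, a direct computation with symmetric differences gives $\symdiff{\lefttangled(M)}{\lefttangled(L')} = \symdiff{(\symdiff{\lefttangled(L)}{\lefttangled(L')})}{\mset{T}}$, and since $T \in \symdiff{\lefttangled(L)}{\lefttangled(L')}$ this set has size $d-1$. The induction hypothesis then yields $\optclldv(M,L') \leq d-1$, whence $\optclldv(L,L') \leq d$. Combining this with the lower bound of Lemma~\ref{lem:lower} proves the asserted equality.
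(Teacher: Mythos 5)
Your proposal is correct and takes essentially the same route as the paper: the lower bound is Lemma~\ref{lem:lower}, and the upper bound is obtained by repeatedly applying braid relations to minimal tangled triples of $\symdiff{\lefttangled(L)}{\lefttangled(L')}$ supplied by Lemma~\ref{lem:min_exist}, each move reducing the symmetric difference by exactly one. The only difference is presentational: you make explicit the induction and the key claim that a braid move on a minimal triple flips the chirality of that triple alone, a point the paper's proof leaves implicit.
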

\begin{proof}
    From Lemma~\ref{lem:min_exist}, one can find a minimal tangled triple $\symdiff{\lefttangled(L)}{\lefttangled(L')}$ if $L$ and $L'$ are distinct.
    By repeatedly applying a braid relation to a minimal tangled triple, we have a reconfiguration sequence between $L$ and $L'$. Note that the length of the reconfiguration sequence is at most $\msize{\symdiff{\lefttangled(L)}{\lefttangled(L')}}$.
    Moreover, from Lemma~\ref{lem:lower}, the claim holds.
    \qed
\end{proof}

By abuse of notation, we denote the length of a shortest reconfiguration sequence between $L, L' \in \llsetopt(\pi,\bm{x})$ by $\optclldv(L,L')$.
The following corollary is immediate from the theorem.

\begin{corollary}
Let $L,L'$ be two distinct cyclic ladder lotteries in $\llsetopt(\pi,\bm{x})$, where $\pi\in \frakS_n$ and $\bm{x}$ is an almost displacement vector of $\pi$.
    Then, 
    $$
    \optclldv(L,L') =\msize{\symdiff{\lefttangled(L)}{\lefttangled(L')}}.
    $$
\end{corollary}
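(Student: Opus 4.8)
The plan is to obtain the corollary directly from the preceding theorem, viewing the optimal case as a special instance of the almost-optimal case. First I would note that since $L,L'\in\llsetopt(\pi,\bm{x})$ the set $\llsetopt(\pi,\bm{x})$ is nonempty, so there is an optimal cyclic ladder lottery realizing $\bm{x}$; by the inclusion $\llsetopt(\pi)\subseteq\llsetone(\pi)$ this lottery also lies in $\llsetone(\pi,\bm{x})$, whence $\bm{x}$ is in particular almost optimal. The theorem therefore applies with $\bm{y}:=\bm{x}$, and since $\llsetopt(\pi,\bm{x})\subseteq\llsetone(\pi,\bm{x})$ our two lotteries $L,L'$ are admissible inputs to it.

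The only point requiring care is that the symbol $\optclldv(L,L')$ is used with two meanings --- shortest braid-relation reconfiguration inside $\llsetopt(\pi,\bm{x})$ and inside $\llsetone(\pi,\bm{x})$ --- and I must check these agree. The key observation is that a braid relation preserves the number of bars: it implements the relation $s_is_{i+1}s_i=s_{i+1}s_is_{i+1}$ of $\widetilde\frakS_n$, whose two sides have equal word length three, so the total crossing count is unchanged. Together with the fact already recorded in the paper that a braid relation fixes the displacement vector, this shows that applying a braid relation to a member of $\llsetopt(\pi,\bm{x})$ yields another member of $\llsetopt(\pi,\bm{x})$. Consequently, for optimal endpoints $L,L'$ every braid-relation reconfiguration sequence inside $\llsetone(\pi,\bm{x})$ stays entirely within $\llsetopt(\pi,\bm{x})$, and conversely every sequence inside $\llsetopt(\pi,\bm{x})$ is trivially one inside $\llsetone(\pi,\bm{x})$; hence the two families of admissible sequences coincide and the two shortest-length values are equal.

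Combining these steps, for distinct $L,L'\in\llsetopt(\pi,\bm{x})$ the value $\optclldv(L,L')$ equals the corresponding value in $\llsetone(\pi,\bm{x})$, which by the theorem equals $\msize{\symdiff{\lefttangled(L)}{\lefttangled(L')}}$. I do not anticipate a genuine obstacle; the sole subtlety is the bar-count invariance under braid relations, which is what guarantees that optimality is maintained along any reconfiguration and therefore that passing from $\llsetone$ to the smaller set $\llsetopt$ cannot lengthen a shortest sequence.
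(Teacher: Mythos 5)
Your proof is correct and follows essentially the same route as the paper: the paper simply declares the corollary immediate from the preceding theorem (after flagging, by abuse of notation, the dual use of $\optclldv$), which is exactly your specialization $\bm{y}:=\bm{x}$ together with the inclusion $\llsetopt(\pi,\bm{x})\subseteq\llsetone(\pi,\bm{x})$. Your extra check that braid relations preserve the bar count---so that any reconfiguration sequence in $\llsetone(\pi,\bm{x})$ between optimal endpoints stays inside $\llsetopt(\pi,\bm{x})$ and the two shortest-length quantities coincide---is precisely the justification the paper leaves implicit, and it is valid.
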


\subsection{Enumeration}
In this subsection, we consider the problem of enumerating all the optimal cyclic ladder lotteries in $\llsetopt(\pi, \bm{x})$.
The formal description of the problem is as follows.

\medskip
\noindent
\textbf{Problem:} Enumeration of optimal cyclic ladder lotteries with optimal displacement vector~(\enumCLLDV) \\ %\cite{E15}\\
\textbf{Instance:} 
A permutation $\pi$ and an optimal displacement vector $\bm{x}$ of $\pi$.\\
\textbf{Output:}
All the cyclic ladder lotteries in $\llsetopt(\pi, \bm{x})$ without duplication.
\medskip

As in the previous subsection,
we consider the enumeration problem for $\llsetone(\pi,\bm{y})$, where $\bm{y}$ is an almost optimal displacement vector of $\pi$ and propose an enumeration algorithm for $\llsetone(\pi,\bm{y})$,
since the algorithm can be applied to \enumCLLDV.
From Theorem~\ref{thm:reconfCLL-DV},
the reconfiguration graph of $\llsetone(\pi,\bm{y})$ under braid relations is connected.
This implies that the reverse search~\cite{AvisF96} can be applied to enumerate them.
In this subsection, we propose an enumeration algorithm for $\llsetone(\pi,\bm{y})$ using the reverse search.
In the reverse search, we first define a rooted tree structure on a set of enumerating objects.
By traversing the tree structure,
we enumerate all the enumerating objects.
We first define the root object and define a rooted tree structure on $\llsetone(\pi,\bm{y})$.
Then, we describe the algorithm that traverses the tree structure.

Let $\pi \in \frakS_n$ 
and let $\bm{y}$ be an almost optimal displacement vector of $\pi$.
A cyclic ladder lottery $L$ in $\llsetone(\pi,\bm{y})$ is 
a \emph{root} of $\llsetone(\pi, \bm{y})$
if $\lefttangled(L) = \emptyset$ holds.
If a cyclic ladder lottery in $\llsetone(\pi, \bm{y})$ has no tangled triple,
then $\llsetone(\pi, \bm{y})$ contains only one ladder lottery.
For convenience, in the case, we define the ladder lottery as a root.
From Lemma~\ref{lem:lefttangled}, we have the following corollary, which states the uniqueness of a root of $\llsetone(\pi,\bm{y})$.

\begin{corollary}
\label{cor:root_is_unique}
Let $\pi \in \frakS_n$ 
and let $\bm{y}$ be an almost optimal displacement vector of $\pi$.
Then, the root $L_0$ of $\llsetone(\pi, \bm{y})$ is unique.
\end{corollary}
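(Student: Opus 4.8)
The plan is to derive the corollary as a direct consequence of Lemma~\ref{lem:lefttangled}, which says that any cyclic ladder lottery in $\llsetone(\pi,\bm{y})$ is completely determined by its set of left tangled triples. First I would record the essential observation that settles uniqueness: if $L_0$ and $L_0'$ are both roots of $\llsetone(\pi,\bm{y})$, then by the definition of a root we have $\lefttangled(L_0)=\emptyset=\lefttangled(L_0')$, so in particular $\lefttangled(L_0)=\lefttangled(L_0')$. Invoking Lemma~\ref{lem:lefttangled} with these two ladder lotteries immediately yields $L_0=L_0'$. This is the entire content of the statement once one knows that a root exists, and it is exactly the use of Lemma~\ref{lem:lefttangled} that the lead-in sentence anticipates.

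To make the phrase ``the root $L_0$'' meaningful, I would also confirm that at least one root exists. Since $\bm{y}$ is almost optimal, $\llsetone(\pi,\bm{y})$ is nonempty, and it is finite because every member has exactly $\inv(\bm{y})$ crossings; hence I can choose $L\in\llsetone(\pi,\bm{y})$ minimizing $\msize{\lefttangled(L)}$ and claim it is a root. Suppose instead $\lefttangled(L)\neq\emptyset$. The reduction step is to apply a braid relation to a minimal left tangled triple $\mset{i,j,k}$ and thereby produce a cyclic ladder lottery with strictly fewer left tangled triples, contradicting minimality. The key supporting observation is that the braid relation on a \emph{minimal} tangled triple is purely local: because the region enclosed by $\mset{i,j,k}$ contains no subpseudoline, the move only reorders the three crossings among $i,j,k$ and does not slide any of them past a crossing involving a fourth pseudoline. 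Consequently the left/right status of every triple other than $\mset{i,j,k}$ is unchanged, while $\mset{i,j,k}$ itself flips from left to right, so $\msize{\lefttangled(\cdot)}$ drops by exactly one.

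The point requiring the most care, and the step I expect to be the main obstacle, is guaranteeing that $\lefttangled(L)\neq\emptyset$ actually provides a \emph{minimal} left tangled triple to flip: a priori the innermost tangled triples could all be right tangled while only non-minimal triples are left tangled. My plan to resolve this is an innermost descent: take a left tangled triple whose enclosed region $R$ is inclusion-minimal among regions of left tangled triples, and exploit the partial order $\prec_{i,j,R}$ on the crossings inside $R$ exactly as in the proof of Lemma~\ref{lem:top_intersection} to locate a strictly smaller left tangled triple, contradicting minimality of $R$. Alternatively, and more cleanly, existence may simply be deferred to the reverse-search parent function introduced in the sequel, whose strict monovariant forces the iteration of braid relations on minimal left tangled triples to terminate at a configuration with $\lefttangled=\emptyset$. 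Either route furnishes a root, after which Lemma~\ref{lem:lefttangled} pins it down uniquely and completes the proof.
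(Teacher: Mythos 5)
Your first paragraph is precisely the paper's proof: the corollary is stated as an immediate consequence of Lemma~\ref{lem:lefttangled}, since any two roots $L_0, L_0'$ satisfy $\lefttangled(L_0)=\emptyset=\lefttangled(L_0')$, and the lemma then forces $L_0=L_0'$. For the statement as given, which asserts only uniqueness, this one-line argument is the entire intended content, and your proposal matches it.

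The remainder of your proposal (existence of a root) goes beyond what this corollary claims, and the paper deliberately handles it elsewhere: existence of a minimal left tangled triple in any non-root lottery is Corollary~\ref{cor:minimal_left}, obtained by specializing the case-analysis-heavy Lemma~\ref{lem:min_exist}, and the descent to the root is Lemma~\ref{lem:toroot} together with the explicit construction in Lemma~\ref{lem:root_construction}. You correctly isolate the genuine difficulty --- that an innermost left tangled triple need not be minimal, so one must still produce a minimal one --- but your two suggested resolutions are uneven: the innermost-descent sketch would require work comparable to Lemma~\ref{lem:min_exist} to be made rigorous, and the alternative of ``deferring to the reverse-search parent function'' is circular, because the parent $\parent(L)$ is only well defined once one already knows that a lottery with $\lefttangled(L)\neq\emptyset$ contains a \emph{minimal} left tangled triple. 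None of this affects the correctness of your proof of the corollary itself; it only means the existence half should be cited from (or proved with the effort of) the paper's later results rather than asserted from your sketch.
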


One can show that any cyclic ladder lottery except the root has a minimal left tangled triple, as stated in the following corollary.

\begin{corollary}\label{cor:minimal_left}
Let $\pi \in \frakS_n$ and let $\bm{y}$ be an almost optimal displacement vector of $\pi$.
Suppose that a cyclic ladder lottery in $\llsetone(\pi,\bm{y})$ contains one or more tangled triples.
Then, any 
$L \in \mset{L \in \llsetone(\pi,\bm{y}) \mid \lefttangled(L) \neq \emptyset}$
has a minimal left tangled triple.
\end{corollary}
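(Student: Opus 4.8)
The plan is to prove Corollary~\ref{cor:minimal_left} by contraposition: I will show that if a cyclic ladder lottery $L \in \llsetone(\pi,\bm{y})$ has no minimal \emph{left} tangled triple, then $L$ must be the root, i.e.\ $\lefttangled(L) = \emptyset$. First I would invoke the definition of a braid relation, which transforms a minimal left tangled triple into a minimal right one (and vice versa). The key structural fact to exploit is the partial order $\prec_{i,j,R}$ machinery together with the observation, already established in the discussion preceding Lemma~\ref{lem:top_intersection}, that whenever $L$ contains at least one intersection there is an element $i$ with $\pline(L,i)$ and $\pline(L,(i+1)\bmod n)$ crossing, and more generally that any region enclosed by crossing pseudolines contains a \emph{minimal} tangled triple (a minimal element under the appropriate order is a minimal tangled triple).

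The core argument I would carry out is this. Suppose $\lefttangled(L) \neq \emptyset$, so $L$ contains a left tangled triple; I must produce a \emph{minimal} left tangled triple. Take any left tangled triple $\mset{i,j,k}$ and let $R$ be the region it encloses. If $\mset{i,j,k}$ is already minimal we are done. Otherwise $R$ contains a subpseudoline, and I would descend into $R$ to find a smaller tangled triple, arguing that the orientation (left versus right) is preserved under this descent. The cleanest way is to show that a \emph{minimal} tangled triple lying inside or on the boundary of a left tangled region is itself left tangled. This is where the cyclic orientation of the three crossings $p,q,r$ (counterclockwise for left, clockwise for right) must be tracked carefully as one passes to an enclosed minimal triple.

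The main obstacle I anticipate is precisely this orientation-preservation step: verifying that minimality can be achieved \emph{within} the left-tangled class rather than merely finding some minimal tangled triple of unknown handedness. I would likely split into the two cases of Lemma~\ref{lem:top_intersection} ($\crset(i,j,R)=\emptyset$ versus $\crset(i,j,R)\neq\emptyset$), take a minimal element of the poset $(\crset(i,j,R),\prec_{i,j,R})$ (or a parallel subpseudoline when the crossing set is empty), form the induced triple with one of the boundary pseudolines of $R$, and check that the counterclockwise order of its three intersections is inherited from $R$ being left tangled. Alternatively, a slicker route leverages the uniqueness of the root (Corollary~\ref{cor:root_is_unique}) and Lemma~\ref{lem:lefttangled}: since $L$ is not the root, $\lefttangled(L)\neq\lefttangled(L_0)=\emptyset$, so $\symdiff{\lefttangled(L)}{\lefttangled(L_0)} = \lefttangled(L) \neq \emptyset$, and then Lemma~\ref{lem:min_exist} furnishes a minimal tangled triple in this symmetric difference, which must lie in $\lefttangled(L)$ and hence is a minimal \emph{left} tangled triple. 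I expect this second route to be the shorter and more robust one, reducing the corollary almost immediately to the already-proved Lemma~\ref{lem:min_exist}, so I would present that as the primary proof and keep the direct descent argument only as intuition.
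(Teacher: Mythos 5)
Your primary route is exactly the paper's proof: the paper takes $L'$ in Lemma~\ref{lem:min_exist} to be the root $L_0$, so that $\symdiff{\lefttangled(L)}{\lefttangled(L_0)} = \lefttangled(L)$ and the minimal tangled triple furnished by that lemma is a minimal left tangled triple of $L$. The direct descent/orientation-preservation argument you sketch as a fallback is unnecessary; the reduction you correctly identify as the slicker route is the paper's entire proof.
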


\begin{proof}
    In the statement of Lemma~\ref{lem:min_exist},
    assume that $L'$ is a root $L_0$ of $\llsetone(\pi,\bm{y})$.
    Then, since $\lefttangled(L_0) = \emptyset$, the statement of this corollary holds.
\qed
\end{proof}

Let $L$ be a cyclic ladder lottery in $\llsetone(\pi,\bm{y})$.
Let $\mset{i,j,k}$ and $\mset{i',j',k'}$ be two distinct left tangled triples in $L$,
and suppose that $i<j<k$ and $i'<j'<k'$ hold.
We say that $\mset{i,j,k}$ is \emph{smaller} than  $\mset{i',j',k'}$
if either $i < i'$ holds, $i=i'$ and $j<j'$ hold, or $i=i'$, $j=j'$, and $k<k'$ hold.
The \emph{parent}, denoted by $\parent(L)$, of 
$L \in \mset{L \in \llsetone(\pi,\bm{y}) \mid \lefttangled(L) \neq \emptyset}$
is the cyclic ladder lottery obtained from $L$ by applying a braid relation to the smallest minimal left tangled triple in $L$.
We say that $L$ is a \emph{child} of $\parent(L)$.
$L \in \mset{L \in \llsetone(\pi,\bm{y}) \mid \lefttangled(L) \neq \emptyset}$.
Moreover, the parent is unique from its definition.
\begin{lemma}\label{lem:toroot}
Let $\pi \in \frakS_n$ and let $\bm{y}$ be an almost optimal displacement vector of $\pi$.
Let $L$ be a cyclic ladder lottery in $\llsetone(\pi,\bm{y})$.
By repeatedly finding the parent from $L$,
we have the root $L_0$ of $\llsetone(\pi,\bm{y})$.
\end{lemma}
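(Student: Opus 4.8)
The plan is to exhibit a nonnegative integer statistic that strictly decreases under the parent operation, so that iteration must terminate, and to show that it can only terminate at the root. The statistic I would use is $\msize{\lefttangled(L)}$, the number of left tangled triples.

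First I would record that the parent is always defined away from the root: by Corollary~\ref{cor:minimal_left}, every $L \in \llsetone(\pi,\bm{y})$ with $\lefttangled(L)\neq\emptyset$ has a minimal left tangled triple, hence a smallest one, so $\parent(L)$ exists whenever $L$ is not a root. Thus the iteration $L,\parent(L),\parent(\parent(L)),\dots$ continues until (and only until) a ladder lottery with empty $\lefttangled$ is reached, which by definition is a root.

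The key step is to show that applying a braid relation to a minimal left tangled triple $\mset{i,j,k}$ reduces $\msize{\lefttangled(L)}$ by exactly one. Here I would use that the \emph{set} of tangled triples is common to every member of $\llsetone(\pi,\bm{y})$: a triple is tangled precisely when its three pseudolines pairwise cross, and which pairs cross is determined by $\bm{y}$, since in $\llsetone$ each pair crosses at most once and so $\pline(L,i),\pline(L,j)$ cross if and only if $c_{ij}(\bm{y})\neq 0$. Consequently only the left/right orientation of each tangled triple can differ among members of $\llsetone(\pi,\bm{y})$, and $\msize{\lefttangled(L)}$ plus the number of right tangled triples is a fixed constant. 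It then suffices to argue that a braid relation on a minimal triple flips only that triple's orientation. Because $\mset{i,j,k}$ is minimal, the enclosed region contains no other subpseudoline, so the move rearranges the three crossings $\cross(i,j),\cross(i,k),\cross(j,k)$ entirely within a region that no fourth pseudoline enters. Hence for any fourth line $\ell$, the two crossings $\cross(i,j),\cross(i,k)$ on $\pline(L,i)$ may swap order but neither passes the external crossing $\cross(i,\ell)$; the same holds along $\pline(L,j)$ and $\pline(L,k)$. Therefore the left/right status of every tangled triple other than $\mset{i,j,k}$ is unchanged, while $\mset{i,j,k}$ itself flips from left to right, giving $\msize{\lefttangled(\parent(L))} = \msize{\lefttangled(L)}-1$.

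Finally I would conclude by descent: since $\msize{\lefttangled(\cdot)}$ is a nonnegative integer that strictly decreases at each step, after exactly $\msize{\lefttangled(L)}$ iterations we reach a cyclic ladder lottery $L^*$ with $\lefttangled(L^*)=\emptyset$, which is a root; by Corollary~\ref{cor:root_is_unique} it is the unique root $L_0$. The step I expect to require the most care is the locality claim in the key step, namely verifying rigorously that a braid relation on a minimal triple alters the orientation of no other tangled triple; this rests squarely on the minimality hypothesis that no subpseudoline lies inside the enclosed region, together with tracking how the three repositioned crossings interleave with crossings involving pseudolines outside the triple.
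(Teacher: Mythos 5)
Your proof is correct and takes essentially the same approach as the paper: a descent argument on the statistic $\msize{\lefttangled(L)}$, with existence of a minimal left tangled triple supplied by Corollary~\ref{cor:minimal_left} and termination at the unique root via Corollary~\ref{cor:root_is_unique}. The only difference is that the paper asserts the key identity $\msize{\lefttangled(\parent(L))} = \msize{\lefttangled(L)}-1$ without justification, whereas you supply the locality argument (no fourth pseudoline enters the minimal region, so only the flipped triple's orientation changes) that makes this step rigorous.
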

%
% Proof
%
\begin{proof}
If $\lefttangled(L) = \emptyset$ hold, $L$ is the root.
Hence, we assume otherwise that $\lefttangled(L) \neq \emptyset$ holds.
From Corollary~\ref{cor:minimal_left},
$L$ has one or more minimal left tangled triples.
Let $\mset{i,j,k}$ be the smallest one among them.
By applying a braid relation to $\mset{i,j,k}$,
we have 
the parent $\parent(L)$ of $L$.
Note that $\msize{\lefttangled(\parent(L))} = \msize{\lefttangled(L)} -1$ holds.
Here, it is easy to observe that $\msize{\lefttangled(L_0)} = 0$ for any root $L_0$ of $\llsetone(\pi,\bm{y})$ and $\msize{\lefttangled(M)} > 0$ for any $M \in \mset{L \in \llsetone(\pi,\bm{x}) \mid \lefttangled(L) \neq \emptyset}$.
Hence, the claim holds.
\qed
\end{proof}

Lemma~\ref{lem:toroot} implies that there always exists the root of $\llsetone(\pi,\bm{y})$.
Hence, by repeatedly finding the parent from $L$, we finally obtain the root of $\llsetone(\pi,\bm{y})$.
The \emph{parent sequence} of $L \in \llsetone(\pi,\bm{y})$ is the sequence $\mseq{L_1,L_2,\ldots ,L_p}$ such that 
\begin{listing}{aaa}
\item[(1)] $L_1$ is $L$ itself,
\item[(2)] $L_i = \parent(L_{i-1})$ for $i=2,3,\ldots ,p$, and 
\item [(3)] $L_p$ is the root $L_0$ of $\llsetone(\pi,\bm{y})$.
\end{listing}
Note that the parent sequence of the root is $\mseq{L_0}$.
The \emph{family tree} of $\llsetone(\pi,\bm{y})$ is the tree structure obtained by merging the parent sequences of all the cyclic ladder lotteries in $\llsetone(\pi,\bm{y})$.
In the family tree of $\llsetone(\pi,\bm{y})$,
the root node is the root $L_0$ of $\llsetone(\pi,\bm{y})$, each node is a cyclic ladder lottery in $\llsetone(\pi,\bm{y})$, and each edge is a parent-child relationship of two ladder lotteries in $\llsetone(\pi,\bm{y})$.

Now, we design an enumeration algorithm of all the cyclic ladder lotteries in $\llsetone(\pi,\bm{y})$.
The algorithm enumerates them by traversing the family tree of $\llsetone(\pi,\bm{y})$ starting from the root $L_0$.
To traverse the family tree, we design the following two algorithms: (1) an algorithm that constructs the root $L_0$ of $\llsetone(\pi,\bm{y})$
and (2) an algorithm that enumerates all the children of a given cyclic ladder lottery in $\llsetone(\pi,\bm{y})$.
Note that, if we have the above two algorithms,
starting from the root, we can traverse the family tree by recursively applying the child-enumeration algorithm.

The outline of how to construct the root is as follows.
First, we construct a cyclic ladder lottery from $\pi$ and $\bm{y}$, which may not be the root in $\llsetone(\pi,\bm{y})$.
Next, from the constructed cyclic ladder lottery,
by repeatedly finding parents, we obtain the root.

\begin{lemma}\label{lem:root_construction}
Let $\pi$ and $\bm{y}$ be a permutation in $\frakS_n$ and an almost optimal displacement vector of $\pi$, respectively.
One can construct the root $L_0$ of $\llset(\pi,\bm{y})$ in $\order{n+ (\inv(\bm{y}))^3}$ time.
\end{lemma}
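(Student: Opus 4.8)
The plan is to realize the two-step outline stated just above the lemma and then account for the running time. Write $m=\inv(\bm{y})$, and recall that every cyclic ladder lottery in $\llsetone(\pi,\bm{y})$ has exactly $m$ intersections and that a braid relation changes neither the displacement vector nor the set of crossing pairs, so $m$ stays fixed throughout the computation.

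First I would construct an initial $L_1\in\llsetone(\pi,\bm{y})$ directly from $(\pi,\bm{y})$. The crossing numbers $c_{ij}(\bm{y})$ are determined by $\bm{y}$, and since $\bm{y}$ is almost optimal we have $\msize{c_{ij}(\bm{y})}\le 1$ for every pair; thus the set of crossing pairs has size exactly $m$ and can be listed in $\order{n+m}$ time, since each line $i$ only needs the partners it actually meets. I would then realize these prescribed crossings as a wiring diagram on the cylinder by a single top-to-bottom sweep that inserts each crossing once, obtaining an arrangement in $\llsetone(\pi,\bm{y})$ in $\order{n+m}$ time. The arrangement is stored so that each $\pline(L_1,i)$ carries the top-to-bottom ordered list of its crossings as a doubly linked list, which supports $\order{1}$ local edits.

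Next I would ascend to the root. By Lemma~\ref{lem:toroot}, repeatedly replacing the current arrangement by its parent reaches the root $L_0$, which is unique by Corollary~\ref{cor:root_is_unique}; moreover the proof of Lemma~\ref{lem:toroot} shows each parent step lowers $\msize{\lefttangled(L)}$ by exactly one. Hence the number of braid relations performed equals $\msize{\lefttangled(L_1)}$, which is at most the total number of tangled triples, i.e.\ the number of triangles of the crossing graph on $[n]$ whose $m$ edges are the crossing pairs; a graph with $m$ edges has $\order{m^{3/2}}$ triangles, so at most $\order{m^{3/2}}$ steps occur (even the cruder bound $\order{m^2}$ would suffice). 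Each step requires the smallest minimal left tangled triple of the current arrangement $L$, which exists by Corollary~\ref{cor:minimal_left}; it can be found by scanning, for every crossing $\cross(i,j)$ of $L$, the crossings immediately below it on $\pline(L,i)$ and $\pline(L,j)$ to detect empty (minimal) triangles, testing their orientation, and keeping the smallest left one, all in $\order{m}$ time, after which the braid relation is applied by an $\order{1}$ edit of the three lists. Multiplying, the ascent costs $\order{m^{3/2}}\cdot\order{m}=\order{m^{5/2}}$, and together with the $\order{n+m}$ setup this gives $\order{n+m^3}$.

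The steps needing the most care are not the existence of the procedure but the running-time accounting. The hard part will be (i) verifying that the single sweep actually produces an arrangement in $\llsetone(\pi,\bm{y})$, that is, that the prescribed crossing set is simultaneously realizable with each pair crossing at most once, where almost optimality of $\bm{y}$ is used; and (ii) checking that each parent can be located and applied within the $\order{m}$ per-step budget while the linked-list representation stays consistent. With the triangle-counting bound on the number of steps in hand, the total is $\order{n+m^3}$ as claimed.
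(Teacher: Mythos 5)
Your two-phase plan --- construct some member of $\llsetone(\pi,\bm{y})$ from $(\pi,\bm{y})$, then climb parent pointers to the unique root --- is exactly the skeleton of the paper's proof, and your analysis of the ascent is sound: the number of parent steps equals $\msize{\lefttangled(L_1)}$, which you bound by triangle-counting as $\order{m^{3/2}}$ (sharper than the paper's $\order{m^2}$, though either fits), and the $\order{m}$ per-step search for the smallest minimal left tangled triple matches the paper's claim. The genuine gap is in your first phase. Your assertion that the $m$ crossing pairs can be listed in $\order{n+m}$ time ``since each line $i$ only needs the partners it actually meets'' is circular: you do not know which partners a line meets without computing something, and evaluating $c_{ij}(\bm{y})$ over all pairs costs $\Theta(n^2)$, which exceeds the stated budget $\order{n+(\inv(\bm{y}))^3}$ whenever $m$ is small relative to $n$ (e.g.\ $\pi$ a single cyclically adjacent transposition, $m=1$, $n$ large). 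Worse, even granted the list of crossing pairs, your ``single top-to-bottom sweep that inserts each crossing once'' presupposes an insertion order under which the prescribed crossings are simultaneously realizable with each pair crossing at most once; you flag this yourself as the part needing verification, but it is the crux of the construction, not a routine check, and nothing in your proposal supplies it.

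The paper's proof closes exactly this hole by never pre-computing the crossing pairs. It builds the arrangement greedily: maintain the current permutation $\pi^k$ (initially the identity) and a residual displacement vector $\bm{y}^k$ (initially $\bm{y}$); whenever two cyclically adjacent positions $i,i+1$ satisfy $y^k_i > y^k_{i+1}$, insert the crossing of the two lines there, swap the two entries of $\pi^k$, and update the two residual displacements by $\pm 1$. A list of candidate positions is initialized in $\order{n}$ time and maintained in $\order{1}$ time per insertion (only positions $i-1,i,i+1$ can change status), and the process terminates after exactly $\inv(\bm{y})$ insertions with $\pi^k=\pi$, so realizability is automatic because the arrangement is constructed rather than prescribed. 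If you replace your phase~1 with this sweep --- or prove both an output-sensitive crossing-pair listing and a realizability lemma for your prescribed-crossings sweep --- the rest of your argument goes through and in fact yields the slightly stronger bound $\order{n+(\inv(\bm{y}))^{5/2}}$.
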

\begin{proof}
We explain how to construct a cyclic ladder lottery from given $\pi=(\pi_1,\pi_2,\ldots ,\pi_n)$ and $\bm{y}=(y_1,y_2,\ldots ,y_n)$.
Here, we describe all the indices in modulo $n$ and omit ``mod $n$'' for readability.
Let $i$ be an element in $[n]$.
$\pline(L,i)$ and $\pline(L,i+1)$ cross if $y_i > y_{i+1}$ holds.
According to the observation,
an algorithm repeats to insert an intersection and update the current permutation and displacement vector.
First, we set $\pi^{0} = (1,2,\ldots ,n)$ and $\bm{y}^0 = \bm{y}$.
We repeat the following process.

\vspace{2mm}
\begin{listing}{aaa}
\item[Step~1:] Let $\pi^{k} = (\pi_1^{k},\pi_2^{k}, \ldots ,\pi_n^{k})$
and $\bm{y}^{k} = (y_1^{k},y_2^{k}, \ldots ,y_n^{k})$.
\item[Step~2:] Let $i$ be an index such that $\pi_i^{k}$ crosses with $\pi_{i+1}^{k}$ and create the intersection of the two pseudolines of $i$ and $i+1$.
\item[Step~3:] Update the permutation and displacement vector so that $$\pi^{k+1} = (\pi_1^{k},\pi_2^{k}, \ldots ,\pi_{i-1}^{k},\pi_{i+1}^{k},\pi_{i}^{k},\pi_{i+2}^{k},\ldots ,\pi_{n}^{k})$$
and
$$\bm{y}^{k+1} = (y_1^{k},y_2^{k}, \ldots ,
y_{i-1}^{k},y_{i+1}^{k}+1,y_{i}^{k}-1,y_{i+2}^{k}
,\ldots ,y_{n}^{k}).$$
\item[Step~4:] If $\pi^{k+1} = \pi$, the algorithm halts. Otherwise, the algorithm goes back to Step~1.
\end{listing}
\vspace{2mm}

Now, we estimate the running time of the above algorithm.
First, we list all the indices such that $y_i > y_{i+1}$ holds in $\order{n}$ time.
Step~2 is applied to any index in the list.
After Step~3, we update the list. 
This update can be done in  $\order{1}$ time, 
since possible new entries to the list are 
only $i+1$ and $i-1$.
Hence, each iteration takes $\order{1}$ time
and the algorithm has $\order{\inv(\bm{y})}$-iterations.
Recall that $\inv(\bm{y})$ is the number of intersections in a cyclic ladder lottery in $\llsetone(\pi,\bm{y})$.
Hence, we can construct a cyclic ladder lottery in $\llset(\pi,\bm{y})$ in $\order{n+\inv(\bm{y})}$ time.
Next, to obtain the root, we repeat to find parents at most $(\inv(\bm{y}))^2$ times.
Each process to find a parent takes $\order{\inv(\bm{y})}$ time, since one can find the smallest minimal left tangled triple in $\order{\inv(\bm{y})}$ time.
Hence, the claim is proved.
%Hence, we have the following lemma.
\qed
\end{proof}

%\subsubsection*{Child enumeration}

Let $L$ be a cyclic ladder lottery in $\llsetone(\pi,\bm{x})$ and let $t=\mset{i,j,k}$ be a minimal right tangled triple in $L$.
We denote by $L(t)$ the cyclic ladder lottery obtained from $L$ by applying braid relation to $t$.
We can observe that $L(t)$ is a child of $L$ if and only if $t$ is the smallest minimal left tangled triple in $L(t)$.
This observation gives the child-enumeration algorithm shown in 
Algorithm~\ref{alg:enum-cll-children}.

First, we construct the root $L_0$ of $\llsetone(\pi,\bm{x})$ and 
call Algorithm~\ref{alg:enum-cll-children} with the argument $L_0$.
The algorithm outputs the current cyclic ladder lottery $L$, which is the argument of the current recursive call.
Next, for every minimal right tangled triple $t$ in $L$,
if $L(t)$ is a child of $L$, the algorithm calls itself with the argument $L(t)$.
Algorithm~\ref{alg:enum-cll-children} traverses the family tree of $\llsetone(\pi,\bm{y})$
and hence enumerates all the cyclic ladder lotteries in $\llsetone(\pi,\bm{y})$.
Each recursive call lists all the minimal right tangled triples.
To do that, we take $\order{\inv(\bm{y})}$ time.
For each minimal right tangled triple $t$, we check whether or not $L(t)$ is a child of $L$, as follows.
First, we construct $L(t)$ from $L$.
Next, in $L(t)$, we list all the minimal right tangled triples.
Finally, we check $t$ is the smallest one in the listed triples.
The answer is true implies that $L(t)$ is a child of $L$.
This takes $\order{\inv(\bm{y})}$ time.
Therefore, a recursive call of Algorithm~\ref{alg:enum-cll-children} takes $\order{(\inv(\bm{y}))^2}$ time.

\begin{algorithm}[tb]
  \DontPrintSemicolon
  Output $L$\\
  \ForEach{minimal right tangled triple $t$ in $L$}{
    \If{$t$ is the smallest triple in $L(t)$}{
        \textsc{Enum-CLL-Children}($L(t)$)
    }
  }
\caption{\textsc{Enum-CLL-Children}($L$)}
\label{alg:enum-cll-children}
\end{algorithm}

\begin{theorem}\label{thm:enum}
Let $\pi \in \frakS_n$ and let $\bm{y}$ be an almost optimal displacement vector of $\pi$.
After constructing the root in $\order{n+(\inv(\bm{y}))^3}$ time,
one can enumerate all the cyclic ladder lotteries in $\llsetone(\pi, \bm{y})$ in 
$\order{(\inv(\bm{y}))^2}$ delay.
\end{theorem}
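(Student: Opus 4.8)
The plan is to assemble the three ingredients of the reverse-search framework that the preceding discussion has prepared, and to treat the delay bound as the one genuinely delicate point. First I would settle correctness. By Corollary~\ref{cor:root_is_unique} the object $L_0$ with $\lefttangled(L_0)=\emptyset$ is the unique root, and by Lemma~\ref{lem:toroot} iterating $\parent(\cdot)$ from any $L\in\llsetone(\pi,\bm{y})$ strictly decreases $\msize{\lefttangled(L)}$ and hence terminates at $L_0$; together these say that the parent relation organizes $\llsetone(\pi,\bm{y})$ into a single rooted tree, the family tree, and Theorem~\ref{thm:reconfCLL-DV} (connectedness of the reconfiguration graph) guarantees that this tree spans the whole set. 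It then remains to verify that Algorithm~\ref{alg:enum-cll-children} generates exactly the children of its argument. Flipping a minimal right tangled triple $t$ of $L$ turns it into a minimal left tangled triple of $L(t)$, and $\parent(L(t))=L$ holds precisely when that triple is the smallest minimal left tangled triple of $L(t)$ — which is exactly the test in the \textbf{If} line. Since distinct triples produce distinct children and each child $C$ of $L$ is produced by the single triple that is smallest in $C$, the child map is a bijection; consequently the depth-first traversal visits every node once and only once, so the enumeration is complete and duplication-free without any need to store previously output objects.

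Next I would read off the running-time bounds, which the text before the theorem has already established. Lemma~\ref{lem:root_construction} builds $L_0$ in $\order{n+(\inv(\bm{y}))^3}$ time, matching the preprocessing cost in the statement. A single recursive call lists all minimal right tangled triples in $\order{\inv(\bm{y})}$ time and then runs, for each of the $\order{\inv(\bm{y})}$ such triples, a smallest-triple test costing $\order{\inv(\bm{y})}$; hence the local work per node of the family tree is $\order{(\inv(\bm{y}))^2}$. These are exactly the per-node and preprocessing quantities the theorem asserts.

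The hard part, and the step I expect to be the main obstacle, is converting the $\order{(\inv(\bm{y}))^2}$ per-node cost into an $\order{(\inv(\bm{y}))^2}$ \emph{worst-case delay}. The naive pre-order rule written in Algorithm~\ref{alg:enum-cll-children} (emit $L$ at the top of the call) does not suffice: after emitting a deep node the recursion may unwind through many ancestors, each spending up to $\order{(\inv(\bm{y}))^2}$ searching its remaining triples before the next object is produced, so a priori the delay is only $\order{h\cdot(\inv(\bm{y}))^2}$, where the height $h$ can exceed $\inv(\bm{y})$ since it equals $\max_L\msize{\lefttangled(L)}$. The remedy I would apply is the standard alternating output rule of reverse search: emit the current node before its recursive calls when its depth is even, and after them when its depth is odd. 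With this rule every maximal stretch of computation separating two consecutive outputs touches only a constant number of nodes of the family tree, so the delay is dominated by the per-node cost and is therefore $\order{(\inv(\bm{y}))^2}$, while the $\order{n+(\inv(\bm{y}))^3}$ preprocessing is paid once before the first output. Combining this delay bound with the correctness argument and the root-construction cost yields the theorem.
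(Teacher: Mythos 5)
Your proposal is correct and follows essentially the same route as the paper's proof: reverse search over the family tree rooted at $L_0$, the $\order{n+(\inv(\bm{y}))^3}$ preprocessing from Lemma~\ref{lem:root_construction}, the $\order{(\inv(\bm{y}))^2}$ per-node cost of Algorithm~\ref{alg:enum-cll-children}, and the Nakano--Uno prepostorder (alternating even/odd depth output) rule to turn per-node cost into worst-case delay. Your treatment is in fact somewhat more explicit than the paper's on the correctness of the child test and on why the naive preorder delay degrades by a factor of the tree height, but the key ideas coincide.
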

%
% Proof
%
\begin{proof}
After constructing the root, Algorithm~\ref{alg:enum-cll-children} traverses the whole of the family tree of $\llsetone(\pi,\bm{y})$ in $\order{(\inv(\bm{y}))^2 \msize{\llsetone(\pi,\bm{y})}}$ time.
However, the worst-case delay of Algorithm~\ref{alg:enum-cll-children} is not bounded in $\order{(\inv(\bm{y}))^2}$ time,
since the algorithm has to go back from deep recursive calls without any output.
By applying the ``prepostorder'' traversal method by Nakano and Uno~\cite{NakanoU05} to Algorithm~\ref{alg:enum-cll-children}, we can attain $\order{(\inv(\bm{y}))^2}$ delay.
The method outputs a cyclic ladder lottery with an odd (resp.\ even) depth in a family tree in the preorder (resp.\ postorder).
This bounds the delay between any two outputs in $\order{(\inv(\bm{y}))^2}$ time.
\qed
\end{proof}

The discussion to derive Theorem~\ref{thm:enum} can be applied to the set $\llsetopt(\pi,\bm{x})$ for an optimal displacement vector $\bm{x}$ of $\pi$. Hence, we have the following corollary.

\begin{corollary}
Let $\pi \in \frakS_n$ and let $\bm{x}$ be an optimal displacement vector of $\pi$.
After constructing the root in $\order{n + (\inv(\bm{x}))^3}$ time,
one can enumerate all the cyclic ladder lotteries in $\llsetopt(\pi, \bm{x})$ in 
%$\order{(\inv(\bm{x}))^2 \msize{\llsetopt(\pi, \bm{x} )}}$ time.
$\order{(\inv(\bm{x}))^2}$ delay.
\end{corollary}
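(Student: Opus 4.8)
The plan is to deduce this corollary directly from Theorem~\ref{thm:enum} by verifying two facts: that an optimal displacement vector is in particular \emph{almost} optimal, so that Theorem~\ref{thm:enum} applies verbatim with $\bm{y}:=\bm{x}$, and that for an \emph{optimal} displacement vector the sets $\llsetopt(\pi,\bm{x})$ and $\llsetone(\pi,\bm{x})$ actually coincide, so that the algorithm underlying Theorem~\ref{thm:enum} enumerates precisely the objects we want. No new algorithm is needed; the work is entirely in justifying the reduction.

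For the first fact, recall that $\bm{x}$ being optimal means there is some $L\in\llsetopt(\pi)$ with $\DV(L)=\bm{x}$; since $\llsetopt(\pi)\subseteq\llsetone(\pi)$, this same $L$ witnesses that $\bm{x}$ is almost optimal. Hence the root-construction step (Lemma~\ref{lem:root_construction}), the child-enumeration procedure (Algorithm~\ref{alg:enum-cll-children}), and the $\order{(\inv(\bm{x}))^2}$-delay guarantee of Theorem~\ref{thm:enum} all carry over with $\bm{y}$ replaced by $\bm{x}$, including the $\order{n+(\inv(\bm{x}))^3}$ preprocessing bound.

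The crux is the identity $\llsetopt(\pi,\bm{x})=\llsetone(\pi,\bm{x})$. The inclusion $\llsetopt(\pi,\bm{x})\subseteq\llsetone(\pi,\bm{x})$ follows from the preliminary observation $\llsetopt(\pi)\subseteq\llsetone(\pi)$ after restricting to the displacement vector $\bm{x}$. For the reverse inclusion, I would argue by counting crossings: every $L\in\llsetone(\pi,\bm{x})$ has each pair of pseudolines crossing at most once, so its number of bars equals its number of crossings, which is exactly $\inv(\DV(L))=\inv(\bm{x})$. Thus \emph{every} member of $\llsetone(\pi,\bm{x})$ has the same number of bars, namely $\inv(\bm{x})$. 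Because $\bm{x}$ is optimal, $\inv(\bm{x})$ equals the minimum number of bars over all of $\llset(\pi)$; consequently each such $L$ attains this minimum and is therefore optimal, giving $\llsetone(\pi,\bm{x})\subseteq\llsetopt(\pi,\bm{x})$.

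Combining the two facts, running the enumeration of Theorem~\ref{thm:enum} with $\bm{y}:=\bm{x}$ outputs exactly $\llsetone(\pi,\bm{x})=\llsetopt(\pi,\bm{x})$ within the claimed bounds. The only step requiring genuine care — and the one I would flag as the main obstacle — is the crossing-count identity that pins the bar count of every member of $\llsetone(\pi,\bm{x})$ to the single value $\inv(\bm{x})$; once that is in hand, the optimality of each such lottery, and hence the set identity, is immediate, and the corollary reduces to a routine instantiation of Theorem~\ref{thm:enum}.
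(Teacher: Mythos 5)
Your proposal is correct and takes essentially the same route as the paper, which simply asserts that the discussion deriving Theorem~\ref{thm:enum} applies to $\llsetopt(\pi,\bm{x})$; your reduction via $\bm{y}:=\bm{x}$ is exactly that instantiation. The only difference is that you make explicit the identity $\llsetopt(\pi,\bm{x})=\llsetone(\pi,\bm{x})$ (via the crossing-count argument that every member of $\llsetone(\pi,\bm{x})$ has exactly $\inv(\bm{x})$ bars), a justification the paper leaves implicit but which is sound given its stated fact that $\inv(\bm{x})$ counts the intersections of pseudolines crossing at most once.
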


\section{Reconfiguration and enumeration of optimal cyclic ladder lotteries}

In this section, we consider the problem of enumerating all the optimal cyclic ladder lotteries in $\llsetopt(\pi)$,
where $\pi \in \frakS_n$.
That is, a displacement vector is not given as an input and only a permutation is given.
%as an input.
The formal description of the problem is shown below.

\medskip
\noindent
\textbf{Problem:} Enumeration of optimal cyclic ladder lotteries~(\enumCLL) \\
\textbf{Instance:} 
A permutation $\pi \in \frakS_n$.\\
\textbf{Output:}
All the optimal cyclic ladder lotteries in $\llsetopt(\pi)$ without duplication.
\medskip

From Lemma~\ref{lem:equiv_class}, we have the following outline of an enumeration algorithm to solve \enumCLL.
First, we enumerate all the optimal displacement vectors of a given permutation.
Next, for each optimal displacement vector, we enumerate all the optimal cyclic ladder lotteries using the algorithm in the previous section.

Therefore, in this section, we consider the enumeration problem of optimal displacement vectors.
We first consider the reconfiguration problem for the optimal displacement vectors to investigate the connectivity of their reconfiguration graph.
Utilizing the knowledge
of the reconfiguration graph, 
we design an enumeration algorithm that enumerates all the optimal displacement vectors of a given permutation.

\subsection{Reconfiguration}
\label{subsec:Reconfiguration}

Let $\bm{x}=\mvec{x_1,x_2,\ldots ,x_n}$ be an optimal displacement vector of a permutation $\pi$.
We denote the maximum and minimum elements in $\bm{x}$ by $\max(\bm{x})$ and $\min(\bm{x})$.
Let $i$ and $j$ be two indices such that 
$x_i = \max(\bm{x})$, $x_j = \min(\bm{x})$, and $x_i - x_j = n$.
If $\bm{x}$ includes two or more maximum (and minimum) values, the index $i$ (and $j$) is chosen arbitrarily.
Then, a \emph{max-min contraction}\footnote{The contraction is originally proposed by Jerrum~\cite{Jerrum85}.} $T_{ij}$ of $\bm{x}$ is a function $T_{ij}\colon \mathbb{Z}^n \to \mathbb{Z}^n$ such that $T_{ij}(\bm{x}) = (z_1,z_2,\ldots, z_n)$, where
\begin{align*}
    z_k &=
\begin{cases}
x_k-n & \text{if $k=i$,}\\
x_k+n & \text{if $k=j$,}\\
x_k & \text{otherwise.}
\end{cases}
\end{align*}

\noindent
We consider the following reconfiguration problem under the max-min contractions.

\medskip
\noindent
\textbf{Problem:} Reconfiguration of optimal displacement vectors~(\reconfDV) \\ 
\textbf{Instance:} 
A permutation $\pi \in \frakS_n$ and two optimal displacement vectors $\bm{x}$ and $\bm{x'}$ of $\pi$.\\
\textbf{Question:}
Does there exist a reconfiguration sequence 
between $\bm{x}$ and $\bm{x'}$ under max-min contractions?
\medskip

Jerrum~\cite{Jerrum85} showed 
the following theorem.
\begin{theorem}[\cite{Jerrum85}]\label{thm:reconfDV}
Any instance of \reconfDV\ is a yes-instance.
\end{theorem}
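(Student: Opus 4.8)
The statement asserts that the reconfiguration graph on the set $\setodv(\pi)$ of optimal displacement vectors of $\pi$, with edges given by max-min contractions, is connected. The plan is to prove this by an induction that repeatedly performs a single max-min contraction carrying one optimal vector one step closer to the other, while a convexity property of $\inv$ certifies that every intermediate vector stays optimal. First I would reformulate the search space. Writing $p_i$ for the position of $i$ in $\pi$, the congruence $i=\pi_{(i+x_i)\bmod n}$ forces $x_i\equiv p_i-i\pmod n$, so every displacement vector of $\pi$ has the form $x_i=(p_i-i)+n w_i$ with $\sum_i w_i=0$; hence the displacement vectors are exactly one coset of the lattice $Q=\{\,n\bm{v} : \bm{v}\in\mathbb{Z}^n,\ \sum_i v_i=0\,\}$. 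In these coordinates a max-min contraction $T_{ij}$ is the elementary step $\bm{x}\mapsto\bm{x}+n(\bm{e}_j-\bm{e}_i)$, admissible exactly when $x_i=\max(\bm{x})$, $x_j=\min(\bm{x})$ and $x_i-x_j=n$. It is reversible—after the step, index $i$ attains the minimum and $j$ the maximum with gap $n$—so reachability is symmetric and it suffices to connect an arbitrary optimal $\bm{x}$ to a fixed target $\bm{x}'$.

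The engine of the proof is convexity. For a fixed $\pi$ the difference $x_i-x_j$ ranges over a single residue class modulo $n$, so along the lattice the crossing count $\msize{c_{ij}(\bm{x})}$ increases by exactly one each time $w_i-w_j$ changes by one; that is, on the relevant domain $\msize{c_{ij}(\bm{x})}$ is the absolute value of an affine-linear form in $w_i-w_j$, hence convex. Consequently $\inv(\bm{x})=\tfrac12\sum_{i,j}\msize{c_{ij}(\bm{x})}$ is separable-convex in the differences $x_i-x_j$, i.e. an $L^\natural$-convex function, and its minimizer set $\setodv(\pi)$ is discretely convex: it is connected under the root steps $\pm n(\bm{e}_i-\bm{e}_j)$ and closed under coordinatewise minimum and maximum (modulo adding a multiple of $\bm{1}$). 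I would also record the spread bound: optimality forces every pair of pseudolines to cross at most once, i.e. $\msize{c_{ij}(\bm{x})}\le 1$ for all $i,j$, and unwinding the definition of $c_{ij}$ this gives $\max(\bm{x})-\min(\bm{x})\le n$.

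With these tools the main line is an induction on the root distance $D=\frac{1}{2n}\sum_i\msize{x_i-x_i'}$ between $\bm{x}$ and $\bm{x}'$. For $D>0$ I would pick an index $i$ attaining $\max(\bm{x})$ with $x_i>x_i'$ and an index $j$ attaining $\min(\bm{x})$ with $x_j<x_j'$; the spread bound gives $x_i-x_j=n$, so $T_{ij}$ is admissible, it decreases $D$ by one, and discrete convexity of $\setodv(\pi)$ guarantees that $T_{ij}(\bm{x})$ is again optimal. Applying the induction hypothesis to $T_{ij}(\bm{x})$ and $\bm{x}'$ closes the argument, the base case $D=0$ being $\bm{x}=\bm{x}'$.

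The main obstacle I expect is the simultaneous existence of these two indices: that one may always choose a global-maximum index of $\bm{x}$ strictly exceeding its target entry \emph{and} a global-minimum index strictly below it. This is where one must show that any two distinct optimal vectors share the same extreme values and have spread exactly $n$ (so a max and a min coordinate genuinely differ by $n$); I would derive this from the spread bound, the constraint $\sum_i x_i=0$, and an exchange argument that recurses on the coordinates where $\bm{x}$ and $\bm{x}'$ already agree on the extreme value. The only other technical nuisance is the boundary behaviour in the definition of $c_{ij}$ (intervals whose endpoints are themselves multiples of $n$) when deducing the spread bound, which must be checked but does not affect the structure of the argument.
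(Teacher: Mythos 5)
You should first note that the paper itself gives no proof of this statement: it is quoted verbatim from Jerrum~\cite{Jerrum85}. So your attempt has to stand on its own, and as written it does not; two of the three pillars it rests on are unsound. The first is the ``spread bound''. You derive $\max(\bm{x})-\min(\bm{x})\le n$ from the pairwise condition $\msize{c_{ij}(\bm{x})}\le 1$ by ``unwinding the definition of $c_{ij}$'', but that implication is false. Take $n=4$, $\pi=(3,2,4,1)$ and the displacement vector $\bm{x}=(3,0,-2,-1)$ of $\pi$: a direct computation gives $c_{12}=c_{13}=c_{14}=c_{23}=1$ and $c_{24}=c_{34}=0$, so every pair crosses at most once, yet the spread is $5>n$. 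The point is that an interval of length strictly between $n$ and $2n$ can still contain only one multiple of $n$ if it is placed suitably relative to the residue $i-j$. (The same example shows that the pairwise condition does not characterize optimality either: $\inv(\bm{x})=4$, while the displacement vector $(-1,0,2,-1)$ of the same $\pi$ has $\inv=2$.) So the spread bound for optimal vectors --- which your induction needs in the strong form ``spread exactly $n$ whenever two distinct optimal vectors exist'' --- must be extracted from global minimality of $\inv$ over the whole coset, not from the pairwise crossing numbers; that argument is missing.

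The second unsound pillar is the convexity step. It is true that, in lattice coordinates, $\inv$ is a sum of convex functions of the differences, i.e., a discretely convex function of $\mathrm{L}$-type. But the conclusion you draw from this --- that its minimizer set is connected under the single exchange steps $\pm n(\bm{e}_i-\bm{e}_j)$, and in particular that $T_{ij}(\bm{x})$ in your induction step is again optimal --- is not a consequence of that convexity. Minimizer sets of $\mathrm{L}^\natural$-convex functions are closed under rounded midpoints and are connected under simultaneous multi-coordinate steps $\chi_S$; connectivity under single exchanges is the hallmark of $\mathrm{M}$-convexity, which $\inv$ does not enjoy, and in addition your feasible region (the sum-zero coset) is an $\mathrm{M}$-type side constraint cutting through the $\mathrm{L}$-convex minimizer set, a combination for which no generic connectivity theorem applies. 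So the sentence ``discrete convexity of $\setodv(\pi)$ guarantees that $T_{ij}(\bm{x})$ is again optimal'' asserts, rather than proves, essentially the theorem itself. Together with the issue you flag but do not resolve --- that one can simultaneously choose a maximum coordinate of $\bm{x}$ exceeding its target and a minimum coordinate below its target, which is the actual combinatorial heart of Jerrum's argument --- what you have is a reasonable plan of attack, not a proof.
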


In the remaining part of this subsection, we consider the shortest version of \reconfDV.
Let $\bm{x} = \mvec{x_1,x_2,\ldots ,x_n}$ and $\bm{x'} = \mvec{x'_1,x'_2,\ldots ,x'_n}$ be two optimal displacement vectors of $\pi$.
We denote the length of a shortest reconfiguration sequence between $\bm{x}$ and $\bm{x'}$ under max-min contractions by $\optdv(\bm{x},\bm{x'})$.
Note that $\optdv(\bm{x},\bm{x'})$ is about
max-min contractions, while $\optclldv(L,L')$ is the length of a shortest reconfiguration sequence of two optimal cyclic ladder lotteries under braid relations.

For two optimal displacement vectors
$\bm{x}$ and $\bm{x'}$,
we define $\symdiffDV{\bm{x}}{\bm{x'}}$ by  
\begin{align*}
\symdiffDV{\bm{x}}{\bm{x'}} &= \sum_{i \in [n]} 1 - \delta_{x_i,x'_i},
\end{align*}
where $\delta_{x_i,x'_i}$ is the Kronecker delta.

One can characterize 
the length of a shortest reconfiguration sequence
using the symmetric difference, as stated in the following theorem.
\begin{theorem}\label{thm:shortest}
Let $\bm{x}$ and $\bm{x'}$ be two optimal displacement vectors of a permutation $\pi \in \frakS_n$.
Then $\optdv(\bm{x},\bm{x'}) = \frac{\symdiffDV{\bm{x}}{\bm{x'}}}{2}$ holds.
Moreover, one can compute a reconfiguration sequence of length $\optdv(\bm{x},\bm{x'})$ in $\order{n+\optdv(\bm{x},\bm{x'})}$ time.
\end{theorem}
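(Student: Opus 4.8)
The plan is to prove matching lower and upper bounds for $\optdv(\bm{x},\bm{x'})$ and to read the algorithm off the upper-bound construction. The lower bound is immediate and needs no structure: a single max-min contraction $T_{ij}$ alters exactly two coordinates of the current vector (index $i$ by $-n$, index $j$ by $+n$), so it can change the number of coordinates on which the current vector disagrees with $\bm{x'}$ by at most $2$. Since any reconfiguration sequence drives this disagreement count from $\symdiffDV{\bm{x}}{\bm{x'}}$ down to $0$, it uses at least $\symdiffDV{\bm{x}}{\bm{x'}}/2$ steps. The core of the argument is then a structural lemma: for any optimal displacement vector, $|x_i|\leq n-1$ for every $i$. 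I would justify this from two facts — the net horizontal displacement of line $i$ equals its signed crossing count, $x_i=\sum_{j\neq i}c_{ij}(\bm{x})$, and in an optimal arrangement any two pseudolines cross at most once, so $|c_{ij}(\bm{x})|\leq 1$; summing the $n-1$ terms gives $|x_i|\leq n-1$. Applying this to both $\bm{x}$ and $\bm{x'}$ yields $|x_i-x'_i|\leq 2n-2<2n$, and since $\bm{x},\bm{x'}$ are displacement vectors of the same $\pi$ we have $x_i\equiv x'_i \pmod n$; hence $x_i-x'_i\in\{-n,0,n\}$. Writing $P=\{i\mid x_i-x'_i=n\}$ and $M=\{i\mid x_i-x'_i=-n\}$, the identity $\sum_i(x_i-x'_i)=0$ forces $|P|=|M|$ and $\symdiffDV{\bm{x}}{\bm{x'}}=|P|+|M|=2|P|$.

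Next I would sharpen this to pin $P$ and $M$ to the extreme values. Because $\bm{x}\neq\bm{x'}$ we have $|\setodv(\pi)|\geq 2$, and by Theorem~\ref{thm:reconfDV} the reconfiguration graph is connected, so the first step of a sequence out of $\bm{x}$ is a contraction applicable to $\bm{x}$; as max-min contractions are reversible, the same holds for $\bm{x'}$. Thus $\max(\bm{x})-\min(\bm{x})=\max(\bm{x'})-\min(\bm{x'})=n$. Put $v=\max(\bm{x})$, $w=\min(\bm{x})$ with $v-w=n$. For $i\in P$ the bound $|x'_i|\leq n-1$ and $x'_i=x_i-n\leq v-n=w$ give $x'_i\leq w$; for $i\in M$, $x'_i=x_i+n\geq w+n=v$ gives $x'_i\geq v$. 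Since $P,M\neq\emptyset$, this forces $\min(\bm{x'})\leq w$ and $\max(\bm{x'})\geq v$, and combined with $\max(\bm{x'})-\min(\bm{x'})=n=v-w$ it forces $\max(\bm{x'})=v$ and $\min(\bm{x'})=w$. Consequently every $i\in P$ has $x_i=v$ and $x'_i=w$, and every $i\in M$ has $x_i=w$ and $x'_i=v$; that is, $P$ consists of indices attaining the maximum of $\bm{x}$ and the minimum of $\bm{x'}$, and symmetrically for $M$.

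This characterization makes the upper bound and the algorithm immediate. If $P\neq\emptyset$, choose any $i\in P$ and $j\in M$; then $x_i=v=\max(\bm{x})$, $x_j=w=\min(\bm{x})$, and $x_i-x_j=n$, so $T_{ij}$ is a valid max-min contraction, it sends $x_i\mapsto w=x'_i$ and $x_j\mapsto v=x'_j$, correcting two coordinates at once, and it preserves optimality. Iterating, I would fix an arbitrary pairing $P=\{i_1,\dots,i_p\}$, $M=\{j_1,\dots,j_p\}$ and apply $T_{i_1 j_1},\dots,T_{i_p j_p}$; after the first $k$ steps all remaining values of the current vector still lie in $[w,v]$, so $i_{k+1}$ attains the maximum and $j_{k+1}$ the minimum with gap $n$, keeping each contraction valid. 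This produces a sequence of length $p=\symdiffDV{\bm{x}}{\bm{x'}}/2$, matching the lower bound, so $\optdv(\bm{x},\bm{x'})=\symdiffDV{\bm{x}}{\bm{x'}}/2$. Computing $P$ and $M$ by a coordinatewise comparison of $\bm{x}$ and $\bm{x'}$ costs $\order{n}$, and emitting the $p$ pairs costs $\order{\optdv(\bm{x},\bm{x'})}$, for $\order{n+\optdv(\bm{x},\bm{x'})}$ total.

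I expect the main obstacle to be the structural refinement of the second paragraph: showing that the coordinates that must change sit precisely at the current global maximum and minimum. A naive greedy strategy ("contract at some max and some min") can fail when the extreme value is attained only by an already-correct coordinate, which would make a contraction increase the disagreement count. The displacement-equals-crossings bound $|x_i|\leq n-1$ is exactly what rules this out, pinning $P$ and $M$ to the extremes and guaranteeing that every step can be chosen to correct two coordinates simultaneously.
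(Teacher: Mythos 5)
Your proof is correct and follows essentially the same route as the paper's: the identical lower bound (each max-min contraction changes at most two coordinates, so at least $\symdiffDV{\bm{x}}{\bm{x'}}/2$ steps are needed), the identical upper bound obtained by pairing disagreement indices where $\bm{x}$ attains its maximum with those where it attains its minimum and contracting each pair, and the same $\order{n+\optdv(\bm{x},\bm{x'})}$ algorithm. The only difference is that you actually prove the structural fact the paper merely asserts (that every disagreement index $a$ satisfies either $x_a=\max(\bm{x})$, $x'_a=\min(\bm{x'})$ or $x_a=\min(\bm{x})$, $x'_a=\max(\bm{x'})$), deriving it from the bound $|x_i|\leq n-1$ for optimal vectors, the congruence $x_i\equiv x'_i\pmod{n}$, and the spread condition $\max-\min=n$ forced by connectivity, which makes your write-up more complete than the paper's own proof.
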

%
% Proof
%
\begin{proof}
For any max-min contraction, we have 
$\symdiffDV{\mmcont_{ij}(\bm{x})}{\bm{x'}} \geq \symdiffDV{\bm{x}}{\bm{x'}}-2$. Hence, $\optdv(\bm{x},\bm{x'}) \geq \frac{\symdiffDV{\bm{x}}{\bm{x'}}}{2}$ holds.
It is easy to find two indices $i,j$ such that $\symdiffDV{\mmcont_{ij}(\bm{x})}{\bm{x'}} = \symdiffDV{\bm{x}}{\bm{x'}}-2$.
Hence, we can construct a reconfiguration sequence of length $\frac{\symdiffDV{\bm{x}}{\bm{x'}}}{2}$.
Now, we describe how to compute a shortest reconfiguration sequence between $\bm{x}=\mvec{x_1,x_2,\ldots ,x_n}$ and $\bm{x'}=\mvec{x'_1,x'_2,\ldots ,x'_n}$ and estimate the running time of it, below.
Let $I$ be the set of indices such that, for $a \in I$, $x_a \neq x'_a$ holds.
Note that, for $a \in I$, either (1) $x_a = \max(\bm{x})$ and $x'_a = \min(\bm{x'})$ or (2) $x_a = \min(\bm{x})$ and $x'_a = \max(\bm{x'})$ holds.
Next, we partition $I$ into the set $P$ of unordered pairs $\mset{a,b}$
such that (1) $x_a=\max(\bm{x})$ and $x'_a = \min(\bm{x'})$ and (2) $x_b=\min(\bm{x})$ and $x'_b=\max(\bm{x'})$ hold.
This takes $\order{n}$ time.
Finally, we repeat to apply $\mmcont_{ab}$ for each $\mset{a,b} \in P$ until $\bm{x'}$ is obtained.
Note that applying $\mmcont_{ab}$ to $\bm{x}$ decreases $\optdv(\bm{x},\bm{x'})$ by 1 and each max-min contraction takes $\order{1}$ time.
Therefore, the claim holds.
\qed
\end{proof}

\subsection{Enumeration}

In this subsection, we consider the following enumeration problem.

\medskip
\noindent
\textbf{Problem:} Enumeration of optimal displacement vectors~(\enumDV) \\
\textbf{Instance:} 
A permutation $\pi \in \frakS_n$.\\
\textbf{Output:}
All the optimal displacement vectors of $\pi$ without duplication.
\medskip

\noindent
Theorem~\ref{thm:reconfDV} implies that the reconfiguration graph of the optimal displacement vectors of a permutation under max-min contractions is connected.
Therefore, we may use the reverse search technique to enumerate them.

Let $\pi \in \frakS_n$.
Recall that $\setodv(\pi)$ denotes the set of all the optimal displacement vectors of $\pi$.
Let $\bm{x} = \mvec{x_1,x_2,\ldots ,x_n}$ and $\bm{x'} = \mvec{x'_1,x'_2,\ldots ,x'_n}$ be two distinct optimal displacement vectors in $\setodv(\pi)$.
The vector $\bm{x}$ is \emph{larger than} $\bm{x'}$ if $x_i = x'_i$ for $i=1,2,\ldots ,j-1$
and $x_{j} > x'_{j}$.
Intuitively, $\bm{x}$ is lexicographically larger than $\bm{x'}$.
Then, note that $x_j = \max(\bm{x})$ and $x'_j=\min(\bm{x'})$ hold,
since $x_i = x'_i$ for $i=1,2,\ldots ,n$ holds if $x_i \neq \max(\bm{x})$ and $x_i \neq \min(\bm{x})$.

The \emph{root} of $\setodv(\pi)$, denoted by $\rootdv$, is the largest displacement vector among $\setodv(\pi)$.
Intuitively, the root includes the maximum values in early indices. 
Note that $\rootdv$ is unique in $\setodv(\pi)$.
Let $\bm{x}$ be an optimal displacement vector in $\setodv(\pi) \setminus \mset{\rootdv}$.
Let $\minidx(\bm{x})$ be the minimum index of $\bm{x}$
such that $\bm{x}_{\minidx(\bm{x})} = \min(\bm{x})$.
Let $\maxidx(\bm{x})$ be the minimum index of $\bm{x}$
such that $\minidx(\bm{x}) < \maxidx(\bm{x})$ and $\bm{x}_{\maxidx(\bm{x})} = \max(\bm{x})$ hold.
Then, we define the \emph{parent} of $\bm{x}$
by 
$\parentdv(\bm{x}) = \mmcont_{\maxidx(\bm{x})\minidx(\bm{x})}(\bm{x})$.
Note that $\parentdv(\bm{x})$ is larger than $\bm{x}$ and always exists for $\bm{x} \neq \rootdv$.
We say that $\bm{x}$ is a \emph{child} of $\parentdv(\bm{x})$.
The \emph{parent sequence} $\mseq{\bm{x_1},\bm{x_2}, \ldots ,\bm{x_k}}$ of $\bm{x}$ is a sequence of optimal displacement vectors in $\setodv(\pi)$ such that 
\begin{listing}{aaa}
\item[(1)] $\bm{x_1} = \bm{x}$,
\item[(2)] $\bm{x_i}=\parentdv(\bm{x_{i-1}})$ for each $i=2,3,\ldots ,m$, and
\item[(3)] $\bm{x_k} = \rootdv$.
\end{listing}
Note that one can observe that, by repeatedly finding the parents from any optimal displacement vector in $\setodv(\pi)$, the root $\rootdv$ is always obtained.
Hence,
by merging the parent sequence of 
every vector in $\bm{x} \in \setodv(\pi) \setminus \mset{\rootdv}$,
we have the tree structure rooted at $\rootdv$.
We call the tree the \emph{family tree} of $\setodv(\pi)$.
Note that the family tree is a spanning tree of the reconfiguration graph of $\setodv(\pi)$ under max-min contractions.
Therefore, to enumerate all the optimal displacement vectors in $\setodv(\pi)$, we traverse the family tree of $\setodv(\pi)$.
To traverse the family tree, we design
an algorithm to enumerate all the children of an optimal displacement vector in $\setodv(\pi)$.

\begin{algorithm}[t]
  \DontPrintSemicolon
  Output $\bm{x}$\\
  Let $m(\bm{x}) = \mseq{m_1,m_2,\ldots ,m_{\ell}}$ be the max-min index sequence of $\bm{x}$\\
  Let $m_p = \minidx(\bm{x})$ and $m_q = \maxidx(\bm{x})$\\
  \ForEach{$j = p,p+1,\ldots ,q-1$}{
    \textsc{Enum-DV-Children}($\mmcont_{m_{p-1}m_j}(\bm{x})$)
  }
\caption{\textsc{Enum-DV-Children}($\bm{x}$)}
\label{alg:enum-odv-children}
\end{algorithm}
Let $\bm{x}=\mvec{x_1,x_2,\ldots ,x_n}$ be an optimal displacement vector in $\setodv(\pi)$.
The \emph{max-min index sequence}, denoted by $m(\bm{x}) = \mseq{m_1,m_2,\ldots ,m_{\ell}}$, of $\bm{x}$ is a sequence of indices of $\bm{x}$ such that either $x_{m_i} = \max(\bm{x})$ or $x_{m_i} = \min(\bm{x})$ for $i=1,2,\ldots ,\ell$ and $m_{i} < m_{i+1}$ for each $i=1,2,\ldots ,\ell-1$.
It can be observed that if $x_{m_1} = \min(\bm{x})$, $\bm{x}$ has no child from the definition of the parent.
Hence, we assume that $x_{m_1} = \max(\bm{x})$, below.
Now, we enumerate all the children of $\bm{x}$ as follows.
Suppose that $m_p = \minidx(\bm{x})$ and $m_q = \maxidx(\bm{x})$.
(For the root $\bm{x_0}$, $\maxidx(\bm{x})$ is not defined.
Hence, for convenience, we define $\maxidx(\bm{x}) = \ell +1$ for the root.)
\begin{lemma}\label{lem:children}
Let $\bm{x}$ be an optimal displacement vector of $\pi \in \frakS_n$.
Let $m(\bm{x}) = \mseq{m_1,m_2,\ldots ,m_{\ell}}$ be the max-min index sequence of $\bm{x}$.
Then, $\mmcont_{m_i m_j}(\bm{x})$ is a child of $\bm{x}$ if and only if $i=p-1$ and $j=p,p+1,\ldots ,q-1$ hold.
\end{lemma}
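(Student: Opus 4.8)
The plan is to reduce the statement to a clean combinatorial condition on $\minidx$ and $\maxidx$ of the contracted vector, and then read off both directions from the structure of the max-min index sequence. First I would record the shape of $m(\bm{x}) = \mseq{m_1,\dots,m_\ell}$ forced by the standing assumption $x_{m_1} = \max(\bm{x})$ together with $m_p = \minidx(\bm{x})$ and $m_q = \maxidx(\bm{x})$: since $m_p$ is the first min-index and $m_q$ the first max-index after it, the indices $m_1,\dots,m_{p-1}$ all carry the value $\max(\bm{x})$ and $m_p,\dots,m_{q-1}$ all carry the value $\min(\bm{x})$ (in particular $p \ge 2$, so $m_{p-1}$ exists, unless $\max(\bm{x})=\min(\bm{x})$, in which case no contraction applies and the statement is vacuous). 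For a valid contraction $\bm{y} := \mmcont_{m_i m_j}(\bm{x})$ one needs $m_i$ to be a max-index and $m_j$ a min-index; the vector $\bm{y}$ then agrees with $\bm{x}$ except that $y_{m_i} = \min(\bm{x})$ and $y_{m_j} = \max(\bm{x})$, and it has the same multiset of entries, so $\max(\bm{y}) = \max(\bm{x})$ and $\min(\bm{y}) = \min(\bm{x})$.

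The key reduction is the following. The map $\parentdv(\bm{y}) = \mmcont_{\maxidx(\bm{y})\minidx(\bm{y})}(\bm{y})$ alters $\bm{y}$ in exactly the two positions $\minidx(\bm{y})$ and $\maxidx(\bm{y})$ (they are distinct since $\max(\bm{y})\ne\min(\bm{y})$), while $\bm{y}$ differs from $\bm{x}$ in exactly the two positions $m_i, m_j$. Hence $\bm{y}$ is a child of $\bm{x}$, i.e.\ $\parentdv(\bm{y}) = \bm{x}$, if and only if $\{\minidx(\bm{y}),\maxidx(\bm{y})\} = \{m_i, m_j\}$; and since $m_i$ carries $\min(\bm{y})$ and $m_j$ carries $\max(\bm{y})$, this is equivalent to $\minidx(\bm{y}) = m_i$ and $\maxidx(\bm{y}) = m_j$ (the correct values are then restored automatically, because $\mmcont_{m_j m_i}$ undoes $\mmcont_{m_i m_j}$). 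So everything comes down to deciding when the contracted vector has its leading min at $m_i$ and its first subsequent max at $m_j$.

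Finally I would carry out the short case analysis. For $\minidx(\bm{y}) = m_i$: if $i \ge q$ then the original leading min $m_p$ survives in $\bm{y}$ and $m_p < m_i$, so $\minidx(\bm{y}) = m_p \ne m_i$; thus one must have $i \le p-1$, in which case $m_i < m_p$ makes $m_i$ the leading min. Given $i \le p-1$, for $\maxidx(\bm{y}) = m_j$: if $i < p-1$ then $m_{i+1}$ is still a max-index with $m_i < m_{i+1} < m_p \le m_j$, forcing $\maxidx(\bm{y}) = m_{i+1} \ne m_j$, so $i = p-1$. With $i = p-1$ the only max-indices exceeding $m_{p-1}$ are $m_q, m_{q+1},\dots$ together with the relocated $m_j$, so $\maxidx(\bm{y}) = m_j$ holds exactly when $m_{p-1} < m_j < m_q$, i.e.\ $p \le j \le q-1$ (the range $j \ge q$ fails because $m_q$ then intervenes). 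This yields both implications. The one point needing extra care is the root, where $\maxidx$ is undefined; there I would invoke the stated convention $\maxidx(\rootdv) = \ell+1$, so that the min-block runs to $m_\ell$ and the identical argument applies. I expect the main obstacle to be stating the structure of $m(\bm{x})$ precisely enough that the two ``intervening index'' arguments (the surviving $m_{i+1}$ when $i<p-1$, and the surviving $m_q$ when $j \ge q$) are genuinely rigorous rather than merely read off a picture.
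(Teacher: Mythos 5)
You take the same route as the paper: a case analysis on where the pair $(i,j)$ sits relative to the max-block $m_1,\dots,m_{p-1}$ and the min-block $m_p,\dots,m_{q-1}$, deciding in each case whether $\parentdv(\mmcont_{m_im_j}(\bm{x}))=\bm{x}$. Your reduction ``$\bm{y}=\mmcont_{m_im_j}(\bm{x})$ is a child of $\bm{x}$ if and only if $\minidx(\bm{y})=m_i$ and $\maxidx(\bm{y})=m_j$'' is exactly the justification that the paper's own proof leaves implicit (the paper merely asserts $\bm{x}=\parentdv(\cdot)$ or $\bm{x}\neq\parentdv(\cdot)$ in its three cases), so on this point your write-up is more complete than the published one.

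There is, however, one step that fails as stated: ``if $i\geq q$ then the original leading min $m_p$ survives in $\bm{y}$.'' This is false when $j=p$, because then $m_p$ is precisely the position that the contraction turns into a max. The failure is not harmless for your argument: take $q=p+1$ and $(i,j)=(q,p)$; then the min positions of $\bm{y}$ are $m_q$ together with the min positions of $\bm{x}$ lying beyond $m_q$, so $\minidx(\bm{y})=m_q=m_i$, and your Step-1 conclusion ``$\minidx(\bm{y})=m_i$ implies $i\leq p-1$'' is wrong, while your Step 2 (which presupposes $i\leq p-1$) never examines this pair. The lemma is still true there, since $\maxidx(\bm{y})>\minidx(\bm{y})=m_i\geq m_q>m_p=m_j$, so $\maxidx(\bm{y})\neq m_j$ and $\bm{y}$ is not a child. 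The clean patch is already contained in your own reduction: since $\minidx(\bm{y})<\maxidx(\bm{y})$ by definition, being a child forces $m_i<m_j$, so you may restrict to $i<j$ from the outset; then $i\geq q$ gives $j>q>p$, hence $j\neq p$ and $m_p$ genuinely survives. (The paper silently makes this same restriction to $i<j$ without justifying it; your reduction is precisely what justifies it.)
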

%
% Proof
%
\begin{proof}
We have the following case analysis for two indices~$i$ and $j$~$(1<i<j<\ell)$.

\begin{mycase}{1}{$i<p-1$.}
For any $j$ with $i<j$,
$\mmcont_{m_i m_j}(\bm{x})$ is not a child,
since $\bm{x} \neq \parent(T_{m_i m_j}(\bm{x}))$ holds.
\end{mycase}

\begin{mycase}{2}{$i=p-1$.}
For each $j=p,p+1,\ldots ,q-1$,
$\mmcont_{m_i m_j}(\bm{x})$ is a child, 
since $\bm{x} = \parent(T_{m_i m_j}(\bm{x}))$ holds.
Otherwise, if $q-1<j$, $T_{m_i m_j}(\bm{x})$ is not a child.
\end{mycase}

\begin{mycaselast}{3}{$p-1<i$.}
For any $i$ and $j$ with $i<j$, $T_{m_i m_j}(\bm{x})$ is not a child,
since $\bm{x} \neq \parent(\mmcont_{m_i m_j}(\bm{x}))$ holds.
\end{mycaselast}

From the above case analysis,
if $\mmcont_{m_im_j}(\bm{x})$ is a child of $\bm{x}$,
$i=p-1$ and $j=p,p+1,\ldots ,q-1$ hold.
Moreover, from Case~2, if $i=p-1$ and $j=p,p+1,\ldots ,q-1$ hold,
$\mmcont_{m_im_j}(\bm{x})$ is a child of $\bm{x}$.
\qed
\end{proof}

From Lemma~\ref{lem:children}, we have the child-enumeration algorithm shown in 
Algorithm~\ref{alg:enum-odv-children}.
We first construct the root $\bm{x_0}$ and call the algorithm with the argument $\bm{x_0}$.
By recursively calling Algorithm~\ref{alg:enum-odv-children}, one can traverse the family tree.

\begin{theorem}\label{thm:enum_dv}
Let $\pi \in \frakS_n$.
After $\order{n^2}$-time preprocessing, one can enumerate all the optimal displacement vectors in $\setodv(\pi)$ in 
$\order{1}$ delay.
\end{theorem}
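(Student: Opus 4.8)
The plan is to realize the enumeration as a reverse search~\cite{AvisF96} over the family tree of $\setodv(\pi)$. Since Theorem~\ref{thm:reconfDV} guarantees that the reconfiguration graph under max-min contractions is connected, the parent relation of Section~\ref{subsec:Reconfiguration} defines a spanning tree, and by Lemma~\ref{lem:children} together with the uniqueness of $\rootdv$ the recursive rule in Algorithm~\ref{alg:enum-odv-children} produces every optimal displacement vector exactly once. Hence correctness is already in hand, and only two quantitative claims remain: (i) constructing $\rootdv$ in $\order{n^2}$ time, and (ii) bounding the delay of the traversal by $\order{1}$.

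For (i), first I would compute one arbitrary optimal displacement vector of $\pi$ in $\order{n^2}$ time by Jerrum's algorithm~\cite{Jerrum85}, and then climb to $\rootdv$ by repeatedly replacing the current vector $\bm{x}$ with $\parentdv(\bm{x}) = \mmcont_{\maxidx(\bm{x})\minidx(\bm{x})}(\bm{x})$. The key point is to bound the number of parent steps. Labelling each entry by $+$, $-$, or $0$ according to whether it equals $\max(\bm{x})$, $\min(\bm{x})$, or neither, one parent step turns the first $-$ (at $\minidx(\bm{x})$) into a $+$ and the first following $+$ (at $\maxidx(\bm{x})$) into a $-$. This strictly extends the maximal $+$-prefix, so the position of the first $-$ strictly increases; hence at most $n$ parent steps occur. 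As each step costs $\order{n}$, the entire climb is $\order{n^2}$.

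The crux for (ii) is a structural invariant that makes every incremental update cost $\order{1}$. Whenever a max-min contraction $\mmcont_{ij}$ is applicable we have $\max(\bm{x})-\min(\bm{x})=n$, so $x_i$ drops from $\max(\bm{x})$ to $\min(\bm{x})$ and $x_j$ rises from $\min(\bm{x})$ to $\max(\bm{x})$, while every other entry is untouched. Consequently the multiset of values, and in particular the set of indices attaining $\max$ or $\min$, is invariant along the whole reconfiguration graph. Thus the underlying index set of the max-min index sequence is the same for every vector of $\setodv(\pi)$ and can be precomputed once; a contraction merely flips the two labels at $i$ and $j$. I would therefore maintain $\bm{x}$ together with this label array, so that applying or undoing a contraction, and reading off $\minidx$ and $\maxidx$, each cost $\order{1}$. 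Tracing the labels through Algorithm~\ref{alg:enum-odv-children} shows that in the child generated from $\mmcont_{m_{p-1}m_j}(\bm{x})$ the first $-$ sits at sequence position $p-1$ and the first following $+$ at position $j$, so its $\minidx$ and $\maxidx$ are also available in $\order{1}$; thus each node and each of its children is processed in $\order{1}$.

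Finally I would bound the worst-case delay. Emitting each node on entry (preorder) gives $\order{1}$ work while descending and while moving to the next sibling, but a backtracking run through ancestors with no remaining children produces no output and could cost time proportional to the tree depth. To remove this, I would apply the prepostorder technique of Nakano and Uno~\cite{NakanoU05}, exactly as in the proof of Theorem~\ref{thm:enum}, outputting nodes of odd depth in preorder and nodes of even depth in postorder; then any two consecutive outputs sit at adjacent nodes of the family tree, reachable from one another by a single $\order{1}$ contraction or its inverse, which yields $\order{1}$ delay. I expect the main obstacle to be (ii): proving the invariance of the extreme-index set and confirming that each generated child's parameters are available in $\order{1}$, since without this structural fact the per-node work would degrade to $\order{n}$ and the delay bound would fail.
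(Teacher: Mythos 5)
Your proposal is correct and follows essentially the same route as the paper's proof: Jerrum's algorithm plus repeated parent steps to reach $\rootdv$ within the $\order{n^2}$ preprocessing budget, the invariance of the max-min index sequence under contractions (so that $\minidx$ and $\maxidx$ of each child $\mmcont_{m_{p-1}m_j}(\bm{x})$ are read off in $\order{1}$ time), and the Nakano--Uno prepostorder traversal to turn $\order{1}$ work per node into $\order{1}$ delay. Your explicit bound of at most $n$ parent steps (via the strictly advancing position of the first minimum) and the explicit multiset-invariance argument are details the paper leaves implicit, but the underlying argument is the same.
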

%
% Proof
%
\begin{proof}
We construct $\rootdv$ of $\setodv(\pi)$, as follows.
First, we construct an optimal displacement vector $\bm{x}$ using the algorithm by Jerrum~\cite{Jerrum85}, which takes $\order{n^2}$ time. % To do: confirm this claim
If $\bm{x} \neq \rootdv$, 
we construct $\rootdv$ by repeatedly finding the parents. 
This takes $\order{n}$ time.

In Algorithm~\ref{alg:enum-odv-children}, we enumerate all the children of an optimal displacement vector $\bm{x}$.
This can be done in $\order{1}$ time for each child, as follows.
If we have the max-min index sequence of $\bm{x}$, $\minidx(\bm{x})$ and $\maxidx(\bm{x})$,
Algorithm~\ref{alg:enum-odv-children} can generate each child in $\order{1}$ time.
We can maintain this information, as follows.
Suppose that $\mmcont_{m_i m_j}(\bm{x})$ is a child of $\bm{x}$.
Then, the max-min sequence of $\mmcont_{m_i m_j}(\bm{x})$ is the same as the one of $\bm{x}$.
Moreover, $\minidx(\mmcont_{m_i m_j}(\bm{x})) = m_i$ and $\maxidx(\mmcont_{m_i m_j}(\bm{x})) = m_j$ hold.
Hence, we can update the information in $\order{1}$ time.
(For the root $\bm{x_0}$, we construct the max-min index sequence, $\minidx(\bm{x_0})$, and $\maxidx(\bm{x_0})$ in $\order{n}$ time.)

Finally, by applying the ``prepostorder'' traversal method by Nakano and Uno~\cite{NakanoU05}, we attain $\order{1}$ delay.
\qed
\end{proof}

\section*{Acknowledgments}

This work is partially supported by JSPS KAKENHI
Grant Numbers
JP22H03549, % For Wasa
JP22K17849, % For Wasa
JP23K11027, % For Yamanaka
JP20K14317, JP23K12974, and JP24H00686. % For Nozaki

\bibliographystyle{elsarticle-num} 
\bibliography{mybib}

\end{document}